\documentclass[a4paper,onecolumn,11pt]{quantumarticle}
\pdfoutput=1

\usepackage[font=small]{caption}
\usepackage{subcaption}

\usepackage{amsmath,amsfonts,amssymb,amsthm}
\usepackage[margin=1in]{geometry}
\usepackage[colorlinks]{hyperref}
\usepackage{mathrsfs}
\usepackage{mathtools}
\usepackage{thmtools}
\usepackage{thm-restate}
\mathtoolsset{centercolon}
\usepackage[numbers,comma,sort&compress]{natbib}
\usepackage[linesnumbered,ruled]{algorithm2e}
\usepackage{bm}
\usepackage{bookmark}
\usepackage{multirow}
\usepackage{xcolor}

\usepackage{enumerate}

\let\originalleft\left
\let\originalright\right
\renewcommand{\left}{\mathopen{}\mathclose\bgroup\originalleft}
\renewcommand{\right}{\aftergroup\egroup\originalright}

\renewcommand{\L}{\mathscr{L}}

\newcommand{\sgn}{\mathop{\mathrm{sgn}}}

\newcommand{\ket}[1]{|{#1}\rangle}
\newcommand{\bra}[1]{\langle{#1}|}

\DeclareMathOperator{\poly}{poly}

\newtheorem{theorem}{Theorem}
\newtheorem{lemma}{Lemma}
\newtheorem{corollary}{Corollary}
\newtheorem{problem}{Problem}
\newtheorem{proposition}{Proposition}
\newtheorem{remark}{Remark}

\numberwithin{equation}{section}

\renewcommand{\sec}[1]{\hyperref[sec:#1]{Section~\ref*{sec:#1}}}
\newcommand{\app}[1]{\hyperref[app:#1]{Appendix~\ref*{app:#1}}}
\newcommand{\thm}[1]{\hyperref[thm:#1]{Theorem~\ref*{thm:#1}}}
\newcommand{\lem}[1]{\hyperref[lem:#1]{Lemma~\ref*{lem:#1}}}
\newcommand{\cor}[1]{\hyperref[cor:#1]{Corollary~\ref*{cor:#1}}}
\newcommand{\prb}[1]{\hyperref[prb:#1]{Problem~\ref*{prb:#1}}}
\newcommand{\prop}[1]{\hyperref[prop:#1]{Proposition~\ref*{prop:#1}}}
\newcommand{\rem}[1]{\hyperref[rem:#1]{Remark~\ref*{rem:#1}}}
\newcommand{\tab}[1]{\hyperref[tab:#1]{Table~\ref*{tab:#1}}}
\newcommand{\fig}[1]{\hyperref[fig:#1]{Figure~\ref*{fig:#1}}}

\makeatletter
\let\@@magyar@captionfix\relax
\newcommand\notsotiny{\@setfontsize\notsotiny{7.5}{8.5}}
\makeatother

\begin{document}


\title{An Efficient Quantum Spectral-Galerkin Method for Linear Elliptic Equations}

\author[1]{Dongjie Wang}
\author[2]{Sheng Chen}
\affil[1]{School of Mathematical Sciences, Beijing Normal University, Beijing, 100875, China}
\affil[2]{Department of Mathematics, Faculty of Arts and Sciences, Beijing Normal University, Zhuhai, China}

\date{}
\maketitle

\begin{abstract}
This paper presents an efficient quantum spectral-Galerkin method for constant-coefficient second-order linear elliptic equations. The method employs a boundary-adapted Jacobi tensor-product basis that incorporates the boundary condition into the approximation space and produces Galerkin systems with polynomially bounded condition numbers and sparsity. For analytic solutions satisfying a suitable uniform regularity condition, spectral convergence allows the normalized Galerkin coefficient state to be prepared with complexity $\poly(d,\log(1/\epsilon))$, where $d$ is the spatial dimension and $\epsilon$ is the error tolerance, under standard sparse-access and state-preparation assumptions. At the state-preparation level, this improves the precision dependence over fixed-order finite-difference and finite-element discretizations. We further show how to estimate prescribed linear functionals of the solution, with an additional $O(1/\epsilon)$ factor arising from norm and inner-product estimation.
\end{abstract}

\medskip
\noindent\textbf{2020 Mathematics Subject Classification.}
Primary 65N35; Secondary 65N30, 68Q12, 81P68.

\noindent\textbf{Key words and phrases.}
Quantum spectral methods, spectral-Galerkin methods,
elliptic partial differential equations, quantum linear systems algorithms.
\section{Introduction}
\label{sec:introduction}
Partial differential equations (PDEs) provide fundamental mathematical models for diffusion, elasticity, electromagnetism, fluid mechanics, and quantum mechanics \cite{Evans10,Ciarlet88,Monk03,Temam01,Teschl09}. Despite their ubiquity, their numerical solution remains computationally challenging, especially in high dimensions. Two central difficulties are the curse of dimensionality \cite{Bellman57} and the cost of high-precision approximation. For a $d$-dimensional problem discretized with $N$ degrees of freedom per coordinate direction, the total number of unknowns scales as $N^d$, so even storing the full solution vector requires resources exponential in $d$. Moreover, for low-order discretization schemes, the approximation error typically decays only algebraically with $N$; hence achieving accuracy $\epsilon$ requires $N=\mathrm{poly}(1/\epsilon)$ before the cost of solving the linear system is considered.

Quantum computation offers a fundamentally different representation. Instead of storing all entries of a discretized solution vector, a quantum algorithm may prepare a normalized state whose amplitudes encode the solution coefficients. For a tensor-product discretization with $N$ one-dimensional degrees of freedom in each coordinate direction, the coefficient vector has length $N^d$, whereas the corresponding quantum state requires only $O(d\log N)$ qubits. This compact representation is one of the main motivations for quantum algorithms for PDEs. At the same time, it changes the computational objective: reading out many individual entries may be costly, so one typically extracts global quantities such as expectation values or linear functionals of the solution \cite{MP16}.

Quantum linear systems algorithms (QLSAs) provide a natural route from PDE discretizations to quantum algorithms \cite{HHL09,CKS17}. Many discretizations of linear PDEs reduce the computation of the expansion coefficients of an approximate solution to solving a finite-dimensional linear system $L\boldsymbol{c}=\boldsymbol{f}$. Given suitable access to $L$ and $\boldsymbol{f}$, a QLSA can prepare a quantum state proportional to the coefficient vector $\boldsymbol{c}$. The efficiency of this approach is governed by three basic quantities: the approximation error of the discretization, the condition number $\kappa(L)$, and the sparsity $s(L)$. The approximation error determines how large the discretization parameter $N$ must be to reach a prescribed tolerance $\epsilon$; the condition number controls the complexity and stability of the linear-system step; and the sparsity controls the cost of sparse matrix access. Therefore, a high-precision discretization suitable for QLSA-based algorithms should ideally combine spectral accuracy, $N=\poly(\log(1/\epsilon))$, with polynomially bounded condition number and sparsity, $\kappa(L)=\poly(N,d)$ and $s(L)=\poly(N,d)$.

\subsection{Related work}
\label{sec:related-work}

Several quantum frameworks for solving differential equations have been developed that are conceptually distinct from QLSA-based approaches. Schr\"odingerization reformulates general linear differential equations as Schr\"odinger-type dynamics through a warped phase transformation, thereby enabling their solution by Hamiltonian simulation \cite{JinLiuYu2024}. Another approach is the linear combination of Hamiltonian simulation (LCHS), which represents nonunitary dynamics as a linear combination of Hamiltonian simulations without relying on a reduction to a dilated linear-system problem \cite{AnLiuLin2023}. These approaches provide alternative routes to quantum differential equation solvers. In the following discussion of quantum PDE algorithms, we focus on QLSA-based methods, which are most directly related to the framework developed in this work.

\paragraph{QLSA-based quantum algorithms for PDEs.}
Low-order discretizations such as finite difference and finite element methods are well suited to sparse linear algebra, and quantum algorithms based on these discretizations have been studied in several works. For finite difference methods, quantum algorithms and explicit block encodings have been developed for various elliptic boundary value problems \cite{CLO21,Kharazi2025}. For finite element methods, Montanaro and Pallister \cite{MP16} analyzed quantum algorithms for estimating functionals of PDE solutions, building on classical finite element conditioning estimates such as those in \cite{BS89}. These approaches naturally produce sparse and structured linear systems. However, for fixed-order finite difference and finite element schemes, the approximation error decays only algebraically with the discretization parameter $N$. Consequently, achieving accuracy $\epsilon$ generally requires $N=\poly(1/\epsilon)$, so the resulting quantum complexity retains a polynomial dependence on $1/\epsilon$.

Spectral methods offer a natural way to overcome the algebraic accuracy barrier. For smooth or analytic solutions, they can converge exponentially fast, so high accuracy may be achieved with only $N=\poly(\log(1/\epsilon))$ degrees of freedom per coordinate direction \cite{STW11}. This makes spectral discretizations particularly attractive for high-precision quantum algorithms. However, extending quantum spectral methods to non-periodic boundary conditions introduces additional structural challenges. The Chebyshev pseudospectral method in \cite{CLO21} exploits the high-order convergence of Chebyshev expansions for Dirichlet problems, but the resulting differentiation matrices are generally non-Hermitian and non-normal. Consequently, their QLSA-relevant two-norm condition number is not determined by eigenvalue information alone, and analyzing it requires additional care. A recent alternative imposes Dirichlet boundary conditions through an antisymmetric extension followed by a Fourier-type periodic spectral method \cite{Febrianto2025}. This approach requires additional boundary compatibility conditions, including the vanishing of the relevant even-order derivatives at the boundary, which are not generally satisfied for a generic Dirichlet problem. Thus, obtaining a quantum spectral discretization that is simultaneously high-precision, well-conditioned, and sparse remains nontrivial.

\paragraph{Classical spectral-Galerkin methods.}

On the classical side, spectral-Galerkin methods based on Jacobi and Legendre polynomials are well established \cite{Shen94,STW11}. Shen \cite{Shen94} introduced efficient Legendre-Galerkin bases for second- and fourth-order problems using difference bases such as $L_{k+2}-L_k$. These constructions retain spectral accuracy and lead to efficient one-dimensional discretizations. In multidimensional tensor-product settings, however, the presence of a nontrivial mass matrix can substantially increase the sparsity of the resulting system matrix and may also deteriorate its conditioning as the dimension grows. These features are less favorable for QLSA-based algorithms, whose complexity depends directly on the sparsity and condition number of the linear system.

\paragraph{Summary of gaps.}

Low-order methods provide sparse linear systems but only algebraic accuracy, while existing spectral approaches face additional challenges in conditioning, boundary treatment, or high-dimensional tensor-product structure. This motivates the search for a spectral discretization that combines spectral accuracy with polynomially bounded condition number and sparsity.

\subsection{Our contributions}
\label{sec:contributions}

In this paper, we propose a quantum spectral-Galerkin method for $d$-dimensional constant-coefficient second-order linear elliptic boundary value problems with homogeneous Dirichlet boundary conditions. Rather than building the quantum algorithm on a collocation or pseudospectral discretization, we work with the weak formulation and construct a Dirichlet-compatible Jacobi-based spectral basis. The one-dimensional basis functions are $L^2$-orthonormal and vanish at the boundary, so the mass matrix is the identity, thereby avoiding the exponential-in-$d$ sparsity growth that can arise from nontrivial mass matrices in tensor-product spectral discretizations. The $d$-dimensional approximation space is obtained by taking tensor products of these basis functions, thereby incorporating the boundary condition directly into the approximation space.

This construction leads to symmetric positive-definite Galerkin systems whose condition numbers can be controlled through energy estimates. For analytic solutions satisfying a suitable uniform regularity condition, we prove spectral convergence, so that $N=\poly(\log(1/\epsilon))$ degrees of freedom per coordinate direction are sufficient. At the algebraic level, the Galerkin matrix satisfies $\kappa(L)=O(\kappa(A)N^4)$ and $s(L)=O(d^2N^2)$, where $A$ denotes the constant symmetric positive-definite coefficient matrix of the elliptic operator; when $A$ is diagonal, the sparsity improves to $O(dN)$. More generally, we observe that the underlying sparsity and conditioning mechanisms are not specific to homogeneous Dirichlet boundary conditions: they extend to other boundary conditions whenever a suitable $L^2$-orthonormal polynomial basis is available, the associated variational problem remains coercive, and the corresponding approximation space satisfies appropriate inverse estimates. We further introduce a rotated Jacobi-based basis that preserves the approximation space and condition number while reducing the sparsity to $O(dN+d^2)$ in the general case and to $1$ when $A$ is diagonal.

Combining these properties with a sparse-input QLSA yields a quantum algorithm for preparing the normalized Galerkin coefficient state with complexity $\poly(d,\log(1/\epsilon))$, under standard sparse-access and state-preparation assumptions. At the state-preparation level, this improves the precision dependence over fixed-order finite-difference and finite-element discretizations. We further show how to estimate prescribed linear functionals of the PDE solution from the prepared state, with an additional $O(1/\epsilon)$ factor arising from norm and inner-product estimation.

\tab{intro-comparison} summarizes the approximation accuracy, condition number, sparsity, and quantum state-preparation complexity of representative low-order discretizations and the spectral-Galerkin method proposed in this paper. \fig{condition-comparison} compares the two-norm condition-number growth of the Chebyshev pseudospectral and Jacobi-based spectral-Galerkin discretizations for $d=3$ and $d=4$, showing substantially milder growth for the latter over the tested range of $N$.

\begin{table}[t]
\begin{center}
\small
\renewcommand{\arraystretch}{1.25}
\resizebox{0.98\textwidth}{!}{
\begin{tabular}{|c|c|c|c|c|}
\hline
\textbf{Method}
&
\textbf{Error scaling}
&
\textbf{Condition number}
&
\textbf{Sparsity}
&
\textbf{Complexity}
\\
\hline
FDM \cite{CLO21,Kharazi2025}
&
$N^{-p}$
&
$O(N^2)$
&
$O(pd)$
&
$\poly(d,1/\epsilon)$
\\
\hline
FEM \cite{MP16,BS89}
&
$N^{-p}$
&
$O(N^2)$
&
$O(1)$
&
$\poly(d,1/\epsilon)$
\\
\hline
Spectral-Galerkin method [this paper]
&
$(C/N)^N$
&
$O(N^4)$
&
$O(dN)$
&
$\poly(d,\log(1/\epsilon))$
\\
\hline
\end{tabular}
}
\end{center}
\caption{
Comparison of discretization properties relevant to QLSA-based algorithms for the $d$-dimensional Poisson equation with homogeneous Dirichlet boundary conditions. Here $N$ denotes the number of degrees of freedom in each coordinate direction, $p$ denotes the order of a fixed-order finite difference or finite element scheme, and the last column gives the complexity of the state-preparation part of the quantum algorithm. For FEM, the sparsity $O(1)$ reported in \cite{MP16} is understood under the standard convention that the spatial dimension $d$ is fixed; if the dependence on $d$ is tracked explicitly, the sparsity generally depends on $d$.
}
\label{tab:intro-comparison}
\end{table}

\begin{figure}[t]
\centering
\includegraphics[width=\textwidth]{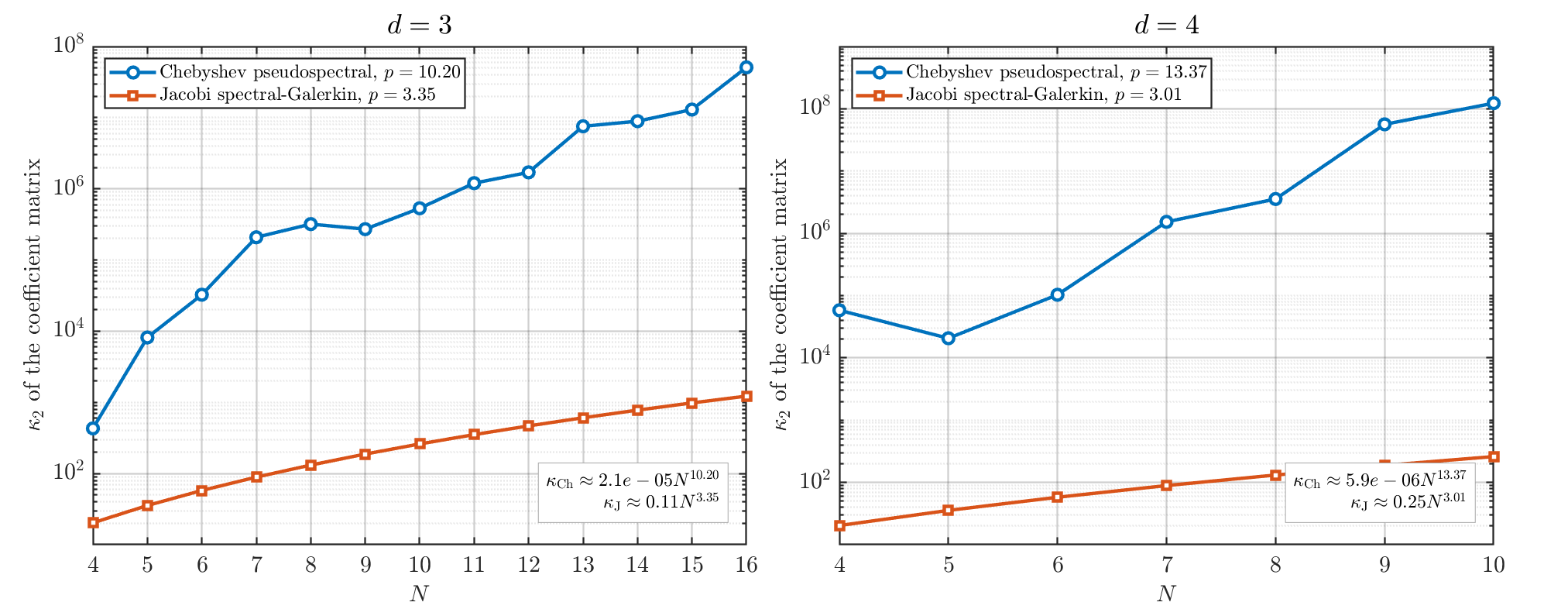}
\caption{
Numerical comparison of the QLSA-relevant two-norm condition numbers of the Chebyshev pseudospectral coefficient matrix \cite{CLO21} and our Jacobi-based spectral-Galerkin coefficient matrix for the $d$-dimensional Poisson equation, with $d=3$ and $d=4$. Here $N$ denotes the number of one-dimensional degrees of freedom. The fitted exponent $p$ corresponds to the power-law model $\kappa_2\approx C N^p$ over the indicated range of $N$. These results are intended as an illustrative finite-dimensional comparison rather than an asymptotic estimate.
}
\label{fig:condition-comparison}
\end{figure}

The remainder of this paper is organized as follows. \sec{model problem and algorithmic objectives} states the model problem and the algorithmic objectives. \sec{galerkin method and challenges} derives the Galerkin discretization and discusses the challenges associated with conventional basis functions. \sec{jacobi-spectral-Galerkin method} presents our new Jacobi-based spectral-Galerkin method and analyzes its properties and classical complexity. \sec{quantum spectral-Galerkin method} develops the corresponding quantum algorithm. Finally, \sec{conclusion} concludes with a discussion of the limitations and open problems.

\section{Model problem and algorithmic objectives}
\label{sec:model problem and algorithmic objectives}

This section introduces the model elliptic problem and its weak formulation, and then specifies the sparse-input QLSA framework and the algorithmic objectives that guide our discretization.

\subsection{Model problem and weak formulation}
\label{sec:model problem}

In this paper, we consider the following model problem.

\begin{problem}\label{prb:pde}
Given a $d$-dimensional second-order linear elliptic boundary value problem with homogeneous Dirichlet boundary conditions,
\begin{equation}
\L u
=
-\sum_{p,q=1}^{d}
A_{pq}\,\partial_{x_p}\partial_{x_q}u
=
f
\quad \text{in } \Omega:=(-1,1)^d,
\qquad
u|_{\partial\Omega}=0,
\label{eq:model-problem}
\end{equation}
where $f\in C^\infty(\overline{\Omega})$, and
$A=(A_{pq})_{p,q=1}^d\in\mathbb{R}^{d\times d}$ is a constant symmetric positive-definite matrix. Our goal is to design a quantum algorithm to estimate a linear functional of $u$: for a given function $\rho\in L^2(\Omega)$, estimate
\begin{equation}
\langle \rho,u\rangle
:=
\int_\Omega \rho(\boldsymbol{x})u(\boldsymbol{x})\,\mathrm{d}\boldsymbol{x}.
\end{equation}
\end{problem}

The weak formulation of \eqref{eq:model-problem} seeks $u\in H_0^1(\Omega)$ such that
\begin{equation}
a(u,v)=F(v),
\qquad
\forall\,v\in H_0^1(\Omega),
\label{eq:weak-form}
\end{equation}
where the bilinear form $a(\cdot,\cdot)$ is defined by
\begin{equation}
a(u,v)
:=
\sum_{p,q=1}^{d}
\int_\Omega
A_{pq}\,
\partial_{x_p}u\,
\partial_{x_q}v
\,\mathrm{d}\boldsymbol{x}
=
\int_\Omega
(\nabla u)^{\mathrm T}
(A\nabla v)
\,\mathrm{d}\boldsymbol{x},
\label{eq:bilinear-form}
\end{equation}
and the linear functional $F(\cdot)$ is defined by
\begin{equation}
F(v)
:=
\langle f,v\rangle
=
\int_\Omega
f(\boldsymbol{x})v(\boldsymbol{x})
\,\mathrm{d}\boldsymbol{x}.
\label{eq:rhs-functional}
\end{equation}
Since $A$ is symmetric positive definite, the bilinear form $a(\cdot,\cdot)$ is continuous and coercive on $H_0^1(\Omega)$. By the Lax--Milgram theorem \cite{Evans10}, \eqref{eq:model-problem} admits a unique weak solution $u\in H_0^1(\Omega)$.

\begin{remark}
Following \cite{MP16}, we take the estimation of a linear functional of the solution to be the target computational task. A quantum algorithm for a PDE typically provides access to the solution in an encoded form rather than to all pointwise values of $u$. Therefore, recovering the full solution is generally not the natural computational objective, whereas linear functionals represent natural global quantities of interest.

In addition, this choice is consistent with the standard $L^2$ error criterion. Indeed, if $\tilde{u}$ is an approximation to $u$, then the Cauchy--Schwarz inequality gives
\begin{equation}
\bigl|\langle \rho,\tilde{u}\rangle-\langle \rho,u\rangle\bigr|
=
\bigl|\langle \rho,\tilde{u}-u\rangle\bigr|
\le
\|\rho\|_{L^2(\Omega)}
\|\tilde{u}-u\|_{L^2(\Omega)}.
\end{equation}
\end{remark}

In contrast to \cite{CLO21}, we do not require the above PDE to satisfy the \emph{global strict diagonal dominance condition}
\begin{equation}
C
:=
1-
\sum_{p=1}^{d}
\frac{1}{|A_{pp}|}
\sum_{q\in[d]\setminus\{p\}}
|A_{pq}|
>0,
\label{eq:gsdd}
\end{equation}
where $[d]:=\{1,2,\dots,d\}$. This assumption is not needed in our subsequent condition number analysis. For simplicity, we focus on the homogeneous Dirichlet case. The nonhomogeneous Dirichlet problem with boundary data
\begin{equation}
u|_{\partial\Omega}=\gamma,
\qquad
\gamma\in C^\infty(\partial\Omega),
\end{equation}
can be reduced to the homogeneous case by a standard change of variables, with the boundary contribution absorbed into the right-hand side.

\subsection{Sparse-input QLSA and algorithmic objectives}
\label{sec:qlsa and objectives}

A numerical discretization of \eqref{eq:model-problem} leads to a finite-dimensional linear system. Since the complexity of a quantum linear systems algorithm depends on how the coefficient matrix and the right-hand side are accessed, we adopt the sparse-input model rather than assuming that a block encoding of the coefficient matrix is given directly. We recall the corresponding QLSA result below.

\begin{lemma}[Sparse-input QLSA, adapted from \cite{CKS17}]\label{lem:qlsa}
Let $H \in \mathbb{C}^{\mathcal{N}\times \mathcal{N}}$ be a Hermitian, invertible matrix with condition number $\kappa=\|H\|\,\|H^{-1}\|$, and suppose that $H$ is $s$-sparse, namely, each row and each column of $H$ contains at most $s$ nonzero entries. Let $b\in\mathbb{C}^\mathcal{N}$ be a nonzero vector, and define $x=H^{-1}b$. Let
\begin{equation}
    |b\rangle=\frac{b}{\|b\|},\qquad |x\rangle=\frac{x}{\|x\|}.
\end{equation}
Assume that we are given an oracle $P_H$ which, on input $(r,\ell)$, returns the column index together with the value of the $\ell$th nonzero entry in row $r$, and an oracle $P_b$ that prepares the state $|b\rangle$. Then, for any $\epsilon>0$, there exists a quantum algorithm that prepares a state $\epsilon$-close to $|x\rangle$ in $\ell_2$ norm, with success probability $\Omega(1)$, using
\begin{equation}
    O\left(s\kappa\,\poly(\log(s\kappa/\epsilon))\right)
\end{equation}
queries to $P_H$ and $P_b$. Moreover, the gate complexity exceeds the query complexity by at most a factor of $\poly(\log \mathcal{N})$.
\end{lemma}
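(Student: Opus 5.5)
This lemma is a restatement of the sparse-input quantum linear systems algorithm of Childs, Kothari, and Somma \cite{CKS17}, so the plan is to reduce it to their main theorem rather than prove it from scratch. I would first fix normalisation. Rescale $H$ so that its spectrum lies in $[-1,-1/\kappa]\cup[1/\kappa,1]$. Since $H$ is $s$-sparse, $\|H\|\le s\max_{ij}|H_{ij}|$, so dividing by $s\max|H_{ij}|$ gives such a normalisation. This costs at most a factor $s$ in the effective condition parameter, and that factor is exactly the $s$ appearing in $O(s\kappa)$. Rescaling $H$ does not change $|x\rangle$, because the normalised solution is invariant under scalar multiples of $H$.

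Next I would turn the sparse oracle $P_H$ into a block encoding of $H/(s\|H\|_{\max})$. The standard construction uses the quantum walk of Childs and Berry--Childs. With $O(1)$ queries to $P_H$, one prepares states $|\psi_r\rangle \propto \sum_\ell |r\rangle|c(r,\ell)\rangle(\sqrt{H^*_{r c}}|0\rangle+\ldots)$. The inner products of two such isometries reproduce $H_{rc}/(s\|H\|_{\max})$. The gate overhead is $\poly(\log\mathcal N)$ plus arithmetic on the matrix entries, which accounts for the stated gate-complexity factor.

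With the block encoding in hand, I would invoke the CKS construction. Approximate $1/y$ on $[1/\kappa',1]$, where $\kappa'=s\kappa\|H\|_{\max}/\|H\|$ up to constants, by either a Fourier series (a linear combination of Hamiltonian simulations) or a Chebyshev series (a linear combination of quantum walk steps). In either case the approximation has degree $O(\kappa'\log(\kappa'/\epsilon))$ and its coefficient $\ell_1$-norm is $O(\kappa'\sqrt{\log(\kappa'/\epsilon)})$. Apply this operator to $|b\rangle$, which is obtained from one call to $P_b$, using the LCU lemma. Then use variable-time amplitude amplification, which boosts the success probability to $\Omega(1)$ while keeping the total cost linear in $\kappa'$ up to polylogarithmic factors. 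Together these give the stated query bound $O(s\kappa\,\poly(\log(s\kappa/\epsilon)))$ to $P_H$ and $P_b$.

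The main obstacle, if one did the argument carefully, is variable-time amplitude amplification. Plain amplitude amplification would give $\kappa^2$ instead of $\kappa$. Since the lemma is cited as adapted from \cite{CKS17}, I would not reprove it. Instead I would check that their hypotheses match ours: Hermitian and invertible $H$, an $s$-sparse row oracle, the spectrum normalisation above, and an error measured in $\ell_2$ distance of normalised states. A non-Hermitian matrix would need the standard dilation, but $H$ is Hermitian here, so none is required. With the hypotheses matched, I would quote their theorem directly.
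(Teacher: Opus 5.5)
Your proposal is correct and follows the same route as the paper, which gives no proof of its own and imports the result directly from \cite{CKS17}; checking that the hypotheses match (Hermitian $H$, $s$-sparse row oracle, $\ell_2$ error on normalised states) is exactly what ``adapted from'' relies on. One point to state more cleanly: dividing by $s\|H\|_{\max}$ places the spectrum in $[-1,-1/(s\kappa)]\cup[1/(s\kappa),1]$, not in $[-1,-1/\kappa]\cup[1/\kappa,1]$ (since $\|H\|_{\max}\le\|H\|\le s\|H\|_{\max}$), and it presupposes known bounds on $\|H\|_{\max}$ and $\kappa$, as in \cite{CKS17}; this is precisely the source of the factor $s$ in $O(s\kappa)$, as your next sentence already indicates.
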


Guided by \lem{qlsa}, our complexity analysis focuses on three quantities: the dimension of the linear system, which is determined by the approximation accuracy of the discretization; the condition number of the coefficient matrix; and its sparsity. Accordingly, we seek a discretization with \textbf{rapid convergence}, a \textbf{polynomially bounded condition number}, and \textbf{polynomial sparsity}.
\section{Galerkin method and challenges}
\label{sec:galerkin method and challenges}

In this section, we first formulate a general tensor-product Galerkin discretization of \eqref{eq:model-problem} and express the resulting coefficient matrix in terms of the one-dimensional mass, stiffness, and mixed derivative matrices. We then examine two illustrative choices of basis functions. These examples reveal the trade-offs among approximation accuracy, condition number, and sparsity, and motivate the search for a basis that simultaneously achieves rapid convergence, a polynomially bounded condition number, and polynomial sparsity.

\subsection{Tensor-product Galerkin discretization}
\label{sec:galerkin method}

We discretize the weak formulation of \eqref{eq:model-problem} by restricting it to a finite-dimensional subspace of $H_0^1(\Omega)$. More precisely, for a finite-dimensional subspace $V_N\subset H_0^1(\Omega)$, the Galerkin approximation $u_N\in V_N$ is defined by
\begin{equation}
a(u_N,v_N)=F(v_N), \qquad \forall v_N\in V_N.
\label{eq:galerkin-discrete}
\end{equation}

We choose $V_N$ to be a tensor-product subspace of $H_0^1(\Omega)$. Let $[N]_0:=\{0,1,\dots,N-1\}$, and let $\{\phi_k\}_{k\in [N]_0}\subset H_0^1(-1,1)$ be a family of one-dimensional basis functions. For each multi-index $\boldsymbol{k}=(k_1,\dots,k_d)\in [N]_0^d$, define
\begin{equation}
\Phi_{\boldsymbol{k}}(\boldsymbol{x})=\prod_{j=1}^d \phi_{k_j}(x_j),
\quad \boldsymbol{x}=(x_1,\dots,x_d)\in \Omega.
\label{eq:tensor-basis}
\end{equation}
We then set
\begin{equation}
V_N:=\mathrm{span}\bigl\{\Phi_{\boldsymbol{k}}\mid\boldsymbol{k}\in [N]_0^d\bigr\}\subset H_0^1(\Omega).
\label{eq:VN-space}
\end{equation}
To facilitate the quantum implementation, we assume that $N=2^n$, so that $\dim(V_N)=N^d=2^{nd}$. With this choice of basis, the Galerkin solution admits the expansion
\begin{equation}
u_N(\boldsymbol{x})=\sum_{\boldsymbol{k}\in [N]_0^d} c_{\boldsymbol{k}}\,\Phi_{\boldsymbol{k}}(\boldsymbol{x}),
\label{eq:uN-expansion}
\end{equation}
where the coefficients $\{c_{\boldsymbol{k}}\}$ are determined by \eqref{eq:galerkin-discrete}. 

To derive the algebraic system for the coefficients
$\{c_{\boldsymbol{k}}\}_{\boldsymbol{k}\in [N]_0^d}$, we substitute the expansion
\eqref{eq:uN-expansion} into \eqref{eq:galerkin-discrete} and choose the test
function $v_N=\Phi_{\boldsymbol{l}}$ for each $\boldsymbol{l}\in [N]_0^d$.
Then \eqref{eq:galerkin-discrete} is equivalent to
\begin{equation}
\sum_{p,q=1}^d A_{pq}\sum_{\boldsymbol{k}\in[N]_0^d}c_{\boldsymbol{k}}
\bigl\langle \partial_{x_p}\Phi_{\boldsymbol{k}},\partial_{x_q}\Phi_{\boldsymbol{l}}\bigr\rangle
=
\langle f,\Phi_{\boldsymbol{l}}\rangle,
\qquad \forall\,\boldsymbol{l}\in[N]_0^d.
\label{eq:discrete-entrywise}
\end{equation}

To write this system in a compact form, we introduce the coefficient vector and the load vector by
\begin{equation}
\boldsymbol{c}:=\sum_{\boldsymbol{k}\in[N]_0^d}c_{\boldsymbol{k}}\ket{\boldsymbol{k}},
\qquad
\boldsymbol{f}:=\sum_{\boldsymbol{l}\in[N]_0^d}\langle f,\Phi_{\boldsymbol{l}}\rangle\ket{\boldsymbol{l}},
\label{eq:c-f-def}
\end{equation}
and define the one-dimensional matrices
\begin{equation}
M:=\sum_{j,k\in[N]_0}\langle\phi_k,\phi_j\rangle\ket{j}\bra{k},
\quad
S:=\sum_{j,k\in[N]_0}\langle\phi_k',\phi_j'\rangle\ket{j}\bra{k},
\quad
T:=\sum_{j,k\in[N]_0}\langle\phi_k',\phi_j\rangle\ket{j}\bra{k}.
\label{eq:MST-def}
\end{equation}
Here $M$, $S$, and $T$ are
respectively the mass matrix, the stiffness matrix, and the mixed derivative
matrix associated with the one-dimensional basis
$\{\phi_k\}_{k\in[N]_0}$.

By the tensor-product structure of the basis functions, \eqref{eq:discrete-entrywise} is equivalent to the linear system
\begin{equation}
L\boldsymbol{c}=\boldsymbol{f},
\label{eq:linear-system}
\end{equation}
where $L=L_1+L_2$ with
\begin{align}
L_1
&=
\sum_{p=1}^d
A_{pp}
\bigl(
M^{\otimes(p-1)}
\otimes S
\otimes M^{\otimes(d-p)}
\bigr),
\label{eq:L1-final}
\\
L_2
&=
-2
\sum_{1\le p<q\le d}
A_{pq}
\bigl(
M^{\otimes(p-1)}
\otimes T
\otimes M^{\otimes(q-p-1)}
\otimes T
\otimes M^{\otimes(d-q)}
\bigr).
\label{eq:L2-final}
\end{align}
The detailed derivation of \eqref{eq:L1-final}--\eqref{eq:L2-final} is given in \app{derivation}. Solving \eqref{eq:linear-system} yields the coefficient vector $\boldsymbol{c}$ and hence the Galerkin approximation $u_N$.

\subsection{Two illustrative choices of basis functions}\label{sec:two choices}
In view of \lem{qlsa}, two properties of the coefficient matrix $L$ are particularly important for the QLSA complexity: its condition number $\kappa(L)$ and its sparsity $s(L)$, where $s(L)$ denotes the maximum number of nonzero entries in any row or column of $L$. Since the QLSA complexity depends essentially linearly on these quantities, it is important to understand how they are affected by the choice of basis. To illustrate this dependence, we consider two classical examples. Throughout this subsection, we restrict ourselves to the case $A=I_d$, for which the off-diagonal part vanishes and the system matrix reduces to
\begin{equation}
L=L_1=\sum_{p=1}^d M^{\otimes(p-1)}\otimes S\otimes M^{\otimes(d-p)}.
\label{eq:L-example}
\end{equation}

\paragraph{Example 1: piecewise polynomial basis \cite{BS89}.}
Suppose that the one-dimensional basis functions $\{\phi_k\}_{k\in[N]_0}$ are chosen to be continuous piecewise polynomials on a partition of $(-1,1)$. Then the resulting tensor-product Galerkin discretization is precisely a finite element method. Since low-order finite element bases only provide algebraic convergence, in order to achieve an approximation error of order $\epsilon$, one generally needs
\begin{equation}
N=\poly(1/\epsilon).
\label{eq:fem-N-eps}
\end{equation}
For such finite element bases, the condition number of the coefficient matrix satisfies
\begin{equation}
\kappa(L)=O\bigl((\dim V_N)^{2/d}\bigr)=O(N^2).
\label{eq:fem-kappa}
\end{equation}

In particular, if $\phi_k$ are piecewise linear functions, then the one-dimensional mass matrix $M$ and stiffness matrix $S$ are both tridiagonal. Consequently, each tensor-product term in \eqref{eq:L-example} has at most $3^d$ nonzero entries in each row and column, and hence
\begin{equation}
s(L)=O(d\,3^d).
\label{eq:fem-sparsity}
\end{equation}

Therefore, by \eqref{eq:fem-kappa}, \eqref{eq:fem-sparsity}, and \lem{qlsa}, the query complexity of solving \eqref{eq:linear-system} by QLSA is
\begin{equation}
O\left(d\,3^dN^2\,\poly\!\left(\log(d\,3^dN^2/\delta)\right)\right),
\label{eq:fem-qlsa-complexity}
\end{equation}
where $\delta$ is the target precision of the quantum linear solver. Thus, in this finite element setting, the QLSA complexity remains polynomial in $1/\epsilon$ and, under the present sparse-matrix model, grows exponentially with the dimension $d$ because of the factor $3^d$.

\paragraph{Example 2: Legendre difference basis \cite{Shen94}.}
Consider the one-dimensional basis
\begin{equation}
    \phi_k(x)=c_k\bigl(L_{k+2}(x)-L_k(x)\bigr), \qquad k\in[N]_0,
\end{equation}
where $L_k$ is the Legendre polynomial of degree $k$, and the normalization constants $c_k$ are chosen such that $S=I$. This is the classical Legendre--Galerkin basis introduced in \cite{Shen94}. Moreover, for smooth solutions, the associated approximation enjoys spectral accuracy, so that to achieve an approximation error of order $\epsilon$, one may take
\begin{equation}
N=\poly(\log(1/\epsilon)).
\label{eq:legendre-N-eps}
\end{equation}
In this case, the coefficient matrix is
\begin{equation}
L=\sum_{p=1}^d M^{\otimes(p-1)}\otimes I\otimes M^{\otimes(d-p)}. 
\label{eq:L-legendre}
\end{equation}
Since \(M\) is symmetric positive definite, 
\begin{equation}
\kappa(L)=\frac{\lambda_{\text{max}}(L)}{\lambda_{\text{min}}(L)}=\frac{d\lambda_{\text{max}}(M)^{d-1}}{d\lambda_{\text{min}}(M)^{d-1}}=\kappa(M)^{d-1}.
\label{eq:legendre-kappa}
\end{equation}

Moreover, the mass matrix $M$ is symmetric positive definite and satisfies $M_{ij}\neq 0$ only when $i=j$ or $i=j\pm2$. Consequently, each tensor-product term in \eqref{eq:L-legendre} has at most $3^{d-1}$ nonzero entries in each row and column, and hence
\begin{equation}
s(L)=O(d\,3^{d-1})=O(d\,3^d).
\label{eq:legendre-sparsity}
\end{equation}

Therefore, by \eqref{eq:legendre-kappa}, \eqref{eq:legendre-sparsity}, and \lem{qlsa}, the query complexity of solving \eqref{eq:linear-system} by QLSA is
\begin{equation}
O\left(d\,3^d\,\kappa(M)^{d-1}\,\poly\!\left(\log(d\,3^d\,\kappa(M)^{d-1}/\delta)\right)\right),
\label{eq:legendre-qlsa-complexity}
\end{equation}
where $\delta$ is the target precision of the quantum linear solver. Thus, although this Legendre basis can achieve spectral accuracy and hence requires only $N=\poly(\log(1/\epsilon))$, under the present sparse-matrix model the QLSA complexity still grows exponentially with the dimension $d$, through both the sparsity factor $3^d$ and the condition-number factor $\kappa(M)^{d-1}$.

The above two examples show that the choice of basis is crucial. Local low-order bases may lead to better-conditioned systems, but their approximation power is limited, so achieving high accuracy requires a large value of $N$. On the other hand, spectrally accurate bases can significantly reduce the required value of $N$, but the resulting matrices may become poorly conditioned as the dimension grows. Moreover, in both examples, the sparsity of the resulting coefficient matrix grows exponentially with the dimension $d$. 

This motivates the following question: can one construct a family of basis functions such that the resulting Galerkin method simultaneously has the following three properties?
\begin{enumerate}
    \item The approximation error decays rapidly with $N$, ideally so that achieving accuracy $\epsilon$ only requires $N=\poly(\log(1/\epsilon))$, rather than $N=\poly(1/\epsilon)$.
    \item The condition number of the coefficient matrix satisfies $\kappa(L)=\poly(N,d)$.
    \item The sparsity of the coefficient matrix satisfies $s(L)=\poly(N,d)$.
\end{enumerate}

\section{An \texorpdfstring{$L^2$}{L2}-orthonormal Jacobi-based spectral-Galerkin method}
\label{sec:jacobi-spectral-Galerkin method}

Motivated by the difficulties identified in \sec{two choices}, we now introduce a Jacobi-based spectral-Galerkin method for \eqref{eq:model-problem}. The proposed basis incorporates the homogeneous Dirichlet boundary condition directly while preserving orthonormality in the standard $L^2$ inner product. We show that the resulting discretization has a condition number and sparsity polynomial in $N$ and $d$, while spectral accuracy is obtained for analytic solutions satisfying a suitable uniform regularity condition.

We first present the Jacobi-based basis construction in \sec{basis} and establish its main approximation, condition number, and sparsity properties in \lem{jacobi-basis-properties}. We then introduce a rotated Jacobi-based basis in \sec{rotated-basis}, which preserves the approximation space and condition number while further reducing the sparsity of the coefficient matrix. Finally, in \sec{classical complexity}, we analyze the classical complexity of solving the resulting linear systems by the conjugate gradient method. 

Basic properties of Jacobi polynomials used in the analysis are collected in \app{Jacobi}, while a concrete example and numerical results are presented in \app{example}.

\subsection{Jacobi-based basis and its properties}\label{sec:basis}

Motivated by the three properties stated above, we now explain how our basis is constructed. To achieve property~1, namely, spectral accuracy, it is natural to look for basis functions related to orthogonal polynomials on $(-1,1)$. A natural candidate is provided by Jacobi polynomials, which include Legendre and Chebyshev polynomials as special cases. For parameters $\alpha,\beta>-1$, the Jacobi polynomials $\{P_k^{(\alpha,\beta)}\}_{k\ge0}$ are orthogonal with respect to the weight $\omega^{(\alpha,\beta)}(x)=(1-x)^\alpha(1+x)^\beta$, and satisfy
\begin{equation}
\label{eq:jacobi-ortho-subsection}
\int_{-1}^1
P_k^{(\alpha,\beta)}(x)\,P_l^{(\alpha,\beta)}(x)\,
\omega^{(\alpha,\beta)}(x)\,dx
=
\gamma_k^{(\alpha,\beta)}\delta_{kl},
\end{equation}
where
\begin{equation}
\gamma_k^{(\alpha,\beta)}
=
\frac{2^{\alpha+\beta+1}\Gamma(k+\alpha+1)\Gamma(k+\beta+1)}
{(2k+\alpha+\beta+1)k!\,\Gamma(k+\alpha+\beta+1)}.
\end{equation}

The main difficulty is to impose the homogeneous Dirichlet boundary condition while keeping the resulting coefficient matrix well structured. In Example~2, this was done by taking the Legendre difference basis $L_{k+2}-L_k$. Although that choice preserves spectral accuracy, it leads to a nontrivial mass matrix $M$, and the sparsity of the resulting tensor-product system is then dominated by $M$, which in turn causes exponential growth with the dimension $d$.

This observation suggests a different strategy. Instead of enforcing the boundary condition by taking differences of orthogonal polynomials, we incorporate the boundary factor directly into the basis functions and define
\begin{equation}
\phi_k(x):=\frac{1}{\sqrt{\gamma_k^{(2,2)}}}(1-x^2)P_k^{(2,2)}(x),
\quad k\in[N]_0.
\label{eq:new-basis-1d}
\end{equation}
Then each $\phi_k$ vanishes at $x=\pm1$, so the homogeneous Dirichlet boundary condition is satisfied automatically. Moreover, by \eqref{eq:jacobi-ortho-subsection}, the family $\{\phi_k\}_{k\in[N]_0}$ is orthonormal in $L^2(-1,1)$, and therefore
\begin{equation}
M=I.
\label{eq:new-basis-MI}
\end{equation}
Consequently, the matrices $L_1$ and $L_2$ take the form
\begin{align}
L_1&=\sum_{p=1}^d A_{pp}\bigl(I^{\otimes(p-1)}\otimes S\otimes I^{\otimes(d-p)}\bigr),
\label{eq:new-basis-L1}\\
L_2&=-2\sum_{1\le p<q\le d}A_{pq}
\bigl(I^{\otimes(p-1)}\otimes T\otimes I^{\otimes(q-p-1)}\otimes T\otimes I^{\otimes(d-q)}\bigr).
\label{eq:new-basis-L2}
\end{align}
Hence the sparsity of the full coefficient matrix $L=L_1+L_2$ is determined entirely by the one-dimensional matrices $S$ and $T$, rather than by tensor powers of a nontrivial mass matrix $M$. This removes the main source of exponential sparsity growth in the previous examples and is therefore the key step toward achieving property~3.

Finally, we emphasize that our Galerkin formulation is based on the standard unweighted inner product, that is, the weight is $1$. One may also consider weighted Galerkin formulations associated with other Jacobi weights. In such settings, it is still possible to construct bases with $M=I$. However, the resulting one-dimensional matrices $S$ and $T$ are in general no longer symmetric and skew-symmetric, respectively, and consequently the full coefficient matrix $L=L_1+L_2$ may lose its favorable structure and become poorly conditioned in high dimensions. By contrast, in the present unweighted setting, $S$ is symmetric and $T$ is skew-symmetric. Since $A$ is symmetric, it follows that both $L_1$ and $L_2$ are symmetric, and hence so is $L$. This structural property will be crucial for proving the favorable condition number bound later.

To establish property~2, we first estimate the condition number of the one-dimensional stiffness matrix $S$ associated with the basis \eqref{eq:new-basis-1d}.

\begin{lemma}\label{lem:S-cond-N4}
Let $\{\phi_k\}_{k\in[N]_0}$ be the one-dimensional basis defined in \eqref{eq:new-basis-1d}, and let $S$ be the associated stiffness matrix defined in \eqref{eq:MST-def}. Then
\begin{equation}
\lambda_{\min}(S)\ge \frac{\pi^2}{4},
\qquad
\lambda_{\max}(S)\le \frac{1}{4}(N+2)^2(N+3)^2.
\end{equation}
Consequently, $\kappa(S)=O(N^4)$.
\end{lemma}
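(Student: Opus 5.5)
We will treat both bounds through the Rayleigh quotient. For a coefficient vector $\boldsymbol{v}=(v_0,\dots,v_{N-1})^{\mathrm T}$ set $v(x):=\sum_{k}v_k\phi_k(x)$. By the definition \eqref{eq:MST-def} and the orthonormality \eqref{eq:new-basis-MI},
\begin{equation*}
\boldsymbol{v}^{\mathrm T}S\boldsymbol{v}=\|v'\|_{L^2(-1,1)}^2,
\qquad
\|\boldsymbol{v}\|_2^2=\|v\|_{L^2(-1,1)}^2 .
\end{equation*}
Hence $\lambda_{\min}(S)$ and $\lambda_{\max}(S)$ are, respectively, the infimum and the supremum of $\|v'\|^2/\|v\|^2$ over nonzero $v$ in
\begin{equation*}
X_N:=\mathrm{span}\{\phi_k\}_{k\in[N]_0}=(1-x^2)\,\mathbb P_{N-1}\subset H_0^1(-1,1).
\end{equation*}

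\textbf{Lower bound.} Since $X_N\subset H_0^1(-1,1)$, we can invoke the sharp one-dimensional Poincaré (Friedrichs) inequality. The first Dirichlet eigenvalue of $-\partial_x^2$ on an interval of length $2$ is $(\pi/2)^2$. Therefore $\|v'\|^2\ge \frac{\pi^2}{4}\|v\|^2$ for every $v\in X_N$, which gives $\lambda_{\min}(S)\ge \pi^2/4$ immediately.

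\textbf{Upper bound.} Here we need an inverse inequality on $X_N$ with an explicit constant. We will establish it in three steps.

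\emph{Step 1.} Integrate by parts. This is legitimate because $v$ vanishes at $\pm1$, and it gives
\begin{equation*}
\|v'\|^2=\langle -v'',v\rangle\le \|\Pi_N v''\|\,\|v\|,
\end{equation*}
where $\Pi_N$ is the $L^2$-orthogonal projection onto $X_N$.

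\emph{Step 2.} Reduce to a spectral bound. It then suffices to show that the operator $v\mapsto -\Pi_N v''$, which is $S$ itself in the orthonormal basis, has norm at most $\frac14(N+2)^2(N+3)^2$. Because $S$ is symmetric positive definite, we can use the following alternative, which is cleaner. We have
\begin{equation*}
\|v'\|^2\le \|v\|\,\|v''\|,
\end{equation*}
so it is enough to bound $\|v''\|$ for $v=(1-x^2)p$ with $p\in\mathbb P_{N-1}$.

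\emph{Step 3.} Bound $\|v''\|$ through a Markov-type inequality. Expand $v$ in the $\phi_k$. Using the Jacobi relation
\begin{equation*}
\frac{d}{dx}\Bigl[(1-x^2)^{2}\tfrac{d}{dx}P_k^{(2,2)}\Bigr]=-k(k+5)(1-x^2)P_k^{(2,2)},
\end{equation*}
together with the derivative formula from \app{Jacobi}, we can write $\|v'\|^2=\sum_{j,k}S_{jk}v_jv_k$ with explicit entries $S_{jk}$. From parity, $S_{jk}=0$ whenever $j+k$ is odd. The plan is to compute $S_{kk}$ and the off-diagonal entries in closed form. We expect $S_{kk}\sim k^2/2$, with off-diagonal entries of comparable size that do not decay in $|j-k|$.

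\emph{Conclusion via Gershgorin.} We then apply the Gershgorin circle theorem, or equivalently the bound $\lambda_{\max}\le\max_j\sum_k|S_{jk}|$, to the matrix $S$. Summing $O(N)$ entries of size $O(N^3)$ or less should give a bound of the form $\frac14(N+2)^2(N+3)^2$.

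\textbf{Simpler fallback.} If that explicit computation becomes messy, there is a cleaner route that bounds the trace instead:
\begin{equation*}
\lambda_{\max}(S)\le \operatorname{tr}S=\sum_{k<N}\|\phi_k'\|^2 .
\end{equation*}
Each term $\|\phi_k'\|^2$ can be evaluated exactly, since $\phi_k'$ is a polynomial whose norm is computable from the Jacobi orthogonality \eqref{eq:jacobi-ortho-subsection}. Using $(1-x^2)P_k^{(2,2)}$ and the relation $\frac{d}{dx}\bigl[(1-x^2)P_k^{(2,2)}\bigr]\propto P_{k+1}^{(1,1)}$ (up to a known combination), each term should be $O(k^2)$. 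The sum is then $O(N^3)$, which is well within the claimed bound $\frac14(N+2)^2(N+3)^2$. Once the upper and lower bounds are in place, $\kappa(S)=O(N^4)$ follows directly.

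\textbf{Main obstacle.} The main difficulty is computing the exact constant. This means the closed form of $\|\phi_k'\|^2$ or of the entries $S_{jk}$, which requires careful bookkeeping with the Jacobi identities in \app{Jacobi}. The lower bound is routine.
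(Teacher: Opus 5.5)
Your Rayleigh-quotient reformulation and the Poincar\'e lower bound are exactly the paper's argument. The gap is the upper bound, which nothing in the proposal actually establishes.

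\textbf{What fails in the proposed routes.} Step~2 only restates the goal: bounding the operator $v\mapsto-\Pi_N v''$ \emph{is} bounding $\lambda_{\max}(S)$. The reduction to $\|v''\|$ is then dropped, and no Markov-type inequality is ever stated or applied. The identity quoted in Step~3 is misstated. The correct Sturm--Liouville form for $(2,2)$ is $\frac{d}{dx}\bigl[(1-x^2)^3\frac{d}{dx}P_k^{(2,2)}\bigr]=-k(k+5)(1-x^2)^2P_k^{(2,2)}$; the case $k=1$, $P_1^{(2,2)}=3x$, already contradicts your version. More seriously, the size predictions behind both the Gershgorin plan and the fallback are wrong. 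A direct computation gives $\|\phi_0'\|^2=\frac52$, $\|\phi_1'\|^2=\frac{21}{2}$, $\|\phi_2'\|^2=\frac{51}{2}$, $\|\phi_3'\|^2=\frac{99}{2}$, and in general
\[
S_{kk}=\|\phi_k'\|^2=\frac{(2k+5)(k^2+5k+3)}{6}\sim\frac{k^3}{3}.
\]
This is not $k^2/2$ and not $O(k^2)$. Hence $\operatorname{tr}S$ is of order $N^4$, not $O(N^3)$. An $O(N^3)$ trace would force $\lambda_{\max}(S)=O(N^3)$, which is incompatible with the genuine $N^4$ growth of $\lambda_{\max}(S)$ (visible in the fitted exponent $p\approx 4$ for $\kappa(S)$). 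So, as written, neither route is carried out, and the fallback's key quantitative claim is false.

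\textbf{The missing ingredient.} It is the one the paper uses. Since $X_N\subset\mathbb P_{N+1}$, the $L^2$ inverse inequality for algebraic polynomials on $[-1,1]$ gives $\|v'\|\le\frac12(N+2)(N+3)\|v\|$ for all $v\in X_N$. Squaring yields $\lambda_{\max}(S)\le\frac14(N+2)^2(N+3)^2$ at once, with no entrywise computation and no detour through $\|v''\|$. This inequality is elementary. Write $p=\sum_{m\le M}a_mL_m$ and use $\|L_m'\|^2=m(m+1)$, $\|L_m\|^2=\frac{2}{2m+1}$ and Cauchy--Schwarz; this gives $\|p'\|\le\frac12(M+1)^2\|p\|\le\frac12(M+1)(M+2)\|p\|$ for $p\in\mathbb P_M$.

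\textbf{Rescuing the trace route.} Your trace route can be salvaged, but only with the correct diagonal. Summing the formula above gives
\[
\operatorname{tr}S=\frac{N(N+4)(N^2+4N+1)}{12}\le\frac{(N+2)^4}{12},
\]
which is even below the stated bound. Deriving $S_{kk}$ requires exactly the bookkeeping you deferred. Start from $S_{kk}=-\langle\phi_k'',\phi_k\rangle$. For even $k$, $(2,2)$-orthogonality reduces this to $-\phi_k''(1)\,\gamma_k^{-1/2}\int_{-1}^1(1-x^2)P_k^{(2,2)}\,dx$. Here $\phi_k''(1)=-\frac{(k+1)(k+2)(k^2+5k+3)}{3\sqrt{\gamma_k}}$ and $\int_{-1}^1(1-x^2)P_k^{(2,2)}\,dx=\frac{16}{(k+3)(k+4)}$. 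The odd case is analogous, with an extra factor $x$ in the integral.
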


\begin{proof}
Let $X_N:=\mathrm{span}\{\phi_0,\phi_1,\dots,\phi_{N-1}\}$.
Since $\{\phi_k\}_{k\in[N]_0}$ is an orthonormal basis of $X_N$ in $L^2(-1,1)$, for any
\begin{equation}
v_N=\sum_{k=0}^{N-1} c_k\phi_k\in X_N
\end{equation}
with coefficient vector $\boldsymbol{c}=(c_0,\dots,c_{N-1})^\mathrm{T}$, we have
\begin{equation}
\|v_N\|_{L^2(-1,1)}^2=\boldsymbol{c}^\mathrm{T}\boldsymbol{c},
\qquad
\|v_N'\|_{L^2(-1,1)}^2=\boldsymbol{c}^\mathrm{T} S\boldsymbol{c}.
\end{equation}
Therefore,
\begin{equation}
\lambda_{\min}(S)=\min_{0\neq v_N\in X_N}\frac{\|v_N'\|_{L^2(-1,1)}^2}{\|v_N\|_{L^2(-1,1)}^2},
\qquad
\lambda_{\max}(S)=\max_{0\neq v_N\in X_N}\frac{\|v_N'\|_{L^2(-1,1)}^2}{\|v_N\|_{L^2(-1,1)}^2}.
\end{equation}

Since $v_N\in H_0^1(-1,1)$, the Poincar\'e inequality \cite{PW60} yields
\begin{equation}
\|v_N\|_{L^2(-1,1)}\le \frac{2}{\pi}\|v_N'\|_{L^2(-1,1)}.
\end{equation}
Therefore,
\begin{equation}
\lambda_{\min}(S)\ge \frac{\pi^2}{4}.
\end{equation}

On the other hand, each $v_N$ is a polynomial of degree at most $N+1$. Thus, by the inverse inequality for algebraic polynomials on $[-1,1]$ \cite{STW11},
\begin{equation}
\|v_N'\|_{L^2(-1,1)}\le \frac{1}{2}(N+2)(N+3) \|v_N\|_{L^2(-1,1)}.
\end{equation}
Therefore,
\begin{equation}
\lambda_{\max}(S)\le \frac{1}{4}(N+2)^2(N+3)^2.
\end{equation}

Combining the above bounds, we obtain
\begin{equation}
\kappa(S)=\frac{\lambda_{\max}(S)}{\lambda_{\min}(S)}
\le \frac{1}{\pi^2}(N+2)^2(N+3)^2=O(N^4),
\end{equation}
which proves the claim.
\end{proof}

The next lemma shows that, for the Kronecker-sum structure arising here, the condition number of the multidimensional matrix is completely determined by that of the one-dimensional stiffness matrix.

\begin{lemma}\label{lem:S}
Let $S\in\mathbb{R}^{N\times N}$ be symmetric positive definite, and define
\begin{equation}
L_0:=\sum_{p=1}^d I^{\otimes(p-1)}\otimes S\otimes I^{\otimes(d-p)}.
\label{eq:L-kron-sum}
\end{equation}
Then
\begin{equation}
\lambda_{\max}(L_0)=d\,\lambda_{\max}(S),
\qquad
\lambda_{\min}(L_0)=d\,\lambda_{\min}(S),
\label{eq:L-extreme-eigs}
\end{equation}
and hence
\begin{equation}
\kappa(L_0)=\kappa(S).
\label{eq:L-cond-S}
\end{equation}
\end{lemma}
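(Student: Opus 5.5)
The plan is to diagonalize $L_0$ explicitly, using the spectral decomposition of $S$. Since $S$ is real symmetric positive definite, there is an orthogonal matrix $Q\in\mathbb{R}^{N\times N}$ and a diagonal matrix $\Lambda=\mathrm{diag}(\lambda_0,\dots,\lambda_{N-1})$ with $S=Q\Lambda Q^{\mathrm T}$ and $0<\lambda_{\min}(S)=\min_j\lambda_j\le\lambda_j\le\max_j\lambda_j=\lambda_{\max}(S)$. Set $\mathcal{Q}:=Q^{\otimes d}$, which is orthogonal. By the mixed-product property of the Kronecker product, each summand satisfies
\begin{equation}
\mathcal{Q}^{\mathrm T}\bigl(I^{\otimes(p-1)}\otimes S\otimes I^{\otimes(d-p)}\bigr)\mathcal{Q}
=I^{\otimes(p-1)}\otimes \Lambda\otimes I^{\otimes(d-p)}.
\end{equation}
Summing over $p$ then shows that $\mathcal{Q}^{\mathrm T}L_0\mathcal{Q}$ is diagonal.

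Next I would read off the spectrum. The diagonal entry indexed by $\boldsymbol{k}=(k_1,\dots,k_d)\in[N]_0^d$, that is, the eigenvalue of $L_0$ with eigenvector $\bigotimes_j Q\ket{k_j}$, equals $\lambda_{k_1}+\cdots+\lambda_{k_d}$. Hence the eigenvalues of $L_0$ are exactly all $d$-fold sums of eigenvalues of $S$, with repetition allowed.

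From this description the extremes are immediate. Each term of such a sum lies in $[\lambda_{\min}(S),\lambda_{\max}(S)]$, so every eigenvalue of $L_0$ lies in $[d\lambda_{\min}(S),d\lambda_{\max}(S)]$. Both endpoints are attained by choosing $k_1=\cdots=k_d$ equal to an index of the smallest, respectively largest, eigenvalue of $S$. This gives \eqref{eq:L-extreme-eigs}.

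For \eqref{eq:L-cond-S}, note that $L_0$ is symmetric positive definite, since it is a sum of symmetric positive semidefinite terms with positive smallest eigenvalue. Its two-norm condition number is therefore the ratio of its extreme eigenvalues. The factor $d$ cancels, leaving $\kappa(L_0)=\kappa(S)$.

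There is no real obstacle here. The only point needing care is the clean application of the mixed-product identity $(A_1\otimes B_1)(A_2\otimes B_2)=A_1A_2\otimes B_1B_2$ to the $d$-fold tensor products. One must also justify that $\kappa$ is the eigenvalue ratio, which holds because $L_0$ is symmetric positive definite.
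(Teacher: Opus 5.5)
Your proposal is correct and follows essentially the same route as the paper: diagonalize $S$ orthogonally, observe that tensor products of its eigenvectors are eigenvectors of $L_0$ with eigenvalues $\lambda_{i_1}+\cdots+\lambda_{i_d}$, and read off the extreme eigenvalues and their ratio. Your explicit conjugation by $Q^{\otimes d}$ makes precise that these $N^d$ vectors form a complete orthonormal eigenbasis, a point the paper leaves implicit.
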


\begin{proof}
Let $\{v_i\}_{i\in[N]_0}$ be an orthonormal eigenbasis of $S$, and let
$\{\lambda_i\}_{i\in[N]_0}$ be the corresponding positive eigenvalues, so that $Sv_i=\lambda_i v_i,\,i\in[N]_0$.
Then for any $i_1,\dots,i_d\in[N]_0$, the tensor-product vector
$v_{i_1}\otimes\cdots\otimes v_{i_d}$ is an eigenvector of $L_0$ with eigenvalue
\begin{equation}
\lambda_{i_1}+\cdots+\lambda_{i_d}.
\end{equation}
Since each $\lambda_i$ is positive, it follows that
\begin{equation}
\lambda_{\max}(L_0)=d\,\lambda_{\max}(S),
\qquad
\lambda_{\min}(L_0)=d\,\lambda_{\min}(S).
\end{equation}
Therefore,
\begin{equation}
\kappa(L_0)=\frac{\lambda_{\max}(L_0)}{\lambda_{\min}(L_0)}
=\frac{d\,\lambda_{\max}(S)}{d\,\lambda_{\min}(S)}
=\kappa(S).
\end{equation}
This completes the proof.
\end{proof}

Combining the two lemmas above, we obtain the following lemma, which shows that the proposed basis indeed satisfies the three desired properties.

\begin{lemma}\label{lem:jacobi-basis-properties}
Consider an instance of \prb{pde}, namely, the $d$-dimensional second-order linear elliptic boundary value problem with homogeneous Dirichlet boundary conditions \eqref{eq:model-problem}. Let the approximation space be generated by the tensor-product basis associated with the one-dimensional basis \eqref{eq:new-basis-1d}. Then this basis satisfies the three properties stated above in the following sense:
\begin{enumerate}
    \item Suppose that the exact solution is analytic and satisfies the uniform regularity condition
    \begin{equation}
        C_d:=\sup_{N\in\mathbb{N}} C_N<\infty,
        \label{eq:Cd}
    \end{equation}
    where
    \begin{equation}
        C_N
        :=
        \left(
        \sum_{i=1}^d
        \left(
        \sum_{j=1}^d
        \|\partial_{x_j}^{\,N+1}\partial_{x_i}u\|_{W_{i,j}^{(N+2)}}
        \right)^2
        \right)^{\frac12},
        \label{eq:CN}
    \end{equation}
    and
    \begin{equation}
        W_{i,j}^{(N+2)}:=\begin{cases}
            (1-x_i^2)^{N+1}\prod_{1\leq k\leq i-1}(1-x_k^2)^{-1}, & j=i,\\
            (1-x_j^2)^{N}\prod_{1\leq k\leq j-1,k\neq i}(1-x_k^2)^{-1}, & j\neq i.
        \end{cases}
    \end{equation}
    Then achieving approximation accuracy $\epsilon$ in the $L^2(\Omega)$ norm only requires
    \begin{equation}
        N=O\!\left(\frac{\log(C_d\kappa(A)^{1/2}d^{-1/2}/\epsilon)}{\log\log(C_d\kappa(A)^{1/2}d^{-1/2}/\epsilon)}\right),
        \label{eq:main-N-eps}
    \end{equation}

    \item The condition number of the coefficient matrix satisfies
    \begin{equation}
        \kappa(L)
        \leq
        \kappa(A)\kappa(S)
        =
        O(\kappa(A)N^4).
    \end{equation}

    \item The sparsity of the coefficient matrix satisfies
    \begin{equation}
        s(L)=O(d^2N^2).
    \end{equation}
    In particular, if $A$ is diagonal, then
    \begin{equation}
        s(L)=O(dN).
    \end{equation}
\end{enumerate}
\end{lemma}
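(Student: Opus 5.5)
I would prove the three items separately, taking the easiest first.

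\textbf{Condition number (item 2).} Let $\boldsymbol{c}$ be the coefficient vector of $v_N=\sum_{\boldsymbol{k}}c_{\boldsymbol{k}}\Phi_{\boldsymbol{k}}\in V_N$. By construction of the Galerkin system, $\boldsymbol{c}^{\mathrm T}L\boldsymbol{c}=a(v_N,v_N)=\int_\Omega(\nabla v_N)^{\mathrm T}A\nabla v_N$. Because $M=I$, the same computation with $A=I_d$ gives $\boldsymbol{c}^{\mathrm T}L_0\boldsymbol{c}=\|\nabla v_N\|_{L^2}^2$, where $L_0$ is the Kronecker sum of \lem{S}. Pointwise we have $\lambda_{\min}(A)|\nabla v_N|^2\le(\nabla v_N)^{\mathrm T}A\nabla v_N\le\lambda_{\max}(A)|\nabla v_N|^2$. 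Hence
\[
\lambda_{\min}(A)\,\boldsymbol{c}^{\mathrm T}L_0\boldsymbol{c}\le\boldsymbol{c}^{\mathrm T}L\boldsymbol{c}\le\lambda_{\max}(A)\,\boldsymbol{c}^{\mathrm T}L_0\boldsymbol{c}.
\]
$L$ is symmetric, since $S$ is symmetric and $T$ is skew-symmetric, so that $T\otimes T$ is symmetric. The Rayleigh-quotient characterization then gives $\kappa(L)\le\kappa(A)\kappa(L_0)$. Finally, \lem{S} and \lem{S-cond-N4} give $\kappa(L_0)=\kappa(S)=O(N^4)$.

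\textbf{Sparsity (item 3).} First bound the one-dimensional sparsities of $S$ and $T$. Each is at most $N$, since they are $N\times N$. Parity gives a sharper structure: $\phi_k$ has the parity of $k$, so $S_{jk}=0$ unless $j\equiv k \pmod 2$, and $T_{jk}=0$ unless $j\not\equiv k \pmod 2$. This halves the counts but is not needed.

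A row of $I^{\otimes(p-1)}\otimes S\otimes I^{\otimes(d-p)}$ has at most $s(S)\le N$ nonzeros. A row of the $(p,q)$ term in $L_2$ has at most $s(T)^2\le N^2$ nonzeros. Summing over the $d$ diagonal terms and the $\binom d2$ mixed terms gives $s(L)\le dN+\binom d2N^2=O(d^2N^2)$. When $A$ is diagonal, $L_2=0$, so $s(L)\le dN$.

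\textbf{Approximation (item 1).} This is the main obstacle. The plan is:

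\begin{enumerate}
\item \emph{Céa/energy argument.} Since $a$ is coercive with constant $\lambda_{\min}(A)$ and continuous with constant $\lambda_{\max}(A)$, we get $|u-u_N|_{H^1}\le\kappa(A)^{1/2}\inf_{v\in V_N}|u-v|_{H^1}$ in the energy norm. Poincaré ($\|w\|_{L^2}\le\frac{2}{\pi}\|\nabla w\|$) converts this to an $L^2$ bound. An Aubin--Nitsche duality step could sharpen it, but is not needed.

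\item \emph{Choice of comparison function.} Take $v=\Pi_Nu$, the tensor-product $H_0^1$-type projection built from one-dimensional projections onto $X_N$. Write $u-\Pi_Nu$ as a telescoping sum over directions $j$. Then bound each $\partial_{x_i}$ of each telescoping piece by a one-dimensional Jacobi-projection error in direction $j$.

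\item \emph{One-dimensional estimate.} This step uses the standard weighted Jacobi approximation result from \app{Jacobi} with weight $(1-x^2)^{N+1}$, the bounded-number-of-derivatives version with explicit constants. The error in direction $j$ is controlled by roughly $\frac{1}{N!}\|\partial_{x_j}^{N+1}\partial_{x_i}u\|$ in the weight $W_{i,j}^{(N+2)}$. The extra factors $(1-x_k^2)^{-1}$ for $k<j$ account for the projections already applied in the earlier directions.

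\item \emph{Assembly.} Summing over $j$, then taking the $\ell^2$ sum over $i$, produces $C_N$. The prefactor $d^{-1/2}$ comes from normalizing the sum (possibly through the eigenvalue scaling $d\lambda_{\min}(S)$). This yields
\[
\|u-u_N\|_{L^2}\lesssim\kappa(A)^{1/2}d^{-1/2}C_d\,(c/N)^{N}.
\]

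\item \emph{Solving for $N$.} Solve $(c/N)^N\le\epsilon/(C_d\kappa(A)^{1/2}d^{-1/2})$, i.e.\ $N\log N\approx\log(\cdot)$. Inverting this relation gives the stated $N=O(\log X/\log\log X)$ with $X=C_d\kappa(A)^{1/2}d^{-1/2}/\epsilon$.
\end{enumerate}

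The delicate points are tracking the weights exactly through the tensor-product telescoping, and obtaining the factorial decay from the Jacobi projection estimate.
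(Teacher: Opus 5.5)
Your proposal follows the paper's proof. Items 2 and 3 are argued exactly as there. Item 1 follows the route of \app{error estimation}: C\'ea's lemma in the energy norm, then the tensor-product projection onto $V_N=(\mathbb{P}^0_{N+1})^{\otimes d}$ (your $H_0^1$-projection coincides with the paper's $\Pi^{(-1,-1)}_{N+1}$), telescoping over directions, one-dimensional Jacobi estimates, and inversion of a $(c/N)^N$ bound.

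\textbf{The step that fails as written: the origin of $d^{-1/2}$.} With the one-dimensional constant $2/\pi$ you only obtain $\|u-u_N\|_{L^2}\lesssim\kappa(A)^{1/2}C_d(c/N)^N$. That does not yield the stated formula for $N$, which has $d^{-1/2}$ inside the logarithm. Nor can you use the discrete bound $\lambda_{\min}(L_0)=d\,\lambda_{\min}(S)$, because $u-u_N\notin V_N$. The factor comes from the Poincar\'e inequality on the cube. Apply the one-dimensional inequality in each variable $x_p$ and sum over $p$; this gives $\|w\|_{L^2(\Omega)}\le\frac{2}{\pi\sqrt d}\|\nabla w\|_{L^2(\Omega)}$ for all $w\in H_0^1(\Omega)$. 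This is what converts the $H^1$ bound of \cor{galerkin-error-analytic} into the $L^2$ one.

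\textbf{What your step 3 should make explicit.} Two one-dimensional facts do the work there.
\begin{enumerate}
\item The commuting property $\partial_x\pi^{(-1,-1)}_{N+1}u=\pi^{(0,0)}_{N}\partial_xu$. It turns the $j=i$ term into a Legendre projection error of $\partial_{x_i}u$. Via \prop{jacobi-projection-estimate}, this produces the weight $(1-x_i^2)^{N+1}$ on $\partial_{x_i}^{N+2}u$.
\item The contractivity of $\pi^{(-1,-1)}_{N+1}$ in $L^2_{\omega^{(-1,-1)}}$. This produces the factors $(1-x_k^2)^{-1}$. The $i$th direction is handled instead by the $H^1$-stability $\|\partial_x\pi^{(-1,-1)}_{N+1}u\|_{L^2}\le\|\partial_xu\|_{L^2}$, which is why $k\neq i$.
\end{enumerate}
For $j\neq i$ the quantity to bound is an $L^2$-in-$x_j$ projection error of $\partial_{x_i}u$. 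This function vanishes at $x_j=\pm1$. The paper gains a factor $N^{-1}$ there from the one-dimensional duality estimate of \prop{generalized-jacobi-L2-estimate}. The one-dimensional Poincar\'e inequality would already suffice for the $\log X/\log\log X$ conclusion, at the cost of a slightly smaller exponent.
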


\begin{proof}
The first claim follows directly from \cor{galerkin-error-analytic} in \app{error estimation}. More precisely, that corollary gives
\begin{equation}
\|u_N-u\|_{L^2(\Omega)}
\lesssim
C_N\kappa(A)^{\frac12}d^{-\frac12}
\left(\frac{\sqrt{e/2}}{N+1}\right)^{N+\frac54},
\end{equation}
with $C_N$ defined above. Solving this estimate for $N$ yields \eqref{eq:main-N-eps}.

For the second claim, let $\boldsymbol{c}$ be an arbitrary coefficient vector and define
\begin{equation}
v_N=\sum_{\boldsymbol{k}\in[N]_0^d}c_{\boldsymbol{k}}\Phi_{\boldsymbol{k}}.
\end{equation}
By the construction of the Galerkin matrix in \sec{galerkin method}, we have
\begin{equation}
L\boldsymbol{c}=\sum_{\boldsymbol{l}\in[N]_0^d}a(v_N,\Phi_{\boldsymbol{l}})\ket{\boldsymbol{l}}.
\end{equation}
Therefore,
\begin{equation}
\boldsymbol{c}^{\mathrm T}L\boldsymbol{c}=\sum_{\boldsymbol{k}\in[N]_0^d}c_{\boldsymbol{k}}\,a(v_N,\Phi_{\boldsymbol{k}})=a(v_N,v_N).
\end{equation}
Using \eqref{eq:bilinear-form}, we obtain
\begin{equation}
\boldsymbol{c}^{\mathrm T}L\boldsymbol{c}=
\int_{\Omega}(\nabla v_N)^{\mathrm T}(A\nabla v_N)\,dx.
\end{equation}
Moreover, since $A$ is symmetric positive definite, we have
\begin{equation}\label{eq:c'Lc}
\lambda_{\min}(A)\int_{\Omega}|\nabla v_N|^2\,dx
\le
\boldsymbol{c}^\mathrm{T}L\boldsymbol{c}
\le
\lambda_{\max}(A)\int_{\Omega}|\nabla v_N|^2\,dx.
\end{equation}
Let
\begin{equation}
L_0:=\sum_{p=1}^d I^{\otimes(p-1)}\otimes S\otimes I^{\otimes(d-p)},
\end{equation}
which is precisely the coefficient matrix corresponding to the special case $A=I_d$. Then
\begin{equation}
\boldsymbol{c}^{\mathrm T}L_0\boldsymbol{c}
=
\int_{\Omega}|\nabla v_N|^2\,dx.
\end{equation}
Hence
\begin{equation}
\lambda_{\min}(A)\,\boldsymbol{c}^{\mathrm T}L_0\boldsymbol{c}
\le
\boldsymbol{c}^{\mathrm T}L\boldsymbol{c}
\le
\lambda_{\max}(A)\,\boldsymbol{c}^{\mathrm T}L_0\boldsymbol{c}.
\end{equation}
By \lem{S},
\begin{equation}
\lambda_{\min}(L_0)=d\,\lambda_{\min}(S),
\qquad
\lambda_{\max}(L_0)=d\,\lambda_{\max}(S).
\end{equation}
Therefore,
\begin{equation}
d\,\lambda_{\min}(A)\lambda_{\min}(S)\,\|\boldsymbol{c}\|^2
\le
\boldsymbol{c}^{\mathrm T}L\boldsymbol{c}
\le
d\,\lambda_{\max}(A)\lambda_{\max}(S)\,\|\boldsymbol{c}\|^2.
\end{equation}
This implies
\begin{equation}\label{eq:lambda of L}
\lambda_{\min}(L)\ge d\,\lambda_{\min}(A)\lambda_{\min}(S),
\qquad
\lambda_{\max}(L)\le d\,\lambda_{\max}(A)\lambda_{\max}(S).
\end{equation}
Consequently,
\begin{equation}
\kappa(L)
=
\frac{\lambda_{\max}(L)}{\lambda_{\min}(L)}
\le
\frac{d\,\lambda_{\max}(A)\lambda_{\max}(S)}
{d\,\lambda_{\min}(A)\lambda_{\min}(S)}
=
\kappa(A)\kappa(S).
\end{equation}

For the third claim, each term in $L_1$ has at most $N$ nonzero entries in each row and column, and there are $d$ such terms. Thus $s(L_1)\le dN$. Similarly, each term in $L_2$ has at most $N^2$ nonzero entries in each row and column, and there are at most $(d^2-d)/2$ such terms. Thus $s(L_2)\le (d^2-d)N^2/2$. Therefore,
\begin{equation}
s(L)\le s(L_1)+s(L_2)\le dN+(d^2-d)N^2/2=O(d^2N^2).
\end{equation}
If $A$ is diagonal, then $L_2=0$, and hence $s(L)\leq dN$. This completes the proof.
\end{proof}

\begin{remark}[Extension to more general boundary conditions]
\label{rem:general-boundary}

The favorable sparsity and conditioning mechanisms established above are not specific to the homogeneous Dirichlet condition. The essential ingredient is the availability of an $L^2(-1,1)$-orthonormal polynomial basis adapted to the prescribed one-dimensional boundary conditions. For any such basis, the mass matrix remains $M=I$, and the bulk terms of the Galerkin matrix retain an analogous tensor-product representation involving only one-dimensional operator matrices. Consequently, even if these one-dimensional matrices are dense, the sparsity of the multidimensional matrix remains polynomial in $N$ and $d$.

Moreover, provided that the corresponding variational problem is coercive and the polynomial approximation space satisfies appropriate inverse estimates, the same energy argument as above yields a polynomial condition-number bound. If, in addition, the approximation space retains spectral approximation properties, the main mechanism of the present spectral-Galerkin method extends to more general boundary conditions.
\end{remark}

\subsection{Rotated Jacobi-based bases}
\label{sec:rotated-basis}

We briefly discuss a basis rotation technique that can further reduce the sparsity of the coefficient matrix. This rotation does not change the approximation space, the approximation accuracy, or the condition number. Let
\begin{equation}
    U=\sum_{i,k\in[N]_0}U_{ik}\ket{i}\bra{k}\in\mathbb{R}^{N\times N}
\end{equation}
be an orthogonal matrix, and define a new one-dimensional basis by
\begin{equation}
\tilde\phi_k(x):=\sum_{i\in[N]_0}U_{ik}\phi_i(x),
\qquad k\in[N]_0.
\label{eq:rotated-basis-1d}
\end{equation}
Then
\begin{equation}
    \operatorname{span}\{\tilde\phi_k\}_{k\in[N]_0}=\operatorname{span}\{\phi_k\}_{k\in[N]_0}.
\end{equation}

Thus the approximation space is unchanged, and hence the approximation accuracy remains the same.

Let $\tilde M$, $\tilde S$, and $\tilde T$ denote the mass matrix, the stiffness matrix, and the mixed derivative matrix associated with the rotated basis. A direct computation gives
\begin{equation}
\tilde M=U^{\mathrm T}MU,
\qquad
\tilde S=U^{\mathrm T}SU,
\qquad
\tilde T=U^{\mathrm T}TU.
\label{eq:rotated-MST}
\end{equation}
Since the Jacobi-based basis \eqref{eq:new-basis-1d} satisfies $M=I$, we still have $\tilde M=I$. For the corresponding $d$-dimensional tensor-product basis, using the same rotation in each coordinate gives
\begin{equation}
\tilde L=(U^{\otimes d})^{\mathrm T}L\,U^{\otimes d}.
\label{eq:rotated-L}
\end{equation}
Therefore, $\tilde L$ is orthogonally similar to $L$, and hence
\begin{equation}
    \kappa(\tilde L)=\kappa(L).
\end{equation}
Thus the rotation does not affect the condition number.

We now consider its effect on sparsity. Since $T$ is real skew-symmetric, by the canonical form of real skew-symmetric matrices \cite{HJ13}, there exists an orthogonal matrix $U_T$ such that
\begin{equation}
U_T^{\mathrm T}TU_T=\Lambda_T,
\end{equation}
where
\begin{equation}
\Lambda_T
=
\operatorname{diag}\left(
\begin{bmatrix}
0 & \lambda_1\\
-\lambda_1 & 0
\end{bmatrix},
\ldots,
\begin{bmatrix}
0 & \lambda_{N/2}\\
-\lambda_{N/2} & 0
\end{bmatrix}
\right),
\qquad \lambda_j>0.
\label{eq:T-block-diagonal}
\end{equation}
Taking $U=U_T$, we have $\tilde T=\Lambda_T$, and the rotated coefficient matrix can be written as
\begin{equation}
    \tilde L=\tilde L_1+\tilde L_2,
\end{equation}
where
\begin{align}
\tilde L_1
&=
\sum_{p=1}^d A_{pp}
\bigl(I^{\otimes(p-1)}\otimes \tilde S\otimes I^{\otimes(d-p)}\bigr),
\label{eq:rotated-L1}\\
\tilde L_2
&=
-2\sum_{1\leq p<q\leq d}A_{pq}
\bigl(I^{\otimes(p-1)}\otimes \Lambda_T\otimes I^{\otimes(q-p-1)}
\otimes \Lambda_T\otimes I^{\otimes(d-q)}\bigr).
\label{eq:rotated-L2}
\end{align}
For the sparsity estimate, each term in $\widetilde L_1$ has at most $N$
nonzero entries in each row and column, and there are $d$ such terms.
Thus $s(\widetilde L_1)\le dN$. Similarly, each term in $\widetilde L_2$ has at most one nonzero entry in each row and column, because $\Lambda_T$ has at most one nonzero entry in each row and column. Since there are $(d^2-d)/2$ such terms, we have $s(\widetilde L_2)\le (d^2-d)/2$. Therefore,
\begin{equation}
s(\widetilde L)
\le s(\widetilde L_1)+s(\widetilde L_2)
\le dN+(d^2-d)/2
=
O(dN+d^2).
\label{eq:rotated-sparsity-general}
\end{equation}

In the special case where $A$ is diagonal, the off-diagonal part vanishes. In this case, one may instead diagonalize the symmetric stiffness matrix $S$. Let $U_S$ be an orthogonal matrix such that
\begin{equation}
U_S^{\mathrm T}SU_S=\Lambda_S.
\end{equation}
Taking $U=U_S$, the rotated coefficient matrix becomes
\begin{equation}
\tilde L
=
\sum_{p=1}^d A_{pp}
\bigl(I^{\otimes(p-1)}\otimes \Lambda_S\otimes I^{\otimes(d-p)}\bigr).
\label{eq:rotated-L-diagonal-A}
\end{equation}
Thus $\tilde L$ is diagonal. In particular,
\begin{equation}
s(\tilde L)=1.
\label{eq:rotated-sparsity-diagonal}
\end{equation}


The preceding discussion can be summarized as the following lemma.

\begin{lemma}\label{lem:rotated-jacobi-basis-properties}
Consider an instance of \prb{pde}, namely, the $d$-dimensional second-order linear elliptic boundary value problem with homogeneous Dirichlet boundary conditions \eqref{eq:model-problem}. Let the approximation space be generated by the tensor-product basis associated with the rotated one-dimensional basis \eqref{eq:rotated-basis-1d}. Then the rotated basis has the same approximation property and the same condition number bound as in \lem{jacobi-basis-properties}. More precisely:
\begin{enumerate}
    \item If the exact solution is analytic and satisfies the uniform regularity condition \eqref{eq:Cd}, then the same approximation estimate as in item 1 of \lem{jacobi-basis-properties} holds. In particular, achieving approximation accuracy $\epsilon$ in the $L^2(\Omega)$ norm requires the same asymptotic choice of $N$ as in \eqref{eq:main-N-eps}.
    
    \item The condition number of the rotated coefficient matrix satisfies
    \begin{equation}
        \kappa(\tilde L)=\kappa(L)\leq \kappa(A)\kappa(S)=O(\kappa(A)N^4).
    \end{equation}
    
    \item If $U=U_T$ is chosen so that $U_T^{\mathrm T}TU_T=\Lambda_T$ has the block diagonal form \eqref{eq:T-block-diagonal}, then the sparsity of the rotated coefficient matrix satisfies
    \begin{equation}
        s(\tilde L)=O(dN+d^2).
    \end{equation}
    In particular, if $A$ is diagonal and $U=U_S$ is chosen so that $U_S^{\mathrm T}SU_S=\Lambda_S$, then $\tilde L$ is diagonal, and hence
    \begin{equation}
        s(\tilde L)=1.
    \end{equation}
\end{enumerate}
\end{lemma}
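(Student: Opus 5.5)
The argument has three parts, matching the three items of the statement. Each reduces to something already established in \lem{jacobi-basis-properties} together with the structural identities \eqref{eq:rotated-MST}--\eqref{eq:rotated-L} derived above.

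For item 1, we first observe that the orthogonal matrix $U$ is invertible. Hence \eqref{eq:rotated-basis-1d} gives $\operatorname{span}\{\tilde\phi_k\}=\operatorname{span}\{\phi_k\}$, and the tensor-product space $V_N$ generated by $\tilde\Phi_{\boldsymbol k}=\prod_j\tilde\phi_{k_j}(x_j)$ is the same as before. To see this, note that $\tilde\Phi_{\boldsymbol k}=\sum_{\boldsymbol i}(U^{\otimes d})_{\boldsymbol i\boldsymbol k}\Phi_{\boldsymbol i}$, and $U^{\otimes d}$ is orthogonal. The Galerkin solution $u_N$ is defined intrinsically by \eqref{eq:galerkin-discrete} on $V_N$, so it is independent of the basis. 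Only its coefficient vector changes, namely $\tilde{\boldsymbol c}=(U^{\otimes d})^{\mathrm T}\boldsymbol c$. Therefore the error bound of \cor{galerkin-error-analytic}, and hence the choice \eqref{eq:main-N-eps}, carry over verbatim.

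For item 2, I would verify \eqref{eq:rotated-L} carefully. We substitute $\tilde M=I$, $\tilde S=U^{\mathrm T}SU$ and $\tilde T=U^{\mathrm T}TU$ into \eqref{eq:L1-final}--\eqref{eq:L2-final}. By the mixed-product property of Kronecker products, each term, for example $I^{\otimes(p-1)}\otimes\tilde S\otimes I^{\otimes(d-p)}$, equals $(U^{\otimes d})^{\mathrm T}\bigl(I^{\otimes(p-1)}\otimes S\otimes I^{\otimes(d-p)}\bigr)U^{\otimes d}$, and the $TT$ terms transform in the same way. Summing the terms gives $\tilde L=(U^{\otimes d})^{\mathrm T}LU^{\otimes d}$. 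Orthogonal similarity preserves the spectrum of the symmetric matrix $L$, so $\kappa(\tilde L)=\kappa(L)$, and the bound $\kappa(L)\le\kappa(A)\kappa(S)=O(\kappa(A)N^4)$ then follows from item 2 of \lem{jacobi-basis-properties}.

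For item 3, we use the real skew-symmetric canonical form with $U=U_T$. Then $\Lambda_T$ is block diagonal with $2\times2$ antidiagonal blocks, possibly with zero blocks or a trailing zero if $T$ is singular; in every case it has at most one nonzero entry per row and column. Consequently each Kronecker product in \eqref{eq:rotated-L2} of $\Lambda_T$, $\Lambda_T$ and identities has at most one nonzero entry per row and column. Each term in \eqref{eq:rotated-L1} has at most $N$, since $\tilde S$ is at worst dense. Subadditivity of sparsity under matrix sums then gives $s(\tilde L)\le dN+d(d-1)/2$. When $A$ is diagonal, $L_2=0$, so choosing $U=U_S$ makes every term in \eqref{eq:rotated-L-diagonal-A} diagonal, which gives $s(\tilde L)=1$.

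The only step requiring real care is the Kronecker-product bookkeeping showing that a single common rotation in each coordinate conjugates both $L_1$ and $L_2$ simultaneously. That is routine via $(U^{\mathrm T}XU)\otimes(U^{\mathrm T}YU)=(U\otimes U)^{\mathrm T}(X\otimes Y)(U\otimes U)$, so I expect no genuine obstacle.
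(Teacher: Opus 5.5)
Your proposal is correct and follows essentially the same route as the paper's discussion preceding the lemma. That route has three steps: the span is unchanged, so the Galerkin solution and its error bound are the same; the orthogonal similarity $\tilde L=(U^{\otimes d})^{\mathrm T}LU^{\otimes d}$ gives the condition number; and the skew-symmetric canonical form of $T$ (or the diagonalization of $S$ when $A$ is diagonal) gives $s(\tilde L)\le dN+(d^2-d)/2$, respectively $s(\tilde L)=1$. Your explicit Kronecker mixed-product bookkeeping and your allowance for possible zero blocks in $\Lambda_T$ are harmless refinements of what the paper states more tersely; the paper simply takes all $\lambda_j>0$.
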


\subsection{Classical complexity of the spectral-Galerkin method}
\label{sec:classical complexity}

For comparison, we briefly discuss the classical complexity of the spectral-Galerkin method based on both the Jacobi-based basis \eqref{eq:new-basis-1d} and its rotated counterpart \eqref{eq:rotated-basis-1d}. Let $u$ be the exact solution of \eqref{eq:model-problem}, and let
\begin{equation}
u_N(\boldsymbol{x})=\sum_{\boldsymbol{k}\in[N]_0^d} c_{\boldsymbol{k}}\Phi_{\boldsymbol{k}}(\boldsymbol{x})
\end{equation}
be the Galerkin approximation determined by the linear system
\begin{equation}
L\boldsymbol{c}=\boldsymbol{f}.
\end{equation}
Suppose that this linear system is solved approximately by the conjugate gradient method, yielding an approximate coefficient vector
\begin{equation}
\tilde{\boldsymbol{c}}=\sum_{\boldsymbol{k}\in[N]_0^d}\tilde c_{\boldsymbol{k}}\ket{\boldsymbol{k}},
\end{equation}
and hence the numerical solution
\begin{equation}
\tilde u_N(\boldsymbol{x})=\sum_{\boldsymbol{k}\in[N]_0^d}\tilde c_{\boldsymbol{k}}\Phi_{\boldsymbol{k}}(\boldsymbol{x}).
\end{equation}
Then the total $L^2(\Omega)$ error satisfies
\begin{equation}
\|u-\tilde u_N\|_{L^2(\Omega)}
\le
\|u-u_N\|_{L^2(\Omega)}
+
\|u_N-\tilde u_N\|_{L^2(\Omega)}
=:\epsilon_{\mathrm{SG}}+\epsilon_{\mathrm{CG}},
\label{eq:classical-error-split}
\end{equation}
where $\epsilon_{\mathrm{SG}}:=\|u-u_N\|_{L^2(\Omega)}$ is the spectral-Galerkin discretization error, and
\begin{equation}
\epsilon_{\mathrm{CG}}
:=
\|u_N-\tilde u_N\|_{L^2(\Omega)}
=
\|\boldsymbol{c}-\tilde{\boldsymbol{c}}\|_2
\end{equation}
is the error of the conjugate gradient method.

Under the standard sparse-matrix cost model for conjugate gradient, solving the $N^d\times N^d$ linear system to coefficient accuracy $\epsilon_{\mathrm{CG}}$ requires
\begin{equation}
O\left(
N^d\,s(L)\,\kappa(L)^{\frac12}
\log\left(
\frac{1}{\sqrt{\lambda_{\min}(L)}\,\epsilon_{\mathrm{CG}}}
\right)
\right)
\label{eq:cg-cost}
\end{equation}
arithmetic operations. Applying \lem{jacobi-basis-properties} and \lem{rotated-jacobi-basis-properties}, and choosing $\epsilon_{\mathrm{SG}},\epsilon_{\mathrm{CG}}=O(\epsilon)$, we obtain the following classical complexity estimates. Here, for simplicity, we suppress the $\log\log$ denominator in \eqref{eq:main-N-eps}, lower-order logarithmic factors coming from the CG tolerance, and inessential factors inside the logarithm, and take
\begin{equation}
    N=O\left(\log\left(C_d\kappa(A)/\epsilon\right)\right).
\end{equation}

\begin{theorem}\label{thm:classical-complexity}
Consider an instance of \prb{pde}, namely, the $d$-dimensional second-order linear elliptic boundary value problem with homogeneous Dirichlet boundary conditions \eqref{eq:model-problem}. Assume that the exact solution is analytic and satisfies the uniform regularity condition \eqref{eq:Cd}. The classical complexity of achieving overall $L^2(\Omega)$ accuracy $\epsilon$ by the conjugate gradient method satisfies the following bounds.

\begin{enumerate}
    \item For the Jacobi-based basis \eqref{eq:new-basis-1d}, the complexity is
    \begin{equation}
    O\left(
    \kappa(A)^{\frac12}\,
    d^2\,
    \left(\log(C_d\kappa(A)/\epsilon)\right)^{d+5}
    \right).
    \label{eq:classical-complexity-original}
    \end{equation}
    If $A$ is diagonal, then the bound improves to
    \begin{equation}
    O\left(
    \kappa(A)^{\frac12}\,
    d\,
    \left(\log(C_d\kappa(A)/\epsilon)\right)^{d+4}
    \right).
    \label{eq:classical-complexity-original-diagonal}
    \end{equation}

    \item For the rotated Jacobi-based basis \eqref{eq:rotated-basis-1d}, the complexity is
    \begin{equation}
    O\left(
    \kappa(A)^{\frac12}
    \left(
    d\log(C_d\kappa(A)/\epsilon)
    +
    d^2
    \right)\log(C_d\kappa(A)/\epsilon)^{d+3}
    \right).
    \label{eq:classical-complexity-rotated}
    \end{equation}
    If $A$ is diagonal, then the rotated coefficient matrix is diagonal, and the CG-based complexity becomes
    \begin{equation}
    O\left(
    \kappa(A)^{\frac12}
    \left(\log(C_d\kappa(A)/\epsilon)\right)^{d+3}
    \right).
    \label{eq:classical-complexity-rotated-diagonal}
    \end{equation}
\end{enumerate}

Here $C_d$ is defined in \eqref{eq:Cd}. In particular, for the Poisson equation $A=I_d$, the diagonal-case bounds apply, and the factors involving $\kappa(A)$ disappear.
\end{theorem}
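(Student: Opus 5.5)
The proof is a direct substitution of the bounds from \lem{jacobi-basis-properties} and \lem{rotated-jacobi-basis-properties} into the CG cost model \eqref{eq:cg-cost}, with $N=O(\log(C_d\kappa(A)/\epsilon))$.

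\textbf{Error split.} Using \eqref{eq:classical-error-split}, we make each of $\epsilon_{\mathrm{SG}}$ and $\epsilon_{\mathrm{CG}}$ at most $\epsilon/2$.
\begin{itemize}
\item For $\epsilon_{\mathrm{SG}}$, item~1 of \lem{jacobi-basis-properties} with the stated simplification (dropping the $\log\log$ denominator and absorbing $\kappa(A)^{1/2}d^{-1/2}\le\kappa(A)$ into the argument of the logarithm) gives $N=O(\log(C_d\kappa(A)/\epsilon))$. The approximation space is unchanged under rotation, so the same $N$ works for the rotated basis.
\item For $\epsilon_{\mathrm{CG}}$, orthonormality of $\{\Phi_{\boldsymbol k}\}$ gives $\epsilon_{\mathrm{CG}}=\|\boldsymbol c-\tilde{\boldsymbol c}\|_2$. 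We request this coefficient accuracy directly from CG.
\end{itemize}

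\textbf{The logarithmic factor.} The CG logarithm is $\log(1/(\sqrt{\lambda_{\min}(L)}\,\epsilon))$. By \eqref{eq:lambda of L} and \lem{S-cond-N4}, $\lambda_{\min}(L)\ge d\lambda_{\min}(A)\pi^2/4$. The resulting factor is therefore $O(\log(1/\epsilon)+\log(1/(d\lambda_{\min}(A))))$, which is $O(\log(C_d\kappa(A)/\epsilon))$ up to the inessential factors inside the logarithm that the statement explicitly suppresses. This is the one place where some care is needed: a scaling of $A$ could make $\lambda_{\min}(A)$ small, and we absorb this into the suppressed factors per the convention stated before the theorem. This single logarithm contributes one additional power of $\log(C_d\kappa(A)/\epsilon)$.

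\textbf{Per-case multiplication.} In each case the cost is $N^d\cdot s(L)\cdot\kappa(L)^{1/2}\cdot\log(C_d\kappa(A)/\epsilon)$, with $N=O(\log(C_d\kappa(A)/\epsilon))$ throughout.
\begin{itemize}
\item \emph{Unrotated, general $A$.} We have $\kappa(L)^{1/2}=O(\kappa(A)^{1/2}N^2)$ and $s(L)=O(d^2N^2)$. The total is $\kappa(A)^{1/2}d^2N^{d+4}$ times one logarithm, which is $O(\kappa(A)^{1/2}d^2\log^{d+5})$.
\item \emph{Unrotated, diagonal $A$.} The sparsity drops to $s(L)=O(dN)$, giving $\kappa(A)^{1/2}d\log^{d+4}$.
\item \emph{Rotated, general $A$.} \lem{rotated-jacobi-basis-properties} gives $\kappa(\tilde L)=\kappa(L)$ and $s(\tilde L)=O(dN+d^2)$. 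The cost is $\kappa(A)^{1/2}N^{d+2}(dN+d^2)\log$, which equals \eqref{eq:classical-complexity-rotated} with exponent $d+3$.
\item \emph{Rotated, diagonal $A$.} Here $s(\tilde L)=1$. The same count gives $\kappa(A)^{1/2}N^{d+2}\log=O(\kappa(A)^{1/2}\log^{d+3})$. Since $\tilde L$ is diagonal, one could even solve directly in $O(N^d)$ operations, but we keep the CG bound as stated.
\end{itemize}

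\textbf{Poisson case.} For $A=I_d$ we have $\kappa(A)=1$, so the factors involving $\kappa(A)$ disappear and the diagonal-case bounds apply.
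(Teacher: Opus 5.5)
Your proposal is correct and follows the paper's argument. The paper also substitutes $N=O(\log(C_d\kappa(A)/\epsilon))$, the sparsity bounds, and $\kappa(L)^{1/2}=O(\kappa(A)^{1/2}N^2)$ into the CG cost model \eqref{eq:cg-cost}, counts the CG logarithm as one extra power of $\log(C_d\kappa(A)/\epsilon)$ (which yields the exponents $d+5$, $d+4$, $d+3$, $d+3$), and absorbs the $\lambda_{\min}(L)$-dependence of that logarithm into the suppressed inessential factors, just as you do.
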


\begin{remark}
For consistency, we use the conjugate gradient method throughout the classical complexity analysis. When $A$ is diagonal, however, the rotated Galerkin matrix is diagonal, so the linear system can be solved directly by componentwise division in $O(N^d)$ operations. Thus, the CG bound above is not sharp in this special case.
\end{remark}
\section{Quantum spectral-Galerkin method}
\label{sec:quantum spectral-Galerkin method}

In this section, we present a quantum algorithm for solving the spectral-Galerkin linear system \eqref{eq:linear-system}. The key idea is to use a quantum linear systems algorithm to prepare the quantum state corresponding to the Galerkin coefficient vector, and then estimate the target linear functional from that state by suitable post-processing. Throughout this section, we work in the sparse-access model of \lem{qlsa} and assume that the corresponding Galerkin matrix admits efficient sparse access.

We first describe, in \sec{state preparation}, how to prepare the input state associated with the right-hand side of \eqref{eq:linear-system} using the Grover--Rudolph state-preparation method. Next, in \sec{estimate the linear functional}, we explain how to estimate the linear functional $\langle \rho,u\rangle$ from the quantum output and analyze the resulting complexity. Finally, in \sec{comparison with classical complexity}, we compare the complexity of the quantum spectral-Galerkin method with that of its classical counterpart.

\subsection{State preparation}
\label{sec:state preparation}

We begin by discussing how to prepare the quantum state corresponding to the right-hand side of the linear system \eqref{eq:linear-system}. Our goal is to prepare the normalized state
\begin{equation}
\ket{f}
=
\frac{1}{\|\boldsymbol{f}\|}
\sum_{\boldsymbol{l}\in [N]_0^d}
\langle f,\Phi_{\boldsymbol{l}}\rangle
\ket{\boldsymbol{l}}.
\label{eq:rhs-state}
\end{equation}
To this end, we use the state-preparation method of Grover and Rudolph \cite{GR02}.

\begin{lemma}
\label{lem:grover-rudolph}
Let
\begin{equation}
\ket{b}
=
\frac{1}{\|\boldsymbol{b}\|}
\sum_{j=0}^{\mathcal{N}-1} b_j \ket{j},
\qquad
\boldsymbol{b}=(b_0,\dots,b_{\mathcal{N}-1})^\mathrm{T}\in\mathbb{R}^{\mathcal{N}},
\end{equation}
where $\mathcal{N}=2^m$. For each bit string $x\in\{0,1\}^k$, with $k=0,1,\dots,m$, define
\begin{equation}
W_k(x)
:=
\sum_{y\in\{0,1\}^{\,m-k}}
|b_{xy}|^2,
\label{eq:Wk-definition}
\end{equation}
where $xy$ denotes the concatenation of the prefix $x$ and the suffix $y$. For $k=0$, we write $W_0(\emptyset)=\|\boldsymbol{b}\|^2$. Assume that the following conditions hold:
\begin{enumerate}
    \item For every $x\in\{0,1\}^{k}$ and every $k=0,1,\dots,m$, one can efficiently compute $W_k(x)$.
    \item The sign of each coefficient $b_j$ can be efficiently determined. More precisely, there exists an efficiently implementable diagonal unitary $U_b^{\pm}$ such that for all $j\in[\mathcal{N}]_0$,
    \begin{equation}
        U_b^{\pm}\ket{j}=\sgn(b_j)\ket{j},
        \qquad
        \sgn(x)=
        \begin{cases}
        1, & x\ge 0,\\
        -1, & x<0.
        \end{cases}
    \end{equation}
\end{enumerate}
Then $\ket{b}$ can be prepared using $O(m)$ state-preparation stages.
\end{lemma}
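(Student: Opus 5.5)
The plan is the standard Grover--Rudolph binary-tree construction. We build the amplitudes qubit by qubit, from the most significant index bit to the least significant, and fix the signs at the end. We start from $\ket{0}^{\otimes m}$. At stage $k$, for $k=0,1,\dots,m-1$, the first $k$ qubits are already in the superposition $\sum_{x\in\{0,1\}^k}\sqrt{W_k(x)/\|\boldsymbol{b}\|^2}\,\ket{x}$, with the remaining qubits still in $\ket{0}$. We then apply to qubit $k+1$ a rotation controlled on the prefix register $x$:
\begin{equation}
\ket{x}\ket{0}\mapsto \ket{x}\bigl(\cos\theta_x\ket{0}+\sin\theta_x\ket{1}\bigr),
\qquad
\theta_x:=\arccos\sqrt{\frac{W_{k+1}(x0)}{W_k(x)}},
\end{equation}
and set $\theta_x:=0$ if $W_k(x)=0$.

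The identity that drives the whole construction is $W_k(x)=W_{k+1}(x0)+W_{k+1}(x1)$, which follows directly from the definition \eqref{eq:Wk-definition}. Using it, an induction on $k$ shows that after stage $k$ the amplitude on each length-$(k+1)$ prefix $x\sigma$ is
\begin{equation}
\sqrt{W_k(x)}\,\sqrt{W_{k+1}(x\sigma)/W_k(x)}\,/\|\boldsymbol{b}\|=\sqrt{W_{k+1}(x\sigma)}/\|\boldsymbol{b}\|.
\end{equation}
After $m$ stages, $W_m(j)=|b_j|^2$, so the state is $\sum_j |b_j|\ket{j}/\|\boldsymbol{b}\|$.

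To implement each controlled rotation, we use assumption~1. We compute $W_k(x)$ and $W_{k+1}(x0)$ coherently into an ancilla register, then compute $\theta_x$ from them. Next we apply $R_y(2\theta_x)$ to qubit $k+1$, controlled bitwise on the $\theta$ register. Finally we uncompute the ancillas so the state carries no garbage.

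After the last stage, a single application of $U_b^{\pm}$ (assumption~2) multiplies each basis state by $\sgn(b_j)$. This turns $|b_j|$ into $b_j$ and yields $\ket{b}$. In total there are $m$ rotation stages and one sign stage, giving $O(m)$ stages.

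The main obstacle is making the uncomputation clean and handling finite precision. The ancilla must be fully disentangled, which requires running the reversible arithmetic backwards exactly. With finite-precision angles, each stage introduces a small error, and these errors add up over the $m$ stages, so the overall state error is at most $m$ times the per-angle error. Since the statement only counts stages, we will note this accumulation but leave it to the later complexity accounting.
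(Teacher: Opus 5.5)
Your proposal is correct and is the standard Grover--Rudolph binary-tree construction: controlled $R_y$ rotations with $\cos\theta_x=\sqrt{W_{k+1}(x0)/W_k(x)}$, justified by $W_k(x)=W_{k+1}(x0)+W_{k+1}(x1)$, followed by one application of $U_b^{\pm}$. The paper does not prove this lemma; it cites Grover and Rudolph, and your argument, including the $W_k(x)=0$ case and the final sign correction, fills in the details of that cited construction.
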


By \lem{grover-rudolph}, the efficient preparation of $\ket{f}$ requires the following assumptions: 

\begin{enumerate}
    \item One can efficiently compute arbitrary partial sums of the squared magnitudes of the coefficients of $f$, namely,
    \begin{equation}
        S(\boldsymbol{a},\boldsymbol{b})
        :=
        \sum_{\boldsymbol{a}\lesssim \boldsymbol{l}\lesssim \boldsymbol{b}}
        \bigl|\langle f,\Phi_{\boldsymbol{l}}\rangle\bigr|^2,
        \qquad
        \forall\,\boldsymbol{a},\boldsymbol{b}\in [N]_0^d,
        \label{eq:partial-sum-assumption}
    \end{equation}
    where $\boldsymbol{a}\lesssim \boldsymbol{b}$ means that the integer encoded by $\boldsymbol{a}$ is no greater than that encoded by $\boldsymbol{b}$.

    \item The sign of each coefficient can be efficiently determined. More precisely, there exists an efficiently implementable diagonal unitary $U_f^{\pm}$ such that
    \begin{equation}
        U_f^{\pm}\ket{\boldsymbol{l}}
        =
        \sgn\bigl(\langle f,\Phi_{\boldsymbol{l}}\rangle\bigr)\ket{\boldsymbol{l}},
        \qquad
        \forall\,\boldsymbol{l}\in [N]_0^d.
        \label{eq:sign-oracle-f}
    \end{equation}
\end{enumerate}

As in \cite{MP16}, for a general input function $f$, the first state-preparation condition is challenging to satisfy efficiently. In particular, even if each individual coefficient can be computed efficiently, evaluating the corresponding partial sums may still be difficult in general. This assumption is typically justified only in special structured settings, for example when the coefficients admit a tensor-product form, or when the coefficient vector is sufficiently sparse and its nonzero entries can be efficiently located and summed.

For simplicity in the subsequent analysis, we henceforth assume that the state $\ket{f}$ can be prepared perfectly in time $\poly(\log(N^d))$. Under this state-preparation assumption and the sparse-access assumption stated above, \lem{qlsa}, together with the spectral-convergence, sparsity, and condition-number bounds established in \sec{jacobi-spectral-Galerkin method}, implies that an $O(\epsilon)$-approximation to the normalized Galerkin coefficient state $\ket{c}$ can be prepared with complexity
\begin{equation}
    \poly(d,\log(1/\epsilon)),
\end{equation}
where we suppress the dependence on $\kappa(A)$, $C_d$, and lower-order logarithmic factors.

\subsection{Estimating the linear functional and its complexity}
\label{sec:estimate the linear functional}

We now explain how to estimate $\langle \rho,u\rangle$ for a given $\rho\in L^2(\Omega)$. Let
\begin{equation}
\ket{c}
=
\frac{1}{\|\boldsymbol{c}\|}
\sum_{\boldsymbol{k}\in [N]_0^d}
c_{\boldsymbol{k}}\ket{\boldsymbol{k}},
\qquad
\ket{\rho}
=
\frac{1}{\alpha}
\sum_{\boldsymbol{k}\in [N]_0^d}
\langle \rho,\Phi_{\boldsymbol{k}}\rangle \ket{\boldsymbol{k}},
\label{eq:state-c-rho}
\end{equation}
where $\alpha:=\bigl(\sum_{\boldsymbol{k}\in [N]_0^d}|\langle \rho,\Phi_{\boldsymbol{k}}\rangle|^2\bigr)^{1/2}$. The state $\ket{\rho}$ can be prepared by the same method as in \sec{state preparation}, and we assume that it can be prepared perfectly in time $\poly(\log(N^d))$. Then
\begin{equation}
\langle \rho,u\rangle\approx\langle \rho,u_N\rangle=\sum_{\boldsymbol{k}\in[N]_0^d}c_{\boldsymbol{k}}\langle\rho,\Phi_{\boldsymbol{k}}\rangle=\alpha \|\boldsymbol{c}\| \langle \rho | c \rangle\approx\alpha C \langle \rho | \tilde{c} \rangle\approx\alpha C R,
\label{eq:final-estimator}
\end{equation}
where $\ket{\tilde{c}}\approx\ket{c}$ is the approximate output state produced by the QLSA, $C\approx\|\boldsymbol{c}\|$ is the output of the norm-estimation procedure, and $R\approx\langle \rho | \tilde{c} \rangle$ is obtained via the Hadamard test \cite{AJL09}. The following theorem gives the running time for estimating the linear functional $\langle \rho,u\rangle$ via the estimator $\alpha C R$.

\begin{theorem}
\label{thm:linear-functional-estimation}
Consider an instance of \prb{pde}, namely, the $d$-dimensional second-order linear elliptic boundary value problem with homogeneous Dirichlet boundary conditions \eqref{eq:model-problem}. Assume that the exact solution is analytic and satisfies the uniform regularity condition \eqref{eq:Cd}. Then there exists a quantum algorithm that outputs an estimate of the linear functional $\langle \rho,u\rangle$ with additive error at most $\epsilon\|\rho\|_{L^2(\Omega)}$, together with a flag indicating success, and succeeds with probability bounded below by a positive constant. Under the standing assumptions that the states $\ket{f}$ and $\ket{\rho}$ can be prepared perfectly in time $\poly(\log(N^d))$, the running time satisfies the following bounds.
\begin{enumerate}
    \item For the Jacobi-based basis \eqref{eq:new-basis-1d}, the running time is
    \begin{equation}
    O\!\left(
    \frac{d^2\kappa(A)\|u\|}{\epsilon}
    \poly\!\left(\log(C_d\kappa(A)/\epsilon)\right)
    \poly\!\left(
    d\log\left(C_d d\kappa(A)\|u\|/\epsilon\right)
    \right)
    \right).
    \label{eq:linear-functional-runtime-original}
    \end{equation}
    If $A$ is diagonal, then the bound improves to
    \begin{equation}
    O\!\left(
    \frac{d\kappa(A)\|u\|}{\epsilon}
    \poly\!\left(\log(C_d\kappa(A)/\epsilon)\right)
    \poly\!\left(
    d\log\left(C_d d\kappa(A)\|u\|/\epsilon\right)
    \right)
    \right).
    \label{eq:linear-functional-runtime-original-diagonal}
    \end{equation}

    \item For the rotated Jacobi-based basis \eqref{eq:rotated-basis-1d}, the running time is
    \begin{equation}
    O\!\left(
    \frac{\bigl(d\log(C_d\kappa(A)/\epsilon)+d^2\bigr)\kappa(A)\|u\|}{\epsilon}
    \poly\!\left(\log(C_d\kappa(A)/\epsilon)\right)
    \poly\!\left(
    d\log\left(C_d d\kappa(A)\|u\|/\epsilon\right)
    \right)
    \right).
    \label{eq:linear-functional-runtime-rotated}
    \end{equation}
    If $A$ is diagonal, then, with the choice of rotation described in \lem{rotated-jacobi-basis-properties}, the bound improves to
    \begin{equation}
    O\!\left(
    \frac{\kappa(A)\|u\|}{\epsilon}
    \poly\!\left(\log(C_d\kappa(A)/\epsilon)\right)
    \poly\!\left(
    d\log\left(C_d \kappa(A)\|u\|/\epsilon\right)
    \right)
    \right).
    \label{eq:linear-functional-runtime-rotated-diagonal}
    \end{equation}
\end{enumerate}

Here $\|u\|=\|u\|_{L^2(\Omega)}$, $C_d$ is defined in \eqref{eq:Cd}. In particular, for the Poisson equation $A=I_d$, the diagonal-case bounds apply, and the factors involving $\kappa(A)$ disappear.
\end{theorem}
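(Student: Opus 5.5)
The plan is to split the total error of the estimator $\alpha C R$ in \eqref{eq:final-estimator} into four contributions and allot each a fraction of $\epsilon\|\rho\|_{L^2(\Omega)}$. The contributions are:
\begin{itemize}
\item the spectral-Galerkin discretization error $|\langle\rho,u-u_N\rangle|$;
\item the QLSA state error $\|\ket{\tilde c}-\ket{c}\|$;
\item the norm-estimation error $|C-\|\boldsymbol{c}\||$;
\item the Hadamard-test error $|R-\langle\rho|\tilde c\rangle|$.
\end{itemize}

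\textbf{Discretization error.} By the Cauchy--Schwarz remark in \sec{model problem}, this term is at most $\|\rho\|\,\|u-u_N\|$. Item 1 of \lem{jacobi-basis-properties} (or \lem{rotated-jacobi-basis-properties}) then bounds it by $\epsilon\|\rho\|/4$ once $N=O(\log(C_d\kappa(A)/\epsilon))$.

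\textbf{Relating the coefficient vectors to $L^2$ quantities.} Because the basis is $L^2$-orthonormal, we have $\|\boldsymbol{c}\|_2=\|u_N\|_{L^2}\le\|u\|+\epsilon$. Also $\alpha=\|\Pi_N\rho\|\le\|\rho\|$, where $\Pi_N$ is the $L^2$ projection onto $V_N$.

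\textbf{Propagating the remaining errors.} Write $\alpha CR-\alpha\|\boldsymbol{c}\|\langle\rho|c\rangle$ as a telescoping sum. Each remaining piece is bounded by $\|\rho\|$ times one of:
\begin{itemize}
\item $\|\boldsymbol{c}\|\,\|\ket{\tilde c}-\ket{c}\|$;
\item $|C-\|\boldsymbol{c}\||$;
\item $\|\boldsymbol{c}\|\,|R-\langle\rho|\tilde c\rangle|$.
\end{itemize}
It therefore suffices to require the following:
\begin{itemize}
\item QLSA precision $\delta=O(\epsilon/\|u\|)$, which enters only polylogarithmically through \lem{qlsa};
\item norm estimation to additive error $O(\epsilon)$;
\item Hadamard-test precision $O(\epsilon/\|u\|)$.
\end{itemize}

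\textbf{Cost of one QLSA call.} By \lem{qlsa}, one call costs $O(s\kappa\,\poly\log(s\kappa\|u\|/\epsilon))$ queries. We use $\kappa(L)=O(\kappa(A)N^4)$ together with $s(L)=O(d^2N^2)$, $O(dN)$, $O(dN+d^2)$ or $1$ in the respective cases. Since $N$ is polylogarithmic, the powers of $N$ are absorbed into $\poly(\log(C_d\kappa(A)/\epsilon))$. Each query and the state preparations of $\ket f,\ket\rho$ cost $\poly(\log N^d)=\poly(d\log(\cdot))$, which gives the last factor.

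\textbf{Norm estimation (expected main obstacle).} The QLSA outputs only a normalized state, so $\|\boldsymbol{c}\|$ must be recovered separately. I would use the standard trick from \cite{MP16}/\cite{CKS17}: the success amplitude of the QLSA postselection is proportional to $\|H^{-1}b\|/(\|H^{-1}\|\,\|b\|)$. Here $\|b\|=\|\boldsymbol f\|$ must be known or estimated; I would assume it is available classically from the partial-sum data $W_0$ in \lem{grover-rudolph}. Amplitude estimation then yields $\|\boldsymbol c\|$ to relative precision $\eta$ using $O(1/\eta)$ repetitions. Choosing $\eta=\epsilon/\|u\|$ produces the $\|u\|/\epsilon$ factor. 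The amplitude being estimated is of order $1/\kappa$, which contributes the extra $\kappa$ factor. This is how I would explain the $\kappa(A)$ (rather than $\kappa(A)^{1/2}$) and the factor $\kappa(L)$ being subsumed into $\poly(N)$.

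\textbf{Hadamard test.} Estimating $R$ to precision $\epsilon/\|u\|$ with amplitude estimation costs $O(\|u\|/\epsilon)$ uses of the QLSA state preparation. The result is multiplicative with the QLSA cost and matches the stated $\|u\|/\epsilon$ prefactor.

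\textbf{Success probability.} A median of $O(1)$ repetitions, together with the success flag of \lem{qlsa}, yields constant success probability.

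The main difficulty is the bookkeeping of the norm-estimation step: controlling the relative-versus-additive error when $\|\boldsymbol c\|$ could be small, and verifying that the $\kappa$ dependence is as claimed. I would handle the small-norm case by noting that the additive error target $\epsilon\|\rho\|$ tolerates relative error $\epsilon/\|u\|$.
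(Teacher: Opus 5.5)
Your error split matches the paper's argument, and your telescoping decomposition is cleaner than the paper's. The tolerances also coincide: $\epsilon_{\mathrm{SG}}=O(\epsilon)$, QLSA and Hadamard-test precision $O(\epsilon/\|u\|)$, and relative norm precision $O(\epsilon/\|u\|)$. So do the bounds $\alpha\le\|\rho\|$ and $\|\boldsymbol c\|=\|u_N\|\le\|u\|+\epsilon$, and the Hadamard-test cost $O(s\kappa\|u\|/\epsilon\cdot\poly\log)$.

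\textbf{The gap is the norm-estimation step.} As you describe it, it does not give the claimed linear dependence on $\kappa(A)$. Take your own assumption that the postselection amplitude $\|L^{-1}\boldsymbol f\|/(\|L^{-1}\|\,\|\boldsymbol f\|)$ is of order $1/\kappa$. Then amplitude estimation to relative precision $\eta$ needs $O(\kappa/\eta)$ coherent applications of the unamplified inversion circuit. That circuit itself costs $\widetilde O(s\kappa)$ queries, since it must approximate $x\mapsto 1/x$ on an interval like $[1/\kappa,1]$ after normalization. The total is therefore $\widetilde O(s\kappa^2\|u\|/\epsilon)$. With $\kappa=\kappa(L)=O(\kappa(A)N^4)$ the powers of $N$ are absorbed, but you are left with $\kappa(A)^2$ rather than the $\kappa(A)$ in the theorem; the $d$- and $\epsilon$-dependence would survive. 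Your closing sentence hides this. It attributes the $\kappa(A)$ to the small amplitude while declaring the QLSA's own $\kappa(L)$ ``subsumed into $\poly(N)$''. That contradicts your preceding paragraph, where you correctly used $\kappa(L)=O(\kappa(A)N^4)$.

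\textbf{What is missing is a norm estimator whose cost is linear in $\kappa$.} The paper invokes the algorithm of \cite{Dalzell24}. Given the known normalizations, it estimates $\|\boldsymbol c\|$ to relative error $\delta$ at cost $O(s\kappa\delta^{-1}\poly\log(N^d s\kappa/\delta))$. Your suggestion to read $\|\boldsymbol f\|^2$ off $W_0$ in \lem{grover-rudolph} is a reasonable way to supply one of these normalizations. Variable-time amplitude estimation (Chakraborty--Gily\'en--Jeffery) would serve equally well. With $\delta=O(\epsilon/\|u\|)$ the cost is $O(s\kappa\|u\|/\epsilon\cdot\poly\log)$, the same order as the Hadamard-test step. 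Substituting the sparsity and condition-number bounds of \lem{jacobi-basis-properties} and \lem{rotated-jacobi-basis-properties} then yields all four stated bounds.

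The linear $\kappa(A)$ needs no separate explanation: it is just $\kappa(L)$ entering linearly into both the QLSA and the norm estimator. The $\kappa(A)^{1/2}$ in \thm{classical-complexity} reflects conjugate gradient's square-root dependence on the condition number.
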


\begin{proof}
For the approximation in \eqref{eq:final-estimator}, we write
\begin{enumerate}
    \item $\langle \rho,u_N\rangle=\langle \rho,u\rangle+\epsilon'_{\mathrm{SG}}$, where $|\epsilon'_{\mathrm{SG}}|\le \|\rho\|_{L^2(\Omega)}\,\epsilon_{\mathrm{SG}}$ and $\epsilon_{\mathrm{SG}}:=\|u-u_N\|_{L^2(\Omega)}$.

    \item $C=\|\boldsymbol{c}\|+\epsilon_C$.

    \item $\epsilon_{\mathrm{QLSA}}:=\|\ket{\tilde{c}}-\ket{c}\|$.

    \item $R=\langle \rho|\tilde{c}\rangle+\epsilon_H$.
\end{enumerate}
Then
\begin{equation}
R=\langle \rho|c\rangle+\langle \rho|(\ket{\tilde{c}}-\ket{c})+\epsilon_H =\frac{1}{\alpha\|\boldsymbol{c}\|}\langle \rho,u_N\rangle+\epsilon'_{\mathrm{QLSA}}+\epsilon_H,
\label{eq:R-expansion}
\end{equation}
for some $\epsilon'_{\mathrm{QLSA}}$ satisfying $|\epsilon'_{\mathrm{QLSA}}|\le \epsilon_{\mathrm{QLSA}}$ by the Cauchy--Schwarz inequality. Hence
\begin{align}
\alpha C R
&=
\alpha\bigl(\|\boldsymbol{c}\|+\epsilon_C\bigr)
\left(
\frac{1}{\alpha\|\boldsymbol{c}\|}
\langle \rho,u_N\rangle
+
\epsilon'_{\mathrm{QLSA}}
+
\epsilon_H
\right)\\
&=
\bigl(\langle \rho,u\rangle+\epsilon'_{\mathrm{SG}}\bigr)
\left(
1+\frac{\epsilon_C}{\|\boldsymbol{c}\|}
\right)
+
\alpha\bigl(\|\boldsymbol{c}\|+\epsilon_C\bigr)
\bigl(\epsilon'_{\mathrm{QLSA}}+\epsilon_H\bigr).
\label{eq:alphaCR-expand}
\end{align}
Therefore,
\begin{equation}
\alpha C R-\langle \rho,u\rangle=\epsilon'_{\mathrm{SG}}\left(1+\frac{\epsilon_C}{\|\boldsymbol{c}\|}\right)+\epsilon_C'\frac{\|\rho\|_{L^2(\Omega)}\|u\|_{L^2(\Omega)}}{\|\boldsymbol{c}\|}+\alpha\bigl(\|\boldsymbol{c}\|+\epsilon_C\bigr)\bigl(\epsilon'_{\mathrm{QLSA}}+\epsilon_H\bigr),
\label{eq:total-error-decomposition}
\end{equation}
for some $\epsilon_C'$ satisfying $|\epsilon_C'|\le \epsilon_C$ by the Cauchy--Schwarz inequality. It suffices to make each term on the right-hand side of \eqref{eq:total-error-decomposition} of order $O(\epsilon\|\rho\|_{L^2(\Omega)})$. Thus we choose
\begin{align}
\epsilon_{\mathrm{SG}}&=O(\epsilon),\\
\epsilon_C&=O\!\left(\min\left\{\|\boldsymbol{c}\|,\frac{\epsilon\|\boldsymbol{c}\|}{\|u\|_{L^2(\Omega)}}\right\}\right),\\
\epsilon_{\mathrm{QLSA}},\epsilon_H&=
O\!\left(
\frac{\epsilon\|\rho\|_{L^2(\Omega)}}{\alpha\|\boldsymbol{c}\|}
\right).
\end{align}
For simplicity, assume $\epsilon\le \|u\|_{L^2(\Omega)}$. Then
\begin{equation}
\epsilon_C=O\!\left(\frac{\epsilon\|\boldsymbol{c}\|}{\|u\|_{L^2(\Omega)}}\right).
\label{eq:epsilonC-choice}
\end{equation}
Since $\{\Phi_{\boldsymbol{k}}\}$ is an orthonormal basis, we have $\alpha\le \|\rho\|_{L^2(\Omega)}$, and 
\begin{equation}
\|\boldsymbol{c}\|=\|u_N\|_{L^2(\Omega)}\le \|u\|_{L^2(\Omega)}+\epsilon_{\mathrm{SG}}=O(\|u\|_{L^2(\Omega)}).
\label{eq:c-norm-upper}
\end{equation}
Therefore, it suffices to choose
\begin{equation}
\epsilon_{\mathrm{QLSA}},\epsilon_H
=
O\!\left(
\epsilon\|u\|_{L^2(\Omega)}^{-1}
\right).
\label{eq:eps-qlsa-hadamard-choice}
\end{equation}

Since $\epsilon_{\mathrm{SG}}=O(\epsilon)$, both \lem{jacobi-basis-properties} and \lem{rotated-jacobi-basis-properties} lead to the same asymptotic choice
\begin{equation}
N=O\left(\log\left(C_d\kappa(A)/\epsilon\right)\right).
\end{equation}
Here, we suppress the $\log\log$ denominator and inessential lower-order logarithmic factors.

Next we estimate the cost of approximating $\|\boldsymbol{c}\|$. We use the norm-estimation algorithm from \cite{Dalzell24}. After accounting for the known normalization factors in the block encoding of $L$ and the preparation of the normalized right-hand side state $\ket{f}$, estimating $\|\boldsymbol{c}\|$ to relative error $\delta$ has total running time
\begin{equation}
O\!\left(
\frac{s\kappa}{\delta}
\poly\!\left(\log\left(N^ds\kappa/\delta\right)\right)
\right),
\label{eq:norm-estimation-cost}
\end{equation}
where $s:=s(L)$ and $\kappa:=\kappa(L)$. These normalization factors only rescale the solution norm by a known constant and therefore do not affect the relative error $\delta$ or the asymptotic complexity. Here the complexity in \cite{Dalzell24} is stated in terms of queries to a block encoding, and we absorb the block-encoding cost into the overall complexity bound. By \eqref{eq:epsilonC-choice}, we need
\begin{equation}
\delta
=
O\!\left(\frac{\epsilon}{\|u\|_{L^2(\Omega)}}\right).
\label{eq:delta-choice}
\end{equation}
Therefore, the total running time for estimating $\|\boldsymbol{c}\|$ is
\begin{equation}
O\!\left(
\frac{s\kappa\|u\|_{L^2(\Omega)}}{\epsilon}
\poly\!\left(
\log\left(N^ds\kappa\|u\|_{L^2(\Omega)}/\epsilon\right)
\right)
\right).
\label{eq:norm-estimation-final-general}
\end{equation}

We now estimate the total cost of the Hadamard-test step. By \lem{qlsa}, preparing one copy of $\ket{\tilde{c}}$ requires time
\begin{equation}
O\!\left(
s\kappa\,
\poly\!\left(
\log\left(N^ds\kappa/\epsilon_{\mathrm{QLSA}}\right)
\right)
\right).
\label{eq:one-qlsa-cost}
\end{equation}
Combining the Hadamard-test circuit with amplitude estimation \cite{BHMT02}, the quantity $\langle \rho|\tilde{c}\rangle$ can be estimated to additive accuracy $\epsilon_H$ using $O(1/\epsilon_H)$ controlled uses of the state-preparation routines for $\ket{\tilde{c}}$ and $\ket{\rho}$. Therefore, the total running time of the Hadamard-test step is
\begin{equation}
O\!\left(
\frac{s\kappa\|u\|_{L^2(\Omega)}}{\epsilon}
\poly\!\left(
\log\left(N^ds\kappa\|u\|_{L^2(\Omega)}/\epsilon\right)
\right)
\right).
\label{eq:hadamard-final-general}
\end{equation}
This has the same asymptotic order as the total running time in \eqref{eq:norm-estimation-final-general}.

Finally, we substitute the bounds on the sparsity and condition number of the coefficient matrix into the complexity estimates above.

For the Jacobi-based basis \eqref{eq:new-basis-1d}, \lem{jacobi-basis-properties} gives
\begin{equation}
\kappa(L)=O(\kappa(A)N^4),
\qquad
s(L)=O(d^2N^2),
\end{equation}
and, if $A$ is diagonal, then $s(L)=O(dN)$. Substituting these bounds, together with $N=O(\log(C_d\kappa(A)/\epsilon))$, into \eqref{eq:norm-estimation-final-general} and \eqref{eq:hadamard-final-general}, we obtain \eqref{eq:linear-functional-runtime-original} and \eqref{eq:linear-functional-runtime-original-diagonal}.

For the rotated Jacobi-based basis \eqref{eq:rotated-basis-1d}, \lem{rotated-jacobi-basis-properties} gives
\begin{equation}
\kappa(\tilde L)=O(\kappa(A)N^4),
\qquad
s(\tilde L)=O(dN+d^2),
\end{equation}
and, if $A$ is diagonal, then $s(\tilde L)=1$. Substituting these bounds, together with $N=O(\log(C_d\kappa(A)/\epsilon))$, into \eqref{eq:norm-estimation-final-general} and \eqref{eq:hadamard-final-general}, we obtain \eqref{eq:linear-functional-runtime-rotated} and \eqref{eq:linear-functional-runtime-rotated-diagonal}.
\end{proof}

\subsection{Comparison with classical complexity}
\label{sec:comparison with classical complexity}
To put the above results into perspective, we briefly compare the quantum complexity with that of the corresponding classical spectral-Galerkin methods. For simplicity, in this subsection we focus on the comparison for general $A$, and suppress the dependence on $C_d$, $\kappa(A)$, and lower-order logarithmic factors. The diagonal case can be analyzed similarly by using the improved bounds in \thm{classical-complexity} and \thm{linear-functional-estimation}.

We first note that this comparison depends on what is regarded as the output. If the goal is to compute the full coefficient vector $\boldsymbol{c}$, then the classical algorithm outputs $\boldsymbol{c}$ itself, whereas the quantum algorithm based on the QLSA only prepares the quantum state $\ket{c}$ associated with $\boldsymbol{c}$. Thus, these two tasks are not directly equivalent. Nevertheless, the comparison is still informative at the level of solving the underlying linear system.

For the \textbf{classical algorithm}, \thm{classical-complexity} gives the following bounds. For the Jacobi-based basis \eqref{eq:new-basis-1d}, the complexity of computing the full coefficient vector is of order
\begin{equation}
O\!\left(
d^2(\log(1/\epsilon))^{d+5}
\right),
\end{equation}
whereas for the rotated Jacobi-based basis \eqref{eq:rotated-basis-1d}, it becomes
\begin{equation}
O\!\left(
\left(d\log(1/\epsilon)+d^2\right)(\log(1/\epsilon))^{d+3}
\right).
\end{equation}
In both cases, the classical complexity still grows exponentially with the dimension $d$. By contrast, for the \textbf{quantum algorithm}, the cost of preparing the corresponding solution state is determined by \eqref{eq:one-qlsa-cost}. In particular, for the Jacobi-based basis \eqref{eq:new-basis-1d}, this cost is of order
\begin{equation}
O\!\left(d^2\,\poly\!\left(d,\log(1/\epsilon)\right)\right),
\end{equation}
while for the rotated Jacobi-based basis \eqref{eq:rotated-basis-1d}, it becomes
\begin{equation}
O\!\left((d\log(1/\epsilon)+d^2)\,\poly\!\left(d,\log(1/\epsilon)\right)\right).
\end{equation}
Therefore, at the level of solving for the Galerkin coefficients, the quantum algorithm avoids the exponential growth in the dimension $d$ that appears in the classical complexity.

However, the quantum state $\ket{c}$ only provides an encoded representation of the solution, and in general one cannot efficiently extract all coefficients $c_{\boldsymbol{k}}$ from it. For this reason, we use the estimation of linear functionals as the complexity measure for the quantum algorithm. In this setting, \thm{linear-functional-estimation} shows that the complexity is
\begin{equation}
O\!\left(
\frac{d^2}{\epsilon}\,\poly\!\left(d,\log(1/\epsilon)\right)
\right)
\end{equation}
for the Jacobi-based basis, and
\begin{equation}
O\!\left(
\frac{d\log(1/\epsilon)+d^2}{\epsilon}\,\poly\!\left(d,\log(1/\epsilon)\right)
\right)
\end{equation}
for the rotated Jacobi-based basis. Under this output model, the advantage of the quantum spectral-Galerkin method over its classical counterpart is less pronounced in low dimensions, since the dominant cost comes from norm estimation and Hadamard-test-based estimation, which introduce the factor $1/\epsilon$. Nevertheless, for high-dimensional problems, the quantum spectral-Galerkin method still offers a significant advantage, since its dependence on $d$ remains polynomial rather than exponential.
\section{Conclusion and open problems}
\label{sec:conclusion}

In this paper, we presented a quantum spectral-Galerkin method for the $d$-dimensional second-order linear elliptic boundary value problem \eqref{eq:model-problem} with homogeneous Dirichlet boundary conditions. By combining Jacobi-based tensor-product bases, including a rotated variant, with a sparse-input QLSA, we obtained Galerkin coefficient matrices whose sparsity and condition number grow only polynomially with the discretization parameter and the dimension. As a result, for analytic solutions satisfying the uniform regularity condition \eqref{eq:Cd}, the complexity of preparing the quantum state associated with the Galerkin coefficient vector is $\poly(d,\log(1/\epsilon))$, improving the precision dependence over fixed-order finite-difference and finite-element discretizations, where we suppress the additional dependence on $\kappa(A)$, $C_d$, and lower-order logarithmic factors. When the output is taken to be a linear functional of the solution, the overall complexity acquires an additional factor $1/\epsilon$ through norm estimation and Hadamard-test-based estimation.

Our analysis leaves several natural questions for future work.
\begin{enumerate}
    \item \textbf{State preparation for general data.}
    State preparation remains a major bottleneck. As in the existing literature on quantum finite element methods \cite{MP16,QRFEM,QAFEM}, we assume that the relevant input states $\ket{f}$ and $\ket{\rho}$ can be prepared efficiently. For general right-hand sides and observables, however, it remains unclear when this assumption is justified. A better understanding of the structural conditions that enable efficient state preparation for PDE-related data is therefore an important direction for future work.

    \item \textbf{Variable-coefficient problems.}
    It is natural to ask whether the present framework can be extended to variable-coefficient equations. In this setting, the simple tensor-product structure exploited in the present analysis is generally lost, potentially leading to denser matrices and less favorable condition numbers. A possible approach is to use the constant-coefficient part as a preconditioner, although it remains unclear whether the resulting preconditioned system admits an efficient sparse representation and quantum implementation.

    \item \textbf{Weaker regularity assumptions and broader PDE classes.}
    The complexity of our algorithm depends on the regularity quantity $C_d$ in \eqref{eq:Cd}, which encodes the size of certain high-order derivatives of the solution. It is natural to ask whether similar quantum complexity guarantees can be obtained under weaker regularity assumptions or without explicit dependence on such high-order derivative information. More broadly, one may ask whether the ideas developed here can be extended to other classes of PDEs, such as convection-diffusion equations, time-dependent equations, higher-order elliptic equations, or nonlinear problems.
\end{enumerate}
These questions touch on fundamental structural and analytical constraints that determine whether a spectral discretization is compatible with efficient quantum linear algebra. Answering them will help clarify the ultimate scope of quantum spectral methods for partial differential equations.
\section*{Acknowledgments}

The authors acknowledge support from the National Natural Science Foundation of China under Grant No.~12471341.


\providecommand{\bysame}{\leavevmode\hbox to3em{\hrulefill}\thinspace}

\newpage

\appendix

\section{Derivation of the Galerkin system}
\label{app:derivation}

In this appendix, we derive the tensor-product representation
\eqref{eq:L1-final}--\eqref{eq:L2-final} of the Galerkin coefficient matrix.
Starting from \eqref{eq:discrete-entrywise}, we decompose the discrete operator according to the diagonal terms $p=q$ and the off-diagonal terms $p\neq q$.

For the diagonal contribution, define
\begin{equation}
L_1\boldsymbol{c}:=
\sum_{\boldsymbol{l}\in[N]_0^d}
\left(
\sum_{p=1}^d A_{pp}\sum_{\boldsymbol{k}\in[N]_0^d}c_{\boldsymbol{k}}
\bigl\langle \partial_{x_p}\Phi_{\boldsymbol{k}},\partial_{x_p}\Phi_{\boldsymbol{l}}\bigr\rangle
\right)\ket{\boldsymbol{l}}.
\label{eq:L1-action}
\end{equation}
By the tensor-product structure of the basis functions, we obtain
\begin{align}
L_1\boldsymbol{c}
&=
\sum_{p=1}^d A_{pp}
\sum_{\boldsymbol{l},\boldsymbol{k}\in[N]_0^d}
c_{\boldsymbol{k}}
\langle\phi_{k_p}',\phi_{l_p}'\rangle
\prod_{j\in[d]\setminus\{p\}}
\langle\phi_{k_j},\phi_{l_j}\rangle
\ket{\boldsymbol{l}}
\notag\\
&=
\sum_{p=1}^d A_{pp}
\sum_{\boldsymbol{l},\boldsymbol{k}\in[N]_0^d}
c_{\boldsymbol{k}}
S_{l_p,k_p}
\prod_{j\in[d]\setminus\{p\}}
M_{l_j,k_j}
\ket{\boldsymbol{l}}
\notag\\
&=
\sum_{p=1}^d A_{pp}
\sum_{\boldsymbol{k}\in[N]_0^d}
c_{\boldsymbol{k}}
\left(
\bigotimes_{j=1}^{p-1}\sum_{l_j\in[N]_0}M_{l_j,k_j}\ket{l_j}
\right)
\otimes
\left(
\sum_{l_p\in[N]_0}S_{l_p,k_p}\ket{l_p}
\right)
\otimes
\left(
\bigotimes_{j=p+1}^{d}\sum_{l_j\in[N]_0}M_{l_j,k_j}\ket{l_j}
\right)
\notag\\
&=
\sum_{p=1}^d A_{pp}
\sum_{\boldsymbol{k}\in[N]_0^d}
c_{\boldsymbol{k}}
\left(
\bigotimes_{j=1}^{p-1}M\ket{k_j}
\right)
\otimes
\left(
S\ket{k_p}
\right)
\otimes
\left(
\bigotimes_{j=p+1}^{d}M\ket{k_j}
\right)
\notag\\
&=
\sum_{p=1}^d A_{pp}
\bigl(M^{\otimes(p-1)}\otimes S\otimes M^{\otimes(d-p)}\bigr)\boldsymbol{c}.
\label{eq:L1-tensor}
\end{align}
For the off-diagonal contribution, define
\begin{equation}
L_2\boldsymbol{c}:=
\sum_{\boldsymbol{l}\in[N]_0^d}
\left(
\sum_{p\neq q} A_{pq}\sum_{\boldsymbol{k}\in[N]_0^d}c_{\boldsymbol{k}}
\bigl\langle \partial_{x_p}\Phi_{\boldsymbol{k}},\partial_{x_q}\Phi_{\boldsymbol{l}}\bigr\rangle
\right)\ket{\boldsymbol{l}}.
\label{eq:L2-action}
\end{equation}
A similar calculation applies to the off-diagonal terms. Pairing the contributions indexed by $(p,q)$ and $(q,p)$, using the identity $\langle \phi_k,\phi_j'\rangle=-\langle \phi_k',\phi_j\rangle$, and invoking the symmetry $A_{pq}=A_{qp}$, we obtain
\begin{equation}
L_2\boldsymbol{c}
=
-2\sum_{1\leq p<q\leq d}A_{pq}
\bigl(M^{\otimes(p-1)}\otimes T\otimes M^{\otimes(q-p-1)}\otimes T\otimes M^{\otimes(d-q)}\bigr)\boldsymbol{c}.
\label{eq:L2-tensor}
\end{equation}
\section{Jacobi polynomials}
\label{app:Jacobi}

This appendix collects several basic facts about Jacobi polynomials that are used in our analysis. More details can be found in \cite{STW11}.

For parameters $\alpha,\beta>-1$, the Jacobi polynomial $P_n^{(\alpha,\beta)}(x)$ of degree $n$ can be written explicitly as
\begin{equation}
P_n^{(\alpha,\beta)}(x)
=
\frac{\Gamma(n+\alpha+1)}{n!\,\Gamma(n+\alpha+\beta+1)}
\sum_{k=0}^{n}
\binom{n}{k}
\frac{\Gamma(n+k+\alpha+\beta+1)}{\Gamma(k+\alpha+1)}
\left(\frac{x-1}{2}\right)^k .
\label{eq:jacobi-explicit}
\end{equation}
Although the explicit formula \eqref{eq:jacobi-explicit} is available, it is not the most convenient form for numerical computation. In practice, Jacobi polynomials can be evaluated efficiently by the three-term recurrence
\begin{equation}
P_{n+1}^{(\alpha,\beta)}(x)
=
\bigl(a_n^{(\alpha,\beta)}x-b_n^{(\alpha,\beta)}\bigr)P_n^{(\alpha,\beta)}(x)
-
c_n^{(\alpha,\beta)}P_{n-1}^{(\alpha,\beta)}(x),
\qquad n\ge 1,
\label{eq:jacobi-recurrence}
\end{equation}
with initial values
\begin{equation}
P_0^{(\alpha,\beta)}(x)=1,
\qquad
P_1^{(\alpha,\beta)}(x)
=
\frac{\alpha+\beta+2}{2}x+\frac{\alpha-\beta}{2},
\label{eq:jacobi-initial}
\end{equation}
and coefficients
\begin{align}
a_n^{(\alpha,\beta)}
&=
\frac{(2n+\alpha+\beta+1)(2n+\alpha+\beta+2)}
{2(n+1)(n+\alpha+\beta+1)},
\label{eq:jacobi-an}
\\
b_n^{(\alpha,\beta)}
&=
\frac{(\beta^2-\alpha^2)(2n+\alpha+\beta+1)}
{2(n+1)(n+\alpha+\beta+1)(2n+\alpha+\beta)},
\label{eq:jacobi-bn}
\\
c_n^{(\alpha,\beta)}
&=
\frac{(n+\alpha)(n+\beta)(2n+\alpha+\beta+2)}
{(n+1)(n+\alpha+\beta+1)(2n+\alpha+\beta)}.
\label{eq:jacobi-cn}
\end{align}
Therefore, once the two initial polynomials are known, all higher-order Jacobi polynomials can be generated recursively.

Jacobi polynomials form an orthogonal polynomial system on $[-1,1]$ with respect to the weight
\begin{equation}
\omega^{(\alpha,\beta)}(x)=(1-x)^\alpha(1+x)^\beta .
\label{eq:jacobi-weight}
\end{equation}
More precisely, they satisfy
\begin{equation}
\int_{-1}^{1}
P_n^{(\alpha,\beta)}(x)\,P_m^{(\alpha,\beta)}(x)\,\omega^{(\alpha,\beta)}(x)\,dx
=
\gamma_n^{(\alpha,\beta)}\delta_{nm},
\label{eq:jacobi-orthogonality}
\end{equation}
where
\begin{equation}
\gamma_n^{(\alpha,\beta)}
=
\frac{2^{\alpha+\beta+1}\Gamma(n+\alpha+1)\Gamma(n+\beta+1)}
{(2n+\alpha+\beta+1)n!\,\Gamma(n+\alpha+\beta+1)}.
\label{eq:jacobi-norm}
\end{equation}
When $\alpha=\beta=0$, $P_n^{(0,0)}(x)$ coincides with the Legendre polynomial of degree $n$. When $\alpha=\beta=-1/2$, $P_n^{(-1/2,-1/2)}(x)$ is proportional to the Chebyshev polynomial of the first kind of degree $n$.

Next, we introduce an important derivative formula. The derivative of a Jacobi polynomial is still a Jacobi polynomial with shifted parameters. More generally, for $n\ge k$,
\begin{equation}
\frac{d^k}{dx^k}P_n^{(\alpha,\beta)}(x)
=
d_{n,k}^{(\alpha,\beta)}\,
P_{n-k}^{(\alpha+k,\beta+k)}(x),
\label{eq:jacobi-derivative}
\end{equation}
where
\begin{equation}
d_{n,k}^{(\alpha,\beta)}
=
\frac{\Gamma(n+k+\alpha+\beta+1)}
{2^k\,\Gamma(n+\alpha+\beta+1)}.
\label{eq:jacobi-derivative-coef}
\end{equation}
This property is especially useful in spectral methods, because derivatives can still be evaluated through Jacobi polynomials and hence can be computed efficiently by the recurrence relation \eqref{eq:jacobi-recurrence}.

Jacobi polynomials are also complete in $L^2_{\omega^{(\alpha,\beta)}}(-1,1)$. Hence, $\{P_n^{(\alpha,\beta)}(x)\}_{n\ge 0}$ forms a complete orthogonal basis of $L^2_{\omega^{(\alpha,\beta)}}(-1,1)$. Therefore, any function $u\in L^2_{\omega^{(\alpha,\beta)}}(-1,1)$ admits a unique Jacobi series expansion
\begin{equation}
u(x)=\sum_{n=0}^{\infty}\hat{u}_n^{(\alpha,\beta)} P_n^{(\alpha,\beta)}(x),
\label{eq:jacobi-expansion}
\end{equation}
where the coefficients are given by
\begin{equation}
\hat{u}_n^{(\alpha,\beta)}
=
\frac{1}{\gamma_n^{(\alpha,\beta)}}
\int_{-1}^{1}
u(x)P_n^{(\alpha,\beta)}(x)\omega^{(\alpha,\beta)}(x)\,dx .
\label{eq:jacobi-coeff}
\end{equation}

Finally, we define the $L^2_{\omega^{(\alpha,\beta)}}$-orthogonal projection $\pi_N^{(\alpha,\beta)} : L^2_{\omega^{(\alpha,\beta)}}(-1,1)\to \mathbb{P}_N$ such that
\begin{equation}
\bigl(\pi_N^{(\alpha,\beta)}u-u,v_N\bigr)_{\omega^{(\alpha,\beta)}}=0,
\qquad
\forall\, v_N\in \mathbb{P}_N,
\label{eq:jacobi-projection-def}
\end{equation}
or equivalently,
\begin{equation}
(\pi_N^{(\alpha,\beta)}u)(x)
=
\sum_{n=0}^{N}\hat{u}_n^{(\alpha,\beta)} P_n^{(\alpha,\beta)}(x),
\label{eq:jacobi-projection-series}
\end{equation}
where $\hat{u}_n^{(\alpha,\beta)}$ is given by \eqref{eq:jacobi-coeff}. The following proposition summarizes the approximation property of this projection.

\begin{proposition}
\label{prop:jacobi-projection-estimate}
Let $\alpha,\beta>-1$ and $m\in\mathbb{N}$. Define
\begin{equation}
B_{\alpha,\beta}^m(-1,1)
:=
\left\{
u\mid\partial_x^k u\in L^2_{\omega^{(\alpha+k,\beta+k)}}(-1,1),\,\, k=0,1,\ldots,m
\right\}.
\end{equation}
For any $u\in B_{\alpha,\beta}^m(-1,1)$, the following hold:
\begin{enumerate}
    \item If $0\le l\le m\le N+1$, then
    \begin{equation}
    \left\|
    \partial_x^l\bigl(\pi_N^{(\alpha,\beta)}u-u\bigr)
    \right\|_{\omega^{(\alpha+l,\beta+l)}}
    \lesssim N^{l-m}
    \left\|
    \partial_x^m u
    \right\|_{\omega^{(\alpha+m,\beta+m)}}.
    \label{eq:jacobi-projection-estimate-1}
    \end{equation}
    
    \item If $m>N+1$ and $0\le l\le N+1$, then
    \begin{equation}
    \left\|
    \partial_x^l\bigl(\pi_N^{(\alpha,\beta)}u-u\bigr)
    \right\|_{\omega^{(\alpha+l,\beta+l)}}
    \lesssim 
    \left(\frac{\sqrt{e/2}}{N}\right)^{N-l+\frac{5}{4}}
    \left\|
    \partial_x^{N+1}u
    \right\|_{\omega^{(\alpha+N+1,\beta+N+1)}}.
    \label{eq:jacobi-projection-estimate-2}
    \end{equation}
\end{enumerate}
\end{proposition}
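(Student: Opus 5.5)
We prove both estimates by expanding $u$ in the Jacobi series and using the derivative formula \eqref{eq:jacobi-derivative}: derivatives of Jacobi polynomials are again Jacobi polynomials, orthogonal with respect to the shifted weights. Write $u=\sum_n\hat u_nP_n^{(\alpha,\beta)}$. Applying $\partial_x^l$ termwise gives
\begin{equation*}
\partial_x^l(u-\pi_N^{(\alpha,\beta)}u)=\sum_{n>N}\hat u_n\, d_{n,l}^{(\alpha,\beta)}\,P_{n-l}^{(\alpha+l,\beta+l)}.
\end{equation*}
By orthogonality in $L^2_{\omega^{(\alpha+l,\beta+l)}}$,
\begin{equation*}
\|\partial_x^l(u-\pi_Nu)\|^2_{\omega^{(\alpha+l,\beta+l)}}=\sum_{n>N}|\hat u_n|^2\bigl(d_{n,l}^{(\alpha,\beta)}\bigr)^2\gamma_{n-l}^{(\alpha+l,\beta+l)}=:\sum_{n>N}|\hat u_n|^2h_{n,l}.
\end{equation*}
In the same way, $\|\partial_x^m u\|^2_{\omega^{(\alpha+m,\beta+m)}}=\sum_{n\ge m}|\hat u_n|^2h_{n,m}$. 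The termwise differentiation requires $u\in B^m_{\alpha,\beta}$, and is justified by completeness of $\{P^{(\alpha+k,\beta+k)}_n\}$. One also checks that the Jacobi coefficients of $\partial_x^k u$ are exactly $d_{n,k}\hat u_n$; this follows by integration by parts with the Rodrigues formula, and the boundary terms vanish because of the weight.

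Both estimates then reduce to bounding the ratio
\begin{equation*}
\frac{h_{n,l}}{h_{n,m}}=\frac{(n-m)!\,\Gamma(n+l+\alpha+\beta+1)}{(n-l)!\,\Gamma(n+m+\alpha+\beta+1)}
\end{equation*}
uniformly over $n>N$. I obtained this from \eqref{eq:jacobi-norm} and \eqref{eq:jacobi-derivative-coef}; the powers of $2$ and the factors $2n+\alpha+\beta+1$ cancel. The ratio is decreasing in $n$, so its supremum over $n\ge N+1$ is attained at $n=N+1$. The precise rule is: take the sup over $n\ge\max(N+1,m)$, and note that for $n<m$ the coefficient term vanishes on the right while being controlled on the left. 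This gives
\begin{equation*}
\|\partial_x^l(u-\pi_Nu)\|^2\le \frac{(N+1-m)!\,\Gamma(N+l+\alpha+\beta+2)}{(N+1-l)!\,\Gamma(N+m+\alpha+\beta+2)}\,\|\partial_x^m u\|^2.
\end{equation*}

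For part 1, with $m\le N+1$, the ratio behaves like
\begin{equation*}
\frac{(N+1-m)!}{(N+1-l)!}\cdot\frac{\Gamma(N+l+c)}{\Gamma(N+m+c)}\approx N^{-(m-l)}\cdot N^{-(m-l)}
\end{equation*}
only when $m-l$ is bounded. In general I would use the crude bound $\frac{(N+1-m)!}{(N+1-l)!}\le 1$ together with $\frac{\Gamma(N+l+c)}{\Gamma(N+m+c)}\le (N+l+c)^{-(m-l)}\lesssim N^{-(m-l)}$. After taking square roots this yields $N^{(l-m)/2}$, so the factorial factor must also be used to reach $N^{l-m}$. Since $\frac{(N+1-m)!}{(N+1-l)!}$ can be $O(1)$ when $m$ is close to $N+1$, the clean $N^{l-m}$ rate holds with the implied constant uniform only for fixed $m$. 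This is the standard estimate of \cite{STW11}, and I would cite the sharper form there, which contains the explicit factor $\sqrt{(N-m+1)!/(N-l+1)!}\,(N+m)^{(l-m)/2}$.

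For part 2, with $m>N+1$, I apply the same bound with $m$ replaced by $N+1$. This is legitimate because $u\in B^{N+1}_{\alpha,\beta}$. The ratio becomes
\begin{equation*}
\frac{0!\,\Gamma(N+l+\alpha+\beta+2)}{(N+1-l)!\,\Gamma(2N+\alpha+\beta+3)}.
\end{equation*}
Applying Stirling's formula $\Gamma(z+1)\approx\sqrt{2\pi z}(z/e)^z$ to $(N+1-l)!$ and $\Gamma(2N+c)$, the dominant part is $\bigl(e/(2N)\bigr)^{N-l+1}$ up to factors polynomial in $N$. After taking square roots this gives
\begin{equation*}
\bigl(\sqrt{e/2}/N\bigr)^{N-l+1}\cdot N^{-1/4}\text{-type corrections},
\end{equation*}
which is consistent with the exponent $N-l+\tfrac54$.

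The main obstacle is this Stirling bookkeeping: showing that the polynomial prefactors combine to exactly an extra $N^{-1/4}$, uniformly in $0\le l\le N+1$. I would take the logarithm and bound each Gamma quotient using $\Gamma(z+a)/\Gamma(z+b)\le z^{a-b}$ type inequalities, falling back on the corresponding theorem in \cite{STW11} if the constants do not close.
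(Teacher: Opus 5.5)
Your setup is correct, and it is the Shen--Tang--Wang argument that the paper relies on. The Jacobi-coefficient identity, orthogonality in $L^2_{\omega^{(\alpha+l,\beta+l)}}$, your formula for $h_{n,l}/h_{n,m}$, and its monotonicity in $n$ are all right. Since equality holds for $u=P^{(\alpha,\beta)}_{N+1}$, the bound with $n=N+1$ is sharp. Part~1 closes directly for fixed $m$, and you are right that it is not uniform in $m\le N+1$. With $c=\alpha+\beta+2$ one has
\[
\frac{(N+1-m)!}{(N+1-l)!}\le (N+2-m)^{l-m},\qquad \frac{\Gamma(N+l+c)}{\Gamma(N+m+c)}\le (N+l+c)^{l-m},
\]
which give $N^{l-m}$ with an $m$-dependent constant, so no citation is needed there.

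The genuine gap is in part~2, and it is not a bookkeeping issue. Your step ``the dominant part is $(e/(2N))^{N-l+1}$ \dots\ after taking square roots this gives $(\sqrt{e/2}/N)^{N-l+1}$'' is an arithmetic slip. Applying Stirling correctly to your own ratio with $m=N+1$ and fixed $l$ gives
\[
\frac{\Gamma(N+l+c)}{(N+1-l)!\,\Gamma(2N+1+c)}\sim C_l\,N^{-1/2}\Bigl(\frac{e}{2N}\Bigr)^{2(N-l+1)} .
\]
So what your method proves is $\lesssim N^{-1/4}(e/(2N))^{N-l+1}\|\partial_x^{N+1}u\|_{\omega^{(\alpha+N+1,\beta+N+1)}}$. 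This is larger than the claimed $(\sqrt{e/2}/N)^{N-l+5/4}$ by a factor of order $(e/2)^{(N-l)/2}$.

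Because your ratio is attained, the claimed inequality is actually false. Take $\alpha=\beta=0$, $l=0$, $u=P^{(0,0)}_{N+1}$. Then exactly $\|\pi_Nu-u\|/\|\partial_x^{N+1}u\|_{\omega^{(N+1,N+1)}}=1/\sqrt{(2N+2)!}$, which exceeds $(\sqrt{e/2}/N)^{N+5/4}$ by about $10^2$ at $N=40$ and $10^6$ at $N=100$. Moreover, at $l=N+1$ the left side equals $\|\partial_x^{N+1}u\|$ exactly, so no decaying factor uniform in $0\le l\le N+1$ can exist.

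Falling back on the sharper estimate with $c\approx 1$, quoted in the remark after the proposition (the paper's own route), does not repair this. Its factor $(N+\tilde m)^{(l-\tilde m)/2}$ rests on $\Gamma(x+a)/\Gamma(x+b)\approx x^{a-b}$, which is valid only when $(b-a)^2\ll x$, essentially for $m-l$ fixed. At $\tilde m=N+1$ it underestimates $\Gamma(N+l+c)/\Gamma(2N+1+c)$ by a factor of order $(e/2)^{N}$. The correct base-$e/(2N)$ bound still gives decay of the form $(C/N)^N$, which is what the later choice of $N$ uses. But part~2 as stated cannot be established.
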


\begin{remark}
In the proof of \prop{jacobi-projection-estimate}, \cite{STW11} actually established the more precise estimate
\begin{equation}
\left\|
\partial_x^l\bigl(\pi_N^{(\alpha,\beta)}u-u\bigr)
\right\|_{\omega^{(\alpha+l,\beta+l)}}
\le
c\sqrt{\frac{(N-\tilde m+1)!}{(N-l+1)!}}\,(N+\tilde m)^{(l-\tilde m)/2}
\left\|
\partial_x^{\tilde m}u
\right\|_{\omega^{(\alpha+\tilde m,\beta+\tilde m)}},
\end{equation}
where $\tilde m=\min\{m,N+1\}$, and $c\approx 1$ for $N\gg 1$. \prop{jacobi-projection-estimate} follows from this estimate by Stirling's formula. 

In the subsequent analysis, we shall need the case $\tilde m=N$. Let $r:=N-l+1$. Then, by Stirling's formula,
\begin{align}
&\sqrt{\frac{(N-\tilde m+1)!}{(N-l+1)!}}(N+\tilde m)^{(l-\tilde m)/2}
=
\frac{(2N)^{(l-N)/2}}{\sqrt{(N-l+1)!}}
=
\frac{(2N)^{(1-r)/2}}{\sqrt{r!}} \\
\sim&
(2N)^{(1-r)/2}(2\pi r)^{-1/4}\left(\frac{e}{r}\right)^{r/2}
=(2N)^{1/2}(2\pi r)^{-1/4}\left(\frac{e}{2Nr}\right)^{r/2}\sim
\left(\frac{\sqrt{e/2}}{N}\right)^{N-l+\frac34}.
\end{align}
Therefore, for each fixed $l$ and $N\gg1$, if $u\in B_{\alpha,\beta}^N(-1,1)$, then
\begin{equation}
\left\|
\partial_x^l\bigl(\pi_N^{(\alpha,\beta)}u-u\bigr)
\right\|_{\omega^{(\alpha+l,\beta+l)}}
\lesssim
\left(\frac{\sqrt{e/2}}{N}\right)^{N-l+\frac34}
\left\|
\partial_x^N u
\right\|_{\omega^{(\alpha+N,\beta+N)}}.
\label{eq:jacobi-projection-estimate-N}
\end{equation}
\end{remark}

\section{Error estimation of spectral-Galerkin methods}
\label{app:error estimation}

In this section, we study the error of the spectral-Galerkin approximation. Recall that we consider the boundary value problem
\begin{equation}
-\sum_{p,q=1}^{d} A_{pq}\,\partial_{x_p}\partial_{x_q}u=f\quad \text{in } \Omega:=(-1,1)^d,
\qquad
u|_{\partial\Omega}=0,
\label{eq:elliptic-bvp-recall}
\end{equation}
where $A=(A_{pq})_{p,q=1}^d\in\mathbb{R}^{d\times d}$ is a constant symmetric positive-definite matrix. The weak solution $u\in H_0^1(\Omega)$ satisfies
\begin{equation}
a(u,v):=\int_\Omega (\nabla u)^{\mathrm T}(A\nabla v)\,\mathrm{d}x
=
(f,v)_{L^2},
\qquad
\forall\, v\in H_0^1(\Omega).
\label{eq:weak-form-recall}
\end{equation}
Consider the finite-dimensional subspace
\begin{equation}
V_N
:=
\mathrm{span}
\left\{
\Phi_{\boldsymbol{k}}(\boldsymbol{x})
=
\prod_{j=1}^d \phi_{k_j}(x_j)
\;\middle|\;
\boldsymbol{k}\in [N]_0^d
\right\},
\label{eq:VN-recall}
\end{equation}
where the one-dimensional basis functions $\{\phi_k\}_{k\in [N]_0}$ are defined in \eqref{eq:new-basis-1d}. Equivalently, 
\begin{equation}
    V_N=(\mathbb{P}_{N+1}^0)^{\otimes d},\quad \mathbb{P}_{N+1}^0:=\{u\in\mathbb{P}_{N+1}\mid u(\pm1)=0\}.
\end{equation}
The Galerkin solution $u_N\in V_N$ is defined by
\begin{equation}
a(u_N,v_N)=(f,v_N)_{L^2},
\qquad
\forall\, v_N\in V_N.
\label{eq:galerkin-recall}
\end{equation}
Our goal is to estimate the error between $u_N$ and the exact solution $u$.

Define the energy norm by $\|v\|_a:=\sqrt{a(v,v)},\forall\,v\in H_0^1(\Omega)$. By \eqref{eq:weak-form-recall} and \eqref{eq:galerkin-recall}, we have the Galerkin orthogonality
\begin{equation}
a(u_N-u,v_N)=0,
\qquad
\forall\, v_N\in V_N.
\label{eq:galerkin-orthogonality-recall}
\end{equation}
Hence $u_N$ is the best approximation to $u$ in the energy norm over $V_N$, namely,
\begin{equation}
\|u_N-u\|_a
=
\inf_{v_N\in V_N}\|v_N-u\|_a.
\label{eq:best-approximation-recall}
\end{equation}
Indeed, for any $v_N\in V_N$,
\begin{equation}
\|u_N-u\|_a^2
=a(u_N-u,u_N-u)=
a(u_N-u,v_N-u)
\le
\|u_N-u\|_a\,\|v_N-u\|_a.
\end{equation}

Moreover, by the positive definiteness of $A$,
\begin{equation}
\sqrt{\lambda_{\min}(A)}\,\|\nabla v\|_{L^2}
\le
\|v\|_a
\le
\sqrt{\lambda_{\max}(A)}\,\|\nabla v\|_{L^2},
\qquad
\forall\, v\in H_0^1(\Omega).
\label{eq:energy-gradient-equivalence-recall}
\end{equation}
Combining \eqref{eq:best-approximation-recall} and \eqref{eq:energy-gradient-equivalence-recall}, we obtain
\begin{equation}
\|\nabla(u_N-u)\|_{L^2}
\le
\sqrt{\kappa(A)}
\inf_{v_N\in V_N}\|\nabla(v_N-u)\|_{L^2}.
\label{eq:H1-best-approximation-recall}
\end{equation}
Since $u_N-u\in H_0^1(\Omega)$, the Poincar\'e inequality further gives
\begin{equation}
\|u_N-u\|_{H^1}
\lesssim
\|\nabla(u_N-u)\|_{L^2}
\le
\sqrt{\kappa(A)}
\inf_{v_N\in V_N}\|\nabla(v_N-u)\|_{L^2}.
\label{eq:H1-error-reduction-recall}
\end{equation}

Therefore, the error analysis of the Galerkin solution is reduced to the problem of constructing a suitable approximation $v_N\in V_N$ for $u$. In the following subsections, we introduce several orthogonal projection operators, because their approximation errors can be analyzed explicitly and can therefore be used to control the error of the spectral-Galerkin solution.

\subsection{Generalized Jacobi polynomials}

In this subsection, we briefly recall the generalized Jacobi polynomials needed for the error analysis of our spectral-Galerkin method. More details can be found in \cite{GSW09,GW04}.

We consider the generalized Jacobi polynomials associated with the parameters $\alpha=\beta=-1$, defined by
\begin{equation}
P_n^{(-1,-1)}(x)=(1-x^2)P_{n-2}^{(1,1)}(x),
\qquad n\ge 2.
\end{equation}
They form an orthogonal polynomial system on $[-1,1]$ with respect to the weight
\begin{equation}
\omega^{(-1,-1)}(x)=(1-x^2)^{-1}.
\end{equation}
More precisely, for $m,n\ge 2$, they satisfy
\begin{equation}
\int_{-1}^1 P_n^{(-1,-1)}(x)P_m^{(-1,-1)}(x)\omega^{(-1,-1)}(x)\,\mathrm{d}x
=
\int_{-1}^1 P_{n-2}^{(1,1)}(x)P_{m-2}^{(1,1)}(x)\omega^{(1,1)}(x)\,\mathrm{d}x
=
\gamma_{n-2}^{(1,1)}\delta_{mn}.
\end{equation}

The generalized Jacobi polynomials $\{P_n^{(-1,-1)}(x)\}_{n\ge 2}$ form a complete orthogonal basis of $L^2_{\omega^{(-1,-1)}}(-1,1)$. Therefore, any function $u\in L^2_{\omega^{(-1,-1)}}(-1,1)$ admits the unique expansion
\begin{equation}
u(x)=\sum_{n=2}^{\infty}\hat{u}_n^{(-1,-1)}P_n^{(-1,-1)}(x),
\label{eq:generalized-jacobi-expansion}
\end{equation}
where the coefficients are given by
\begin{equation}
\hat{u}_n^{(-1,-1)}
=
\frac{(u,P_n^{(-1,-1)})_{\omega^{(-1,-1)}}}
{\|P_n^{(-1,-1)}\|_{\omega^{(-1,-1)}}^2}.
\label{eq:generalized-jacobi-coeff}
\end{equation}

We define the $L^2_{\omega^{(-1,-1)}}$-orthogonal projection
\begin{equation}
\pi_N^{(-1,-1)}:L^2_{\omega^{(-1,-1)}}(-1,1)\to \mathbb{P}_N^0,
\end{equation}
such that
\begin{equation}
\bigl(\pi_N^{(-1,-1)}u-u,v_N\bigr)_{\omega^{(-1,-1)}}=0,
\qquad
\forall\, v_N\in \mathbb{P}_N^0,
\label{eq:projection-generalized-jacobi1}
\end{equation}
or equivalently,
\begin{equation}
(\pi_N^{(-1,-1)}u)(x)=\sum_{n=2}^N \hat{u}_n^{(-1,-1)}P_n^{(-1,-1)}(x),
\label{eq:projection-generalized-jacobi2}
\end{equation}
where $\hat{u}_n^{(-1,-1)}$ is given by \eqref{eq:generalized-jacobi-coeff}.

The projection $\pi_N^{(-1,-1)}$ has the following basic properties.

\begin{proposition}
\label{prop:generalized-jacobi-basic}
For any $u\in L^2_{\omega^{(-1,-1)}}(-1,1)$, the following hold:
\begin{enumerate}
    \item $\pi_N^{(-1,-1)}u$ is the best approximation to $u$ in $L^2_{\omega^{(-1,-1)}}(-1,1)$ over $\mathbb{P}_N^0$, namely,
    \begin{equation}
    \|\pi_N^{(-1,-1)}u-u\|_{\omega^{(-1,-1)}}
    =
    \inf_{v_N\in\mathbb{P}_N^0}\|v_N-u\|_{\omega^{(-1,-1)}}.
    \end{equation}

    \item The projection is contractive in the $L^2_{\omega^{(-1,-1)}}(-1,1)$ norm, and moreover
    \begin{equation}
    \|\pi_N^{(-1,-1)}u\|_{L^2}
    \le
    \|\pi_N^{(-1,-1)}u\|_{\omega^{(-1,-1)}}
    \le
    \|u\|_{\omega^{(-1,-1)}}.
    \end{equation}

    \item If $u\in H_0^1(-1,1)$, then
    \begin{equation}
    \partial_x\pi_N^{(-1,-1)}u
    =
    \pi_{N-1}^{(0,0)}(\partial_x u),
    \label{eq:commuting-property-generalized-jacobi}
    \end{equation}
    and consequently
    \begin{equation}
    \|\partial_x\pi_N^{(-1,-1)}u\|_{L^2}
    \le
    \|\partial_x u\|_{L^2}.
    \end{equation}
    Moreover,
    \begin{equation}
    (\pi_N^{(-1,-1)}u)(x)
    =
    \int_{-1}^x
    \pi_{N-1}^{(0,0)}\bigl(\partial_s u(s)\bigr)\,\mathrm{d}s.
    \end{equation}
\end{enumerate}
\end{proposition}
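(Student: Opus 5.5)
The plan is to work entirely in the basis $\{P_n^{(-1,-1)}\}_{n\ge2}$ and reduce all three items to the orthogonality relations for $P^{(1,1)}$ and $P^{(0,0)}$.

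The first step is the key identity relating derivatives of generalized Jacobi polynomials to Legendre polynomials. For $n\ge2$ we have $P_n^{(-1,-1)}=(1-x^2)P_{n-2}^{(1,1)}$, and the standard Rodrigues-type relation gives $\partial_x\bigl[(1-x^2)P_{n-2}^{(1,1)}\bigr]=c_n L_{n-1}$ with an explicit constant $c_n\neq0$, namely $c_n=-2(n-1)/n$ up to the normalization of \cite{STW11}. One can verify this directly: $(1-x^2)P_{n-2}^{(1,1)}$ is proportional to $(1-x^2)L_{n-1}'$, which equals $n(n-1)\int_{-1}^x L_{n-1}$ by the Legendre equation. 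Consequently $\{\partial_x P_n^{(-1,-1)}\}_{n\ge2}$ is $L^2$-orthogonal. Moreover, for $u\in H_0^1$, integrating by parts (boundary terms vanish) gives
\begin{equation*}
(\partial_x u,\partial_x P_n^{(-1,-1)})=-(u,\partial_x^2P_n^{(-1,-1)}).
\end{equation*}
The point is to relate this quantity to $(u,P_n^{(-1,-1)})_{\omega^{(-1,-1)}}=(u,P_{n-2}^{(1,1)})$. The cleanest route uses the Sturm--Liouville form of the Jacobi equation,
\begin{equation*}
-\partial_x\bigl((1-x^2)^{2}\partial_x P_{k}^{(1,1)}\bigr)=\lambda_k(1-x^2)P_k^{(1,1)},
\end{equation*}
to show that $-\partial_x^2P_n^{(-1,-1)}=\mu_n\,\omega^{(-1,-1)}P_n^{(-1,-1)}$ with $\mu_n=n(n-1)$. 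Equivalently, $-(1-x^2)\partial_x^2 v=\mu_n v$ holds for $v=P_n^{(-1,-1)}$, which is immediate from the Legendre equation after differentiating $(1-x^2)L_{n-1}'=-n(n-1)\int_{-1}^xL_{n-1}$.

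With this in hand, item 1 is standard Hilbert-space projection theory: $\pi_N^{(-1,-1)}$ as defined in \eqref{eq:projection-generalized-jacobi2} is the orthogonal projection onto $\mathbb{P}_N^0=\mathrm{span}\{P_n^{(-1,-1)}\}_{n=2}^N$. That these span $\mathbb{P}_N^0$ follows from a dimension count, since $\mathbb{P}_N^0$ has dimension $N-1$. Item 2 follows from Bessel's inequality for the weighted norm, together with the pointwise bound $\omega^{(-1,-1)}\ge1$ on $(-1,1)$, which gives $\|v\|_{L^2}\le\|v\|_{\omega^{(-1,-1)}}$.

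For item 3, expand $u=\sum_{n\ge2}\hat u_nP_n^{(-1,-1)}$. By the eigen-relation above,
\begin{equation*}
(\partial_xu,\partial_xP_n^{(-1,-1)})=\mu_n(u,P_n^{(-1,-1)})_{\omega^{(-1,-1)}}.
\end{equation*}
Combining this with $\|\partial_xP_n^{(-1,-1)}\|^2=\mu_n\|P_n^{(-1,-1)}\|^2_{\omega^{(-1,-1)}}$, which is the same identity with $u=P_n^{(-1,-1)}$, shows that the Legendre coefficient of $\partial_xu$ along $\partial_xP_n^{(-1,-1)}\propto L_{n-1}$ equals $\hat u_n$. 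The mode $L_0$ has zero coefficient because $\int\partial_xu=0$ for $u\in H_0^1$. Hence $\partial_x\pi_N^{(-1,-1)}u=\sum_{n=2}^N\hat u_n\partial_xP_n^{(-1,-1)}=\pi_{N-1}^{(0,0)}\partial_xu$. The $L^2$ contraction then follows from the Legendre projection being orthogonal. Finally, the integral representation follows from the fundamental theorem of calculus, since $\pi_N^{(-1,-1)}u$ vanishes at $-1$.

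The main obstacle is a technical one: justifying the expansion of $u\in H_0^1$ in $L^2_{\omega^{(-1,-1)}}$, which needs the Hardy inequality $\|u\|_{\omega^{(-1,-1)}}\lesssim\|u'\|$, and justifying termwise manipulation. I would avoid the termwise issue by computing coefficients via inner products, exactly as above, rather than by differentiating the series.
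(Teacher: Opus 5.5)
Your proposal is correct and follows essentially the paper's route: standard orthogonal-projection arguments for items 1 and 2, and for item 3 the relation $-\partial_x^2P_n^{(-1,-1)}=n(n-1)\,\omega^{(-1,-1)}P_n^{(-1,-1)}$ combined with integration by parts to identify $\hat u_n$ with the Legendre coefficients of $\partial_x u$ along $\partial_xP_n^{(-1,-1)}\propto L_{n-1}$, with the constant mode vanishing because $u(\pm1)=0$. Two cosmetic slips do not affect the argument, since you only use $c_n\neq0$ and $\mu_n=n(n-1)$, and both of those are right. First, with the paper's normalization $P_n^{(-1,-1)}=(1-x^2)P_{n-2}^{(1,1)}$, the constant is $c_n=-2(n-1)$; this follows from $L_{n-1}'=\tfrac n2P_{n-2}^{(1,1)}$ and the Legendre equation. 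Second, your first occurrence of $(1-x^2)L_{n-1}'=n(n-1)\int_{-1}^xL_{n-1}$ is missing a minus sign, which you have correctly later.
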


\begin{proof}
(1) For any $v_N\in\mathbb{P}_N^0$, by \eqref{eq:projection-generalized-jacobi1}, we have
\begin{align}
\|\pi_N^{(-1,-1)}u-u\|_{\omega^{(-1,-1)}}^2
&=
\bigl(\pi_N^{(-1,-1)}u-u,\pi_N^{(-1,-1)}u-u\bigr)_{\omega^{(-1,-1)}} \\
&=
\bigl(\pi_N^{(-1,-1)}u-u,v_N-u\bigr)_{\omega^{(-1,-1)}} \\
&\le
\|\pi_N^{(-1,-1)}u-u\|_{\omega^{(-1,-1)}}\,\|v_N-u\|_{\omega^{(-1,-1)}}.
\end{align}
This proves the first assertion.

(2) Taking $v_N=\pi_N^{(-1,-1)}u$ in \eqref{eq:projection-generalized-jacobi1}, we obtain
\begin{equation}
\|\pi_N^{(-1,-1)}u\|_{\omega^{(-1,-1)}}^2
=
\bigl(\pi_N^{(-1,-1)}u,u\bigr)_{\omega^{(-1,-1)}}
\le
\|\pi_N^{(-1,-1)}u\|_{\omega^{(-1,-1)}}\,\|u\|_{\omega^{(-1,-1)}}.
\end{equation}
Thus
\begin{equation}
\|\pi_N^{(-1,-1)}u\|_{\omega^{(-1,-1)}}\le\|u\|_{\omega^{(-1,-1)}}.
\end{equation}
Moreover, since $(1-x^2)^{-1}\ge 1$ on $(-1,1)$, we also have
\begin{equation}
\|\pi_N^{(-1,-1)}u\|_{L^2}\le\|\pi_N^{(-1,-1)}u\|_{\omega^{(-1,-1)}}.
\end{equation}
This proves the second assertion.

(3) Now assume that $u\in H_0^1(-1,1)$. According to \cite{GSW09},
\begin{equation}
\frac{d}{dx}P_n^{(-1,-1)}(x)
=
-2(n-1)P_{n-1}^{(0,0)}(x).
\label{eq:generalized-jacobi-first-derivative}
\end{equation}
On the other hand, by \eqref{eq:jacobi-derivative},
\begin{equation}
\frac{d^2}{dx^2}P_n^{(-1,-1)}(x)
=
-n(n-1)P_{n-2}^{(1,1)}(x)
=
-n(n-1)\frac{P_n^{(-1,-1)}(x)}{1-x^2}.
\label{eq:generalized-jacobi-second-derivative}
\end{equation}
Hence, since $u(\pm1)=0$, integration by parts gives
\begin{align}
&(u,P_n^{(-1,-1)})_{\omega^{(-1,-1)}}=
\int_{-1}^1 u(x)\frac{P_n^{(-1,-1)}(x)}{1-x^2}\,\mathrm{d}x \\
=&
-\frac{1}{n(n-1)}(u,\partial_x^2 P_n^{(-1,-1)})_{L^2}=
\frac{1}{n(n-1)}(\partial_x u,\partial_x P_n^{(-1,-1)})_{L^2}.
\end{align}
By taking $u=P_n^{(-1,-1)}$ in the above identity, we obtain
\begin{equation}
\|P_n^{(-1,-1)}\|_{\omega^{(-1,-1)}}^2
=
\frac{1}{n(n-1)}
\|\partial_xP_n^{(-1,-1)}\|_{L^2}^2.
\end{equation}
Therefore,
\begin{equation}
\hat{u}_n^{(-1,-1)}
=
\frac{
(\partial_xu,\partial_xP_n^{(-1,-1)})_{L^2}
}{
\|\partial_xP_n^{(-1,-1)}\|_{L^2}^2
}.
\end{equation}
It follows that
\begin{equation}
\partial_x\pi_N^{(-1,-1)}u
=
\sum_{n=2}^N \hat{u}_n^{(-1,-1)}\partial_x P_n^{(-1,-1)}
=
\sum_{n=2}^N
\frac{(\partial_x u,\partial_x P_n^{(-1,-1)})_{L^2}}
{\|\partial_x P_n^{(-1,-1)}\|_{L^2}^2}
\,\partial_x P_n^{(-1,-1)}.
\end{equation}
Using \eqref{eq:generalized-jacobi-first-derivative}, this implies
\begin{equation}
\bigl(
\partial_x\pi_N^{(-1,-1)}u-\partial_xu,
P_m^{(0,0)}
\bigr)_{L^2}
=0,
\qquad m=1,\ldots,N-1.
\end{equation}
For the remaining constant mode, we also have
\begin{equation}
\bigl(
\partial_x\pi_N^{(-1,-1)}u-\partial_xu,
1
\bigr)_{L^2}
=0,
\end{equation}
because both $u$ and $\pi_N^{(-1,-1)}u$ vanish at $x=\pm1$.
Consequently,
\begin{equation}
\bigl(
\partial_x\pi_N^{(-1,-1)}u-\partial_xu,
p
\bigr)_{L^2}
=0,
\qquad
\forall\,p\in\mathbb{P}_{N-1}.
\end{equation}
This proves the commuting property
\begin{equation}
\partial_x\pi_N^{(-1,-1)}u
=
\pi_{N-1}^{(0,0)}(\partial_xu).
\end{equation}
The stability estimate then follows immediately from the $L^2$-stability of the Legendre projection:
\begin{equation}
\|\partial_x\pi_N^{(-1,-1)}u\|_{L^2}
=
\|\pi_{N-1}^{(0,0)}(\partial_xu)\|_{L^2}
\le
\|\partial_xu\|_{L^2}.
\end{equation}
Finally, since $\pi_N^{(-1,-1)}u(-1)=0$, integrating the commuting property from $-1$ to $x$ gives
\begin{equation}
(\pi_N^{(-1,-1)}u)(x)
=
\int_{-1}^x
\partial_s(\pi_N^{(-1,-1)}u)(s)\,\mathrm{d}s
=
\int_{-1}^x
\pi_{N-1}^{(0,0)}\bigl(\partial_su(s)\bigr)\,\mathrm{d}s.
\end{equation}
This completes the proof of the third assertion.
\end{proof}

\begin{remark}
It can be shown that $u\in L^2_{\omega^{(-1,-1)}}(-1,1),\,\partial_x u\in L^2(-1,1)$ if and only if $u\in H_0^1(-1,1)$. Therefore, in item 3, we state the assumption
as $u\in H_0^1(-1,1)$.
\end{remark}

The generalized Jacobi orthogonal projection satisfies an error estimate analogous to that in \prop{jacobi-projection-estimate}.

\begin{proposition}\label{prop:generalized-jacobi-projection-estimate}
For any $u\in B_{-1,-1}^m(-1,1)$, the following hold:
\begin{enumerate}
    \item If $1\le l\le m\le N+1$, then
    \begin{equation}
    \left\|
    \partial_x^l\bigl(\pi_N^{(-1,-1)}u-u\bigr)
    \right\|_{\omega^{(l-1,l-1)}}
    \lesssim
    N^{\,l-m}
    \left\|
    \partial_x^m u
    \right\|_{\omega^{(m-1,m-1)}}.
    \label{eq:generalized-jacobi-projection-estimate-1}
    \end{equation}

    \item If $m>N+1$ and $1\le l\le N+1$, then
    \begin{equation}
    \left\|
    \partial_x^l\bigl(\pi_N^{(-1,-1)}u-u\bigr)
    \right\|_{\omega^{(l-1,l-1)}}
    \lesssim
    \left(\frac{\sqrt{e/2}}{N}\right)^{N-l+\frac54}
    \left\|
    \partial_x^{N+1}u
    \right\|_{\omega^{(N,N)}}.
    \label{eq:generalized-jacobi-projection-estimate-2}
    \end{equation}
\end{enumerate}
\end{proposition}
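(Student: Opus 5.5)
The plan is to reduce the statement to \prop{jacobi-projection-estimate} with parameters $\alpha=\beta=0$ (Legendre), applied to $\partial_x u$, by means of the commuting property \eqref{eq:commuting-property-generalized-jacobi} from \prop{generalized-jacobi-basic}.

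First, note that $u\in B^m_{-1,-1}(-1,1)$ with $m\ge1$ gives $u\in L^2_{\omega^{(-1,-1)}}$ and $\partial_x u\in L^2$. By the remark following \prop{generalized-jacobi-basic}, this means $u\in H_0^1(-1,1)$, so item 3 of that proposition applies. We obtain
\begin{equation*}
\partial_x\bigl(\pi_N^{(-1,-1)}u-u\bigr)=\pi_{N-1}^{(0,0)}(\partial_x u)-\partial_x u .
\end{equation*}
For $l\ge1$ we differentiate $l-1$ more times and get
\begin{equation*}
\partial_x^l\bigl(\pi_N^{(-1,-1)}u-u\bigr)=\partial_x^{l-1}\bigl(\pi_{N-1}^{(0,0)}w-w\bigr),\qquad w:=\partial_x u .
\end{equation*}

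Next we check membership: $w\in B^{m-1}_{0,0}(-1,1)$. This holds because $\partial_x^{k}w=\partial_x^{k+1}u\in L^2_{\omega^{(k,k)}}$ for $0\le k\le m-1$, which is exactly the defining condition of $B^m_{-1,-1}$ read with the index shifted by one. We then apply \prop{jacobi-projection-estimate} to $w$ with $N\mapsto N-1$, $l\mapsto l-1$, $m\mapsto m-1$, and $\alpha=\beta=0$.

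In case 1 the hypothesis $0\le l-1\le m-1\le (N-1)+1=N$ is equivalent to $1\le l\le m\le N+1$. The estimate gives the bound $(N-1)^{(l-1)-(m-1)}\|\partial_x^{m}u\|_{\omega^{(m-1,m-1)}}$, and since $(N-1)^{l-m}\lesssim N^{l-m}$ for $N\ge2$, this is the claim.

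In case 2, $m>N+1$ means $m-1>(N-1)+1$, and $l\le N+1$ means $l-1\le N$. The proposition then gives
\begin{equation*}
\Bigl(\tfrac{\sqrt{e/2}}{N-1}\Bigr)^{(N-1)-(l-1)+\frac54}\,\|\partial_x^{N}w\|_{\omega^{(N,N)}},
\end{equation*}
and $\|\partial_x^{N}w\|_{\omega^{(N,N)}}=\|\partial_x^{N+1}u\|_{\omega^{(N,N)}}$. The exponent is $N-l+\frac54$, as required.

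The main obstacle is replacing $N-1$ by $N$ in the base of the power. We have $(N/(N-1))^{N-l+5/4}\le (1+\frac1{N-1})^{N+1}\le e^{(N+1)/(N-1)}$, which is bounded by an absolute constant for $N\ge2$, so it can be absorbed into $\lesssim$. A second point needs care: item 2 of \prop{jacobi-projection-estimate} formally restricts the admissible index range, namely $l-1\le (N-1)+1$. This is exactly our constraint $l\le N+1$, so it is met; when $l=N+1$ the projection $\pi_{N-1}^{(0,0)}w$ has degree $N-1$ and its $N$-th derivative vanishes, which is consistent with the proposition's range.
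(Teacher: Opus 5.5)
Your proposal is correct and follows the paper's proof exactly. You use the commuting property in item 3 of \prop{generalized-jacobi-basic} to rewrite the error as $\partial_x^{l-1}\bigl(\pi_{N-1}^{(0,0)}\partial_x u-\partial_x u\bigr)$, then apply \prop{jacobi-projection-estimate} to $\partial_x u\in B^{m-1}_{0,0}(-1,1)$ with $N,l,m$ each shifted down by one. Your extra checks fill in details the paper leaves implicit: that $u\in H_0^1(-1,1)$, so item 3 applies, and that replacing $N-1$ by $N$ costs only a bounded factor, which in case 1 is uniform because $m-l\le N$.
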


\begin{proof}
By item 3 of \prop{generalized-jacobi-basic}, we have
\begin{equation}
\left\|
\partial_x^l\bigl(\pi_N^{(-1,-1)}u-u\bigr)
\right\|_{\omega^{(l-1,l-1)}}
=
\left\|
\partial_x^{l-1}\bigl(\pi_{N-1}^{(0,0)}\partial_xu-\partial_xu\bigr)
\right\|_{\omega^{(l-1,l-1)}}.
\label{eq:generalized-jacobi-reduction}
\end{equation}
Noting that $\partial_xu\in B_{0,0}^{m-1}(-1,1)$, the conclusion then follows immediately from \prop{jacobi-projection-estimate}, applied to $\partial_xu$ with $\alpha=\beta=0$, $N$ replaced by $N-1$, $l$ replaced by $l-1$, and $m$ replaced by $m-1$.
\end{proof}

\begin{remark}
According to \eqref{eq:generalized-jacobi-reduction}, and analogously to the remark after \prop{jacobi-projection-estimate}, if $u\in B_{-1,-1}^N(-1,1)$, then for $1\le l\le N$ we have
\begin{equation}
\left\|
\partial_x^l\bigl(\pi_N^{(-1,-1)}u-u\bigr)
\right\|_{\omega^{(l-1,l-1)}}
\lesssim
\left(\frac{\sqrt{e/2}}{N}\right)^{N-l+\frac34}
\left\|
\partial_x^N u
\right\|_{\omega^{(N-1,N-1)}}.
\label{eq:generalized-jacobi-projection-estimate-N}
\end{equation}
\end{remark}

Finally, we derive an improved $L^2$-error estimate for the generalized Jacobi orthogonal projection.

\begin{proposition}\label{prop:generalized-jacobi-L2-estimate}
Let $u\in H_0^1(-1,1)$. Then
\begin{equation}
\|\pi_N^{(-1,-1)}u-u\|_{L^2}
\lesssim
N^{-1}\|\partial_x(\pi_N^{(-1,-1)}u-u)\|_{L^2}.
\end{equation}
\end{proposition}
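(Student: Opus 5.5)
The plan is a one-dimensional Aubin--Nitsche duality argument. Set $e:=\pi_N^{(-1,-1)}u-u$. Since $u\in H_0^1(-1,1)$ and $\pi_N^{(-1,-1)}u\in\mathbb{P}_N^0$, we have $e\in H_0^1(-1,1)$.

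\emph{Step 1: $H^1$-seminorm orthogonality.} I will show that
\begin{equation*}
(\partial_x e,\partial_x v_N)_{L^2}=0\qquad \forall\,v_N\in\mathbb{P}_N^0 .
\end{equation*}
This follows from item 3 of \prop{generalized-jacobi-basic}: the commuting property gives $\partial_x e=\pi_{N-1}^{(0,0)}(\partial_xu)-\partial_xu$. This is $L^2$-orthogonal to $\mathbb{P}_{N-1}$, and $\partial_x v_N\in\mathbb{P}_{N-1}$.

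\emph{Step 2: dual problem.} Let $w$ solve $-w''=e$ on $(-1,1)$ with $w(\pm1)=0$. In one dimension this is explicit, and $w\in H^2(-1,1)\cap H_0^1(-1,1)$ with $\|w''\|_{L^2}=\|e\|_{L^2}$. Integrating by parts, which is legitimate because $e$ vanishes at $\pm1$, and then using Step 1 with $v_N=\pi_N^{(-1,-1)}w$, gives
\begin{equation*}
\|e\|_{L^2}^2=(e,-w'')_{L^2}=(\partial_x e,\partial_x w)_{L^2}
=\bigl(\partial_x e,\partial_x(w-\pi_N^{(-1,-1)}w)\bigr)_{L^2}
\le\|\partial_x e\|_{L^2}\,\|\partial_x(\pi_N^{(-1,-1)}w-w)\|_{L^2}.
\end{equation*}

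\emph{Step 3: approximation of the dual solution.} I apply item 1 of \prop{generalized-jacobi-projection-estimate} to $w$ with $l=1$ and $m=2$; this requires $N\ge1$. The weight $\omega^{(0,0)}$ is $1$, so
\begin{equation*}
\|\partial_x(\pi_N^{(-1,-1)}w-w)\|_{L^2}\lesssim N^{-1}\|\partial_x^2 w\|_{\omega^{(1,1)}}\le N^{-1}\|w''\|_{L^2}=N^{-1}\|e\|_{L^2},
\end{equation*}
where the middle inequality uses $1-x^2\le 1$. Substituting this into Step 2 and dividing by $\|e\|_{L^2}$ yields the claim. If $e=0$, the claim is trivial.

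The main point to check is that $w\in B^2_{-1,-1}(-1,1)$. We need $w\in L^2_{\omega^{(-1,-1)}}$, which holds because $w\in H_0^1$, by the remark following \prop{generalized-jacobi-basic}. We also need $\partial_x w\in L^2$ and $\partial_x^2w\in L^2_{\omega^{(1,1)}}$, both of which follow from $w\in H^2$.

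A secondary point to verify carefully is the integration by parts in Step 2 and the orthogonality in Step 1. Both rely on the boundary values of $u$ and $\pi_N^{(-1,-1)}u$ vanishing, which is already used in the proof of item 3 of \prop{generalized-jacobi-basic}.
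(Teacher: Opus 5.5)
Your proposal is correct and follows essentially the same Aubin--Nitsche duality argument as the paper. It uses the same dual problem $-w''=e$, $w(\pm1)=0$. It uses the same orthogonality coming from $\partial_x\pi_N^{(-1,-1)}=\pi_{N-1}^{(0,0)}\partial_x$: you phrase it as $H^1$-seminorm orthogonality against $\mathbb{P}_N^0$, while the paper subtracts $\pi_{N-1}^{(0,0)}\partial_x w=\partial_x\pi_N^{(-1,-1)}w$, which amounts to the same thing. It then applies the same $l=1$, $m=2$ projection estimate with $\|w''\|_{\omega^{(1,1)}}\le\|w''\|_{L^2}=\|e\|_{L^2}$. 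Your explicit check that $w\in B^2_{-1,-1}(-1,1)$ is a detail the paper leaves implicit.
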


\begin{proof}
Let
\begin{equation}
v_N:=\pi_N^{(-1,-1)}u
=
\int_{-1}^x \pi_{N-1}^{(0,0)}\bigl(\partial_su(s)\bigr)\,\mathrm{d}s.
\end{equation}
Consider the boundary value problem
\begin{equation}
-\partial_x^2 z=v_N-u,
\qquad
z(\pm1)=0.
\end{equation}
Then
\begin{align}
&\|v_N-u\|_{L^2}^2=(v_N-u,-\partial_x^2z)_{L^2}=(\partial_x(v_N-u),\partial_x z)_{L^2}\\
=&(\pi_{N-1}^{(0,0)}\partial_xu-\partial_xu,\partial_x z)_{L^2}=(\pi_{N-1}^{(0,0)}\partial_xu-\partial_xu,\partial_x z-\pi_{N-1}^{(0,0)}\partial_x z)_{L^2}\\
=&(\partial_x v_N-\partial_xu,\partial_x z-\partial_x\pi_N^{(-1,-1)}z)_{L^2}
\end{align}
where the penultimate equality follows from the orthogonality property of the projection $\pi_{N-1}^{(0,0)}$ in $L^2$, and in the last equality we used item 3 of \prop{generalized-jacobi-basic}, namely, $\partial_x\pi_N^{(-1,-1)}z=\pi_{N-1}^{(0,0)}\partial_x z$. Applying \prop{generalized-jacobi-projection-estimate} with $l=1,m=2$, we obtain
\begin{equation}
\|\partial_xz-\partial_x\pi_N^{(-1,-1)}z\|_{L^2} 
\lesssim
N^{-1}\|\partial_x^2 z\|_{\omega^{(1,1)}}\le
N^{-1}\|\partial_x^2 z\|_{L^2}=
N^{-1}\|v_N-u\|_{L^2}.
\end{equation}
It follows that
\begin{align}
\|v_N-u\|_{L^2}^2
&\le
\|\partial_x(v_N-u)\|_{L^2}\,
\|\partial_x z-\pi_{N-1}^{(0,0)}\partial_x z\|_{L^2} \\
&\lesssim
N^{-1}\|\partial_x(v_N-u)\|_{L^2}\,\|v_N-u\|_{L^2}.
\end{align}
Dividing both sides by $\|v_N-u\|_{L^2}$, we conclude that
\begin{equation}
\|v_N-u\|_{L^2}
\lesssim
N^{-1}\|\partial_x(v_N-u)\|_{L^2}.
\end{equation}
Since $v_N=\pi_N^{(-1,-1)}u$, the proof is complete.
\end{proof}

\begin{remark}
If one applies the Cauchy--Schwarz inequality directly, then
\begin{align}
\|\pi_N^{(-1,-1)}u-u\|_{L^2}^2
&=
\int_{-1}^1
\left(
\int_{-1}^x
\bigl(\pi_{N-1}^{(0,0)}\partial_su-\partial_su\bigr)\,\mathrm{d}s
\right)^2
\mathrm{d}x \\
&\le
\int_{-1}^1
(x+1)
\int_{-1}^x
\bigl(\pi_{N-1}^{(0,0)}\partial_su-\partial_su\bigr)^2\,\mathrm{d}s\,\mathrm{d}x \\
&\le
\int_{-1}^1
(x+1)
\int_{-1}^1
\bigl(\pi_{N-1}^{(0,0)}\partial_su-\partial_su\bigr)^2\,\mathrm{d}s\,\mathrm{d}x \\
&=
2\|\pi_{N-1}^{(0,0)}\partial_xu-\partial_xu\|_{L^2}^2.
\end{align}
By item 3 of \prop{generalized-jacobi-basic}, we have
\begin{equation}
    \|\pi_N^{(-1,-1)}u-u\|_{L^2}\lesssim\|\partial_x(\pi_N^{(-1,-1)}u-u)\|_{L^2}.
\end{equation}
This estimate is weaker than that given in \prop{generalized-jacobi-L2-estimate}.
\end{remark}

\subsection{Error estimates for the tensor-product generalized Jacobi projection}

We now extend the generalized Jacobi orthogonal projection to the $d$-dimensional tensor-product setting and derive corresponding approximation estimates, which will be used in the error analysis of the spectral-Galerkin solution.

\begin{proposition}\label{prop:tensor-generalized-jacobi-H1}
Let
\begin{equation}
\Pi_N^{(-1,-1)}:=\pi_{N,1}^{(-1,-1)}\pi_{N,2}^{(-1,-1)}\cdots \pi_{N,d}^{(-1,-1)},
\end{equation}
where $\pi_{N,j}^{(-1,-1)}$ denotes the generalized Jacobi orthogonal projection acting on the $j$th variable. For each $i,j\in\{1,\dots,d\}$, define
\begin{equation}
W_{i,j}(\boldsymbol{x})
:=
\prod_{\substack{1\le k\le j-1\\ k\neq i}}
\omega^{(-1,-1)}(x_k),
\end{equation}
and for each integer $s\ge 2$, define
\begin{equation}
W_{i,j}^{(s)}(\boldsymbol{x})
:=
\begin{cases}
W_{i,i}(\boldsymbol{x})\,\omega^{(s-1,s-1)}(x_i), & j=i,\\[4pt]
W_{i,j}(\boldsymbol{x})\,\omega^{(s-2,s-2)}(x_j), & j\neq i.
\end{cases}
\label{eq:tensor-generalized-jacobi-weight}
\end{equation}
Assume that all weighted norms appearing on the right-hand sides below are finite. Then the following hold:
\begin{enumerate}
    \item If $2\le m\le N+1$, then
    \begin{equation}
    \|\nabla(\Pi_N^{(-1,-1)}u-u)\|_{L^2}
    \lesssim
    N^{1-m}
    \left(
    \sum_{i=1}^d
    \left(
    \sum_{j=1}^d
    \|\partial_{x_j}^{\,m-1}\partial_{x_i}u\|_{W_{i,j}^{(m)}}
    \right)^2
    \right)^{\frac12}.
    \label{eq:tensor-generalized-jacobi-H1}
    \end{equation}

    \item If $m>N+1$, then
    \begin{equation}
    \|\nabla(\Pi_N^{(-1,-1)}u-u)\|_{L^2}
    \lesssim
    \left(\frac{\sqrt{e/2}}{N}\right)^{N+\frac14}
    \left(
    \sum_{i=1}^d
    \left(
    \sum_{j=1}^d
    \|\partial_{x_j}^{\,N}\partial_{x_i}u\|_{W_{i,j}^{(N+1)}}
    \right)^2
    \right)^{\frac12}.
    \label{eq:tensor-generalized-jacobi-H1-analytic}
    \end{equation}
\end{enumerate}
\end{proposition}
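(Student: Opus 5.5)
We estimate each partial derivative $\partial_{x_i}(\Pi_N^{(-1,-1)}u-u)$ separately and then sum the squares. Since $\|\nabla v\|_{L^2}^2=\sum_i\|\partial_{x_i}v\|_{L^2}^2$, it suffices to show, for each fixed $i$,
\begin{equation}
\|\partial_{x_i}(\Pi_N^{(-1,-1)}u-u)\|_{L^2}\lesssim N^{1-m}\sum_{j=1}^d\|\partial_{x_j}^{\,m-1}\partial_{x_i}u\|_{W_{i,j}^{(m)}} .
\end{equation}
The analytic case (item 2) then follows verbatim by replacing the one-dimensional rate with \eqref{eq:generalized-jacobi-projection-estimate-2} at $m=N+1$. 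For fixed $i$ we telescope. Put $P_0:=I$ and $P_j:=\pi_{N,1}^{(-1,-1)}\cdots\pi_{N,j}^{(-1,-1)}$, so that
\begin{equation}
\Pi_N^{(-1,-1)}u-u=\sum_{j=1}^d P_{j-1}\bigl(\pi_{N,j}^{(-1,-1)}u-u\bigr),
\end{equation}
using that projections in different variables commute. It then remains to bound each term $\|\partial_{x_i}P_{j-1}(\pi_{N,j}^{(-1,-1)}-I)u\|_{L^2}$ by the triangle inequality.

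For a single term, we separate the factors of $P_{j-1}$ acting in variables $k\neq i$ from the one (if any) acting in $x_i$. Each factor $\pi_{N,k}^{(-1,-1)}$ with $k\neq i$ commutes with $\partial_{x_i}$. By item 2 of \prop{generalized-jacobi-basic}, its $L^2$ norm in $x_k$ is bounded by the $\omega^{(-1,-1)}(x_k)$-weighted norm. Removing these factors one at a time, applied fiberwise with the other variables frozen, therefore produces the weight $\prod_{k<j,\,k\neq i}\omega^{(-1,-1)}(x_k)=W_{i,j}$. If $i<j$, the factor $\pi_{N,i}^{(-1,-1)}$ also appears in $P_{j-1}$; we remove it using the $H^1$-stability $\|\partial_{x_i}\pi_{N,i}^{(-1,-1)}v\|_{L^2}\le\|\partial_{x_i}v\|_{L^2}$ from item 3 of \prop{generalized-jacobi-basic}. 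This leaves no weight in $x_i$, which matches the exclusion $k\neq i$ in $W_{i,j}$. After these reductions we must bound, in the $W_{i,j}$-weighted norm, either $\partial_{x_i}(\pi_{N,j}^{(-1,-1)}-I)u$ with $j\neq i$, or $\partial_{x_i}(\pi_{N,i}^{(-1,-1)}-I)u$ with $j=i$.

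For $j=i$ we apply \prop{generalized-jacobi-projection-estimate} in $x_i$ with $l=1$. This gives $N^{1-m}\|\partial_{x_i}^m u\|$ with weight $\omega^{(m-1,m-1)}(x_i)$, exactly as in $W_{i,i}^{(m)}$. For $j\neq i$, $\partial_{x_i}$ commutes with $\pi_{N,j}^{(-1,-1)}$, so the term is $(\pi_{N,j}^{(-1,-1)}-I)(\partial_{x_i}u)$ measured in $L^2$ in $x_j$. Here we need an $l=0$ estimate. Proposition~\ref{prop:generalized-jacobi-L2-estimate} gives $N^{-1}$ times the $l=1$ error, and \prop{generalized-jacobi-projection-estimate} with $l=1$ and $m-1$ derivatives gives the further factor $N^{2-m}$. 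The total rate is $N^{1-m}$ and the norm is $\|\partial_{x_j}^{m-1}\partial_{x_i}u\|$ with weight $\omega^{(m-2,m-2)}(x_j)$, again matching $W_{i,j}^{(m)}$. Applying Proposition~\ref{prop:generalized-jacobi-L2-estimate} requires $\partial_{x_i}u(\cdot)$ to vanish at $x_j=\pm1$. This holds because $u$ vanishes on the faces $x_j=\pm1$ and $\partial_{x_i}$ is tangential there. In the analytic regime the exponents become $N+\tfrac14$ in both cases.

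The main obstacle is this $j\neq i$ step. It needs the $l=0$ estimate obtained by chaining the duality bound of Proposition~\ref{prop:generalized-jacobi-L2-estimate} with the $l=1$ projection bound, and the weights $\omega^{(m-2,m-2)}$ and the final exponent must come out exactly as claimed. A second point to check is that integrating fiberwise estimates over the remaining variables is legitimate under the finiteness assumptions. Finally, squaring and summing the per-$i$ bounds gives \eqref{eq:tensor-generalized-jacobi-H1} and \eqref{eq:tensor-generalized-jacobi-H1-analytic}.
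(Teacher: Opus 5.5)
For item 1, your argument is the paper's proof: the same telescoping over $j$, item 2 of \prop{generalized-jacobi-basic} to strip the factors $\pi_{N,k}^{(-1,-1)}$ with $k\neq i$ (producing $W_{i,j}$), the $H^1$-stability of item 3 to strip $\pi_{N,i}^{(-1,-1)}$, and, for $j\neq i$, \prop{generalized-jacobi-L2-estimate} followed by \prop{generalized-jacobi-projection-estimate} with $l=1$ and $m-1$ derivatives. Your observation that $\partial_{x_i}u$ vanishes at $x_j=\pm1$ is a hypothesis the paper uses without stating it.

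There is a gap in item 2, in the off-diagonal terms $j\neq i$. You claim it follows verbatim by using \eqref{eq:generalized-jacobi-projection-estimate-2}. That is true for the diagonal term $j=i$ only. After the duality step you must bound $\|\partial_{x_j}(\pi_{N,j}^{(-1,-1)}-I)\partial_{x_i}u\|_{W_{i,j}}$ by the norm of $\partial_{x_j}^{N}\partial_{x_i}u$ with weight $\omega^{(N-1,N-1)}(x_j)$, because that is what $W_{i,j}^{(N+1)}$ in \eqref{eq:tensor-generalized-jacobi-H1-analytic} prescribes. The estimate \eqref{eq:generalized-jacobi-projection-estimate-2} always consumes $N+1$ derivatives of its argument. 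Applied to $\partial_{x_i}u$ it gives $\|\partial_{x_j}^{N+1}\partial_{x_i}u\|_{\omega^{(N,N)}(x_j)}$, which is a different quantity and is not controlled by the one in the statement.

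Nor can you rerun item 1 of \prop{generalized-jacobi-projection-estimate} with $m-1=N$. Its implied constant depends on $m$, and at this endpoint the sharp one-dimensional factor (remark after \prop{jacobi-projection-estimate}) exceeds $N^{1-N}$ by roughly $(e/2)^{N/2}$ up to powers of $N$. So the rate $N^{1-m}$ is not available there.

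The missing ingredient is the endpoint bound \eqref{eq:generalized-jacobi-projection-estimate-N} from the remark after \prop{generalized-jacobi-projection-estimate}. With $l=1$ it gives $(\sqrt{e/2}/N)^{N-\frac14}\|\partial_{x_j}^{N}\partial_{x_i}u\|_{\omega^{(N-1,N-1)}(x_j)}$. Combined with the factor $N^{-1}$ from the duality step, this yields $N^{-1}(\sqrt{e/2}/N)^{N-\frac14}\lesssim(\sqrt{e/2}/N)^{N+\frac14}$. This is how the paper closes item 2, while the diagonal term is handled by \eqref{eq:generalized-jacobi-projection-estimate-2} as you describe.
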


\begin{proof}
For simplicity, write $I_j:=\pi_{N,j}^{(-1,-1)}$, $j=1,\dots,d$. Fix $i\in\{1,\dots,d\}$. Since $\partial_{x_i}$ commutes with $I_j$ for $j< i$, we have
\begin{align}
\partial_{x_i}(\Pi_N^{(-1,-1)}u-u)
&=
\sum_{j=1}^{i-1}
I_1\cdots I_{j-1}(I_j-I)\partial_{x_i}u
+
I_1\cdots I_{i-1}\partial_{x_i}(I_i-I)u \nonumber\\
&\quad+
\sum_{j=i+1}^{d}
I_1\cdots I_{i-1}\partial_{x_i}I_i\,I_{i+1}\cdots I_{j-1}(I_j-I)u.
\label{eq:tensor-generalized-jacobi-decomposition}
\end{align}
Using item 2 and item 3 of \prop{generalized-jacobi-basic}, together with Fubini's theorem, we obtain
\begin{equation}
\|\partial_{x_i}(\Pi_N^{(-1,-1)}u-u)\|_{L^2}
\le
\|\partial_{x_i}(I_i-I)u\|_{W_{i,i}}
+
\sum_{j\neq i}\|(I_j-I)\partial_{x_i}u\|_{W_{i,j}}.
\label{eq:tensor-generalized-jacobi-reduced}
\end{equation}
For $j\neq i$, \prop{generalized-jacobi-L2-estimate} gives
\begin{equation}
\|(I_j-I)\partial_{x_i}u\|_{W_{i,j}}
\lesssim
N^{-1}
\|\partial_{x_j}(I_j-I)\partial_{x_i}u\|_{W_{i,j}}.
\label{eq:tensor-generalized-jacobi-L2-j}
\end{equation}
Hence
\begin{equation}
\|\partial_{x_i}(\Pi_N^{(-1,-1)}u-u)\|_{L^2}
\lesssim
\|\partial_{x_i}(I_i-I)u\|_{W_{i,i}}
+
N^{-1}\sum_{j\neq i}
\|\partial_{x_j}(I_j-I)\partial_{x_i}u\|_{W_{i,j}}.
\label{eq:tensor-generalized-jacobi-before-final}
\end{equation}

If $2\le m\le N+1$, then, by \prop{generalized-jacobi-projection-estimate}, we obtain
\begin{align}
\|\partial_{x_i}(\Pi_N^{(-1,-1)}u-u)\|_{L^2}
&\lesssim
N^{1-m}\|\partial_{x_i}^{m}u\|_{W_{i,i}^{(m)}}
+
N^{-1}\sum_{j\neq i}
N^{1-(m-1)}
\|\partial_{x_j}^{\,m-1}\partial_{x_i}u\|_{W_{i,j}^{(m)}} \nonumber\\
&=
N^{1-m}
\sum_{j=1}^d
\|\partial_{x_j}^{\,m-1}\partial_{x_i}u\|_{W_{i,j}^{(m)}}.
\end{align}
Therefore,
\begin{align}
\|\nabla(\Pi_N^{(-1,-1)}u-u)\|_{L^2}
=
\left(
\sum_{i=1}^d
\|\partial_{x_i}(\Pi_N^{(-1,-1)}u-u)\|_{L^2}^2
\right)^{\frac12}\\
\lesssim N^{1-m}\left(\sum_{i=1}^d\left(\sum_{j=1}^d\|\partial_{x_j}^{\,m-1}\partial_{x_i}u\|_{W_{i,j}^{(m)}}\right)^2\right)^{\frac12}.
\end{align}

If $m>N+1$, then one proceeds similarly. More precisely, the first term is estimated by the second estimate in \prop{generalized-jacobi-projection-estimate}, whereas the remaining terms are estimated by the remark after \prop{generalized-jacobi-projection-estimate}. Thus
\begin{align}
\|\partial_{x_i}(\Pi_N^{(-1,-1)}u-u)\|_{L^2}
&\lesssim
\left(\frac{\sqrt{e/2}}{N}\right)^{N+\frac14}
\|\partial_{x_i}^{N+1}u\|_{W_{i,i}^{(N+1)}}
+
N^{-1}\sum_{j\neq i}
\left(\frac{\sqrt{e/2}}{N}\right)^{N-\frac14}
\|\partial_{x_j}^{N}\partial_{x_i}u\|_{W_{i,j}^{(N+1)}} \nonumber\\
&\lesssim
\left(\frac{\sqrt{e/2}}{N}\right)^{N+\frac14}
\sum_{j=1}^d
\|\partial_{x_j}^{N}\partial_{x_i}u\|_{W_{i,j}^{(N+1)}}.
\end{align}
Therefore, 
\begin{equation}
    \|\nabla(\Pi_N^{(-1,-1)}u-u)\|_{L^2}
    \lesssim
    \left(\frac{\sqrt{e/2}}{N}\right)^{N+\frac14}
    \left(
    \sum_{i=1}^d
    \left(
    \sum_{j=1}^d
    \|\partial_{x_j}^{\,N}\partial_{x_i}u\|_{W_{i,j}^{(N+1)}}
    \right)^2
    \right)^{\frac12}.
\end{equation}
This completes the proof.
\end{proof}

As a consequence of \eqref{eq:H1-error-reduction-recall} and \prop{tensor-generalized-jacobi-H1}, we obtain the following error estimate for the spectral-Galerkin solution.

\begin{corollary}\label{cor:galerkin-error-analytic}
Consider an instance of \prb{pde}, namely, the $d$-dimensional second-order linear elliptic boundary value problem with homogeneous Dirichlet boundary conditions \eqref{eq:model-problem}. Let the approximation space be generated by the tensor-product basis associated with the one-dimensional basis \eqref{eq:new-basis-1d}. Assume that the exact solution $u$ is analytic, and define
\begin{equation}
C_N
:=
\left(
\sum_{i=1}^d
\left(
\sum_{j=1}^d
\|\partial_{x_j}^{\,N+1}\partial_{x_i}u\|_{W_{i,j}^{(N+2)}}
\right)^2
\right)^{\frac12}
\end{equation}
where $W_{i,j}^{(s)}$ is defined in \eqref{eq:tensor-generalized-jacobi-weight}. Then the spectral-Galerkin solution $u_N$ satisfies
\begin{equation}
\|u_N-u\|_{H^1}
\lesssim
C_N\kappa(A)^{\frac12}
\left(\frac{\sqrt{e/2}}{N+1}\right)^{N+\frac54}.
\end{equation}
Moreover,
\begin{equation}
\|u_N-u\|_{L^2}
\lesssim
C_N\kappa(A)^{\frac12}d^{-\frac12}
\left(\frac{\sqrt{e/2}}{N+1}\right)^{N+\frac54}.
\end{equation}
\end{corollary}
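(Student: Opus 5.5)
The plan is to combine the energy-norm reduction \eqref{eq:H1-error-reduction-recall} with the analytic-case bound of \prop{tensor-generalized-jacobi-H1}, applied with a shifted polynomial degree. This gives the $H^1$ estimate. The $L^2$ estimate then comes from a separate bound that tracks the dimension dependence.

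\textbf{Step 1: identify the approximation space.} The space $V_N=(\mathbb{P}^0_{N+1})^{\otimes d}$ contains polynomials of degree $N+1$ in each variable. The natural candidate in the infimum of \eqref{eq:H1-error-reduction-recall} is therefore $v_N=\Pi_{N+1}^{(-1,-1)}u$. This lies in $V_N$ because $\pi_{N+1}^{(-1,-1)}$ maps into $\mathbb{P}^0_{N+1}$, and a tensor product of such projections maps into the tensor-product space.

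\textbf{Step 2: apply the projection estimate.} Since $u$ is analytic, every $m$ is admissible, so the case $m>N+2$ of \prop{tensor-generalized-jacobi-H1} with $N$ replaced by $N+1$ yields
\begin{equation*}
\|\nabla(\Pi_{N+1}^{(-1,-1)}u-u)\|_{L^2}\lesssim\Bigl(\frac{\sqrt{e/2}}{N+1}\Bigr)^{N+\frac54}C_N .
\end{equation*}
The weights $W^{(N+2)}_{i,j}$ and the derivative orders $\partial_{x_j}^{N+1}\partial_{x_i}$ match the definition of $C_N$ exactly after this shift. Substituting into \eqref{eq:H1-error-reduction-recall} produces the factor $\kappa(A)^{1/2}$, and the Poincar\'e inequality upgrades the gradient bound to the full $H^1$ norm. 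This proves the first claim.

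\textbf{Step 3: the $L^2$ bound with the factor $d^{-1/2}$.} This is the main obstacle, because the factor cannot come from a dimension-free Poincar\'e inequality. I would use the sharp Poincar\'e constant on the cube $(-1,1)^d$ for $H_0^1$ functions. The smallest Dirichlet eigenvalue of $-\Delta$ there is $d\pi^2/4$, which is the sum of the one-dimensional eigenvalues as in \lem{S}. Hence
\begin{equation*}
\|w\|_{L^2}\le \frac{2}{\pi\sqrt d}\,\|\nabla w\|_{L^2}\qquad\text{for all } w\in H_0^1(\Omega).
\end{equation*}
Applying this to $w=u_N-u\in H_0^1(\Omega)$ and combining with Step 2 gives the stated $L^2$ bound.

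A sharper alternative would be an Aubin--Nitsche duality argument to gain an extra power of $N^{-1}$. The stated bound carries no extra power, however, so the Poincar\'e route should suffice. The key point to verify is that the constant in the Poincar\'e inequality is exactly $\lambda_{\min}(-\Delta)^{-1/2}$ with $\lambda_{\min}=d\pi^2/4$, obtained by separation of variables.
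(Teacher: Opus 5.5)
Your proposal is correct and follows the paper's proof: take $v_N=\Pi_{N+1}^{(-1,-1)}u\in V_N$ in the energy best-approximation bound \eqref{eq:H1-error-reduction-recall}, apply the analytic case of \prop{tensor-generalized-jacobi-H1} with $N$ replaced by $N+1$, and finish with the Poincar\'e inequality. Your explicit use of the sharp cube constant $2/(\pi\sqrt{d})$ (first Dirichlet eigenvalue $d\pi^2/4$) spells out what the paper's brief appeal to the Poincar\'e inequality needs in order to produce the factor $d^{-1/2}$.
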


\begin{proof}
Since the approximation space satisfies $V_N=(\mathbb{P}_{N+1}^0)^{\otimes d}$, we may choose
\begin{equation}
v_N:=\Pi_{N+1}^{(-1,-1)}u\in V_N
\end{equation}
in \eqref{eq:H1-error-reduction-recall}. \prop{tensor-generalized-jacobi-H1} then yields
\begin{equation}
\|\nabla(v_N-u)\|_{L^2}
\lesssim
C_N\left(\frac{\sqrt{e/2}}{N+1}\right)^{N+\frac54},
\end{equation}
and hence
\begin{equation}
\|u_N-u\|_{H^1}
\lesssim
C_N\kappa(A)^{\frac12}
\left(\frac{\sqrt{e/2}}{N+1}\right)^{N+\frac54}.
\end{equation}
The $L^2$-estimate follows from the Poincar\'e inequality applied to $u_N-u\in H_0^1(\Omega)$.
\end{proof}
\section{An illustrative example of the quantum spectral-Galerkin method}
\label{app:example}

In \sec{jacobi-spectral-Galerkin method}, we introduced the spectral-Galerkin discretization for the model problem in \sec{model problem and algorithmic objectives}, and in \sec{quantum spectral-Galerkin method} we developed the corresponding quantum algorithm. In this appendix, we present a concrete instance of this construction and report numerical results that illustrate the behavior of the method and corroborate our theoretical analysis.

We consider the boundary value problem
\begin{equation}
\L u
=
-\sum_{p,q=1}^{d} A_{pq}\,\partial_{x_p}\partial_{x_q}u
=
f
\quad \text{in } \Omega:=(-1,1)^d,
\quad
u|_{\partial\Omega}=0,
\label{eq:example-pde}
\end{equation}
where $A=(A_{pq})_{p,q=1}^d\in\mathbb{R}^{d\times d}$ is a constant symmetric positive-definite matrix.

The Galerkin approximation is written as
\begin{equation}
u_N(\boldsymbol{x})
=
\sum_{\boldsymbol{k}\in [N]_0^d}
c_{\boldsymbol{k}}\,\Phi_{\boldsymbol{k}}(\boldsymbol{x}),
\label{eq:example-uN}
\end{equation}
where the basis functions are
\begin{equation}
\Phi_{\boldsymbol{k}}(\boldsymbol{x})
=
\prod_{j=1}^d \phi_{k_j}(x_j),
\qquad
\boldsymbol{k}=(k_1,\dots,k_d)\in [N]_0^d,
\label{eq:example-tensor-basis}
\end{equation}
and
\begin{equation}
\phi_k(x)
=
\frac{1}{\sqrt{\gamma_k^{(2,2)}}}(1-x^2)P_k^{(2,2)}(x),
\qquad
k\in [N]_0,
\label{eq:example-phi}
\end{equation}
where $\gamma_k^{(2,2)}$ is the normalization constant in \eqref{eq:jacobi-norm}. The corresponding load vector is
\begin{equation}
\boldsymbol{f}
=
\sum_{\boldsymbol{l}\in [N]_0^d}
\langle f,\Phi_{\boldsymbol{l}}\rangle\ket{\boldsymbol{l}}.
\end{equation}
We assume access to an oracle that prepares its normalized quantum state $\ket{f}=\boldsymbol{f}/\|\boldsymbol{f}\|$. We then apply the quantum linear systems algorithm to solve the linear system
\begin{equation}
L\boldsymbol{c}=\boldsymbol{f},
\label{eq:example-linear-system}
\end{equation}
where the solution vector is
\begin{equation}
\boldsymbol{c}
=
\sum_{\boldsymbol{k}\in [N]_0^d}
c_{\boldsymbol{k}}\ket{\boldsymbol{k}}.
\label{eq:example-c}
\end{equation}
Moreover, $L=L_1+L_2$, with
\begin{align}
L_1
&=
\sum_{p=1}^d
A_{pp}
\bigl(
I^{\otimes(p-1)}
\otimes
S
\otimes
I^{\otimes(d-p)}
\bigr),
\label{eq:example-L1}
\\
L_2
&=
-2\sum_{1\leq p<q\leq d}
A_{pq}
\bigl(
I^{\otimes(p-1)}
\otimes
T
\otimes
I^{\otimes(q-p-1)}
\otimes
T
\otimes
I^{\otimes(d-q)}
\bigr),
\label{eq:example-L2}
\end{align}
where
\begin{equation}
S=\sum_{j,k\in[N]_0}\langle\phi_k',\phi_j'\rangle\ket{j}\bra{k},
\qquad
T=\sum_{j,k\in[N]_0}\langle\phi_k',\phi_j\rangle\ket{j}\bra{k}.
\label{eq:example-ST}
\end{equation}
Solving \eqref{eq:example-linear-system} yields the coefficient vector $\boldsymbol{c}$, and hence the Galerkin approximation $u_N$.

We next illustrate the condition number bound in \lem{S-cond-N4} for the one-dimensional stiffness matrix $S$. Recall that \lem{S-cond-N4} implies
\begin{equation}
\kappa(S)\le \frac{1}{\pi^2}(N+2)^2(N+3)^2.
\label{eq:appendix-cond-S-bound}
\end{equation}
\fig{cond-S} compares the computed condition number $\kappa(S)$ for $50 \leq N \leq 300$ with the upper bound. In all tested cases, the computed values remain below the theoretical bound. Moreover, a least-squares fit of the form $CN^p$ yields an exponent $p$ close to $4$, indicating that the growth of $\kappa(S)$ is consistent with the estimate $\kappa(S)=O(N^4)$.

\begin{figure}[t]
    \centering
    \includegraphics[width=0.8\linewidth]{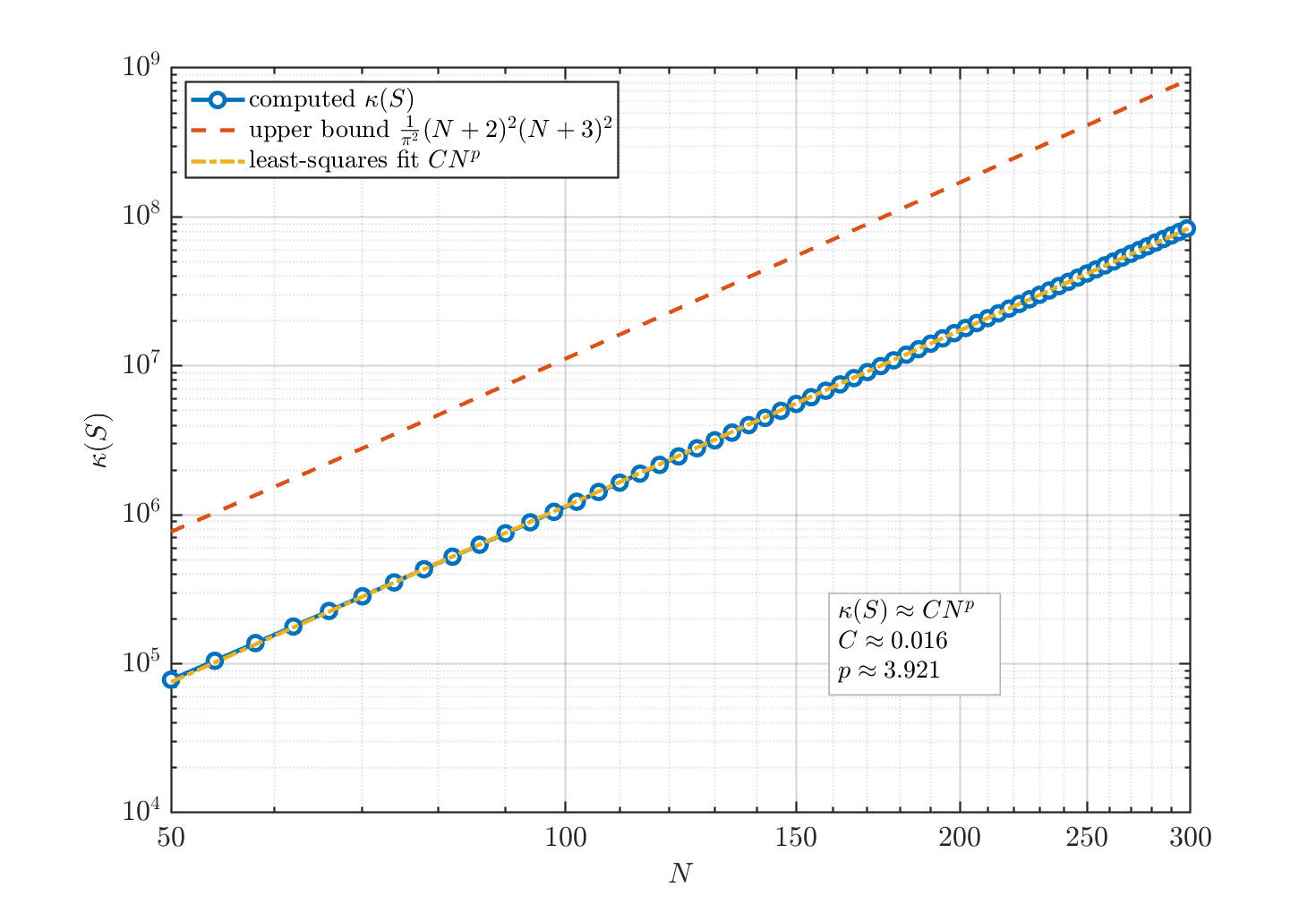}
    \caption{Condition number of the one-dimensional stiffness matrix $S$ as a function of $N$ for $50\leq N\leq 300$. The dashed curve is the theoretical upper bound $\frac{1}{\pi^2}(N+2)^2(N+3)^2$, and the dash-dotted curve is a least-squares fit of the form $CN^p$.}
    \label{fig:cond-S}
\end{figure}

Finally, we present numerical experiments for the Galerkin system associated with the basis \eqref{eq:example-phi}. We consider the model problem \eqref{eq:example-pde} in dimensions $d=2,3$, and test several symmetric positive-definite matrices $A$ with different condition numbers. As the exact solution, we take
\begin{equation}
u(\boldsymbol{x})=
\exp\Bigl(\sum_{j=1}^d x_j\Bigr)
\Bigl(1+\sum_{1\le p<q\le d}x_p x_q\Bigr)
\prod_{j=1}^d (1-x_j^2).
\label{eq:exact-solution-numerics}
\end{equation}
This function satisfies the homogeneous Dirichlet boundary condition. Moreover, for $d\ge 2$, it is not separable in the spatial variables. The right-hand side $f$ is then determined from \eqref{eq:example-pde} by direct substitution.

The left panels of Figures~\ref{fig:numerics-d2}--\ref{fig:numerics-d3} show the relative $L^2$ error of the numerical solution as a function of $N$. The right panels show the ratio $\kappa(L)/(\kappa(A)\kappa(S))$, which is used to examine the bound
\begin{equation}
\kappa(L)\le \kappa(A)\kappa(S).
\end{equation}
proved in \lem{jacobi-basis-properties}. The results show rapid error decay as $N$ increases, and the ratio $\kappa(L)/(\kappa(A)\kappa(S))$ stays below $1$ in all tested cases, in agreement with the theoretical bound.


\begin{figure}[t]
    \centering
    \includegraphics[width=0.9\linewidth]{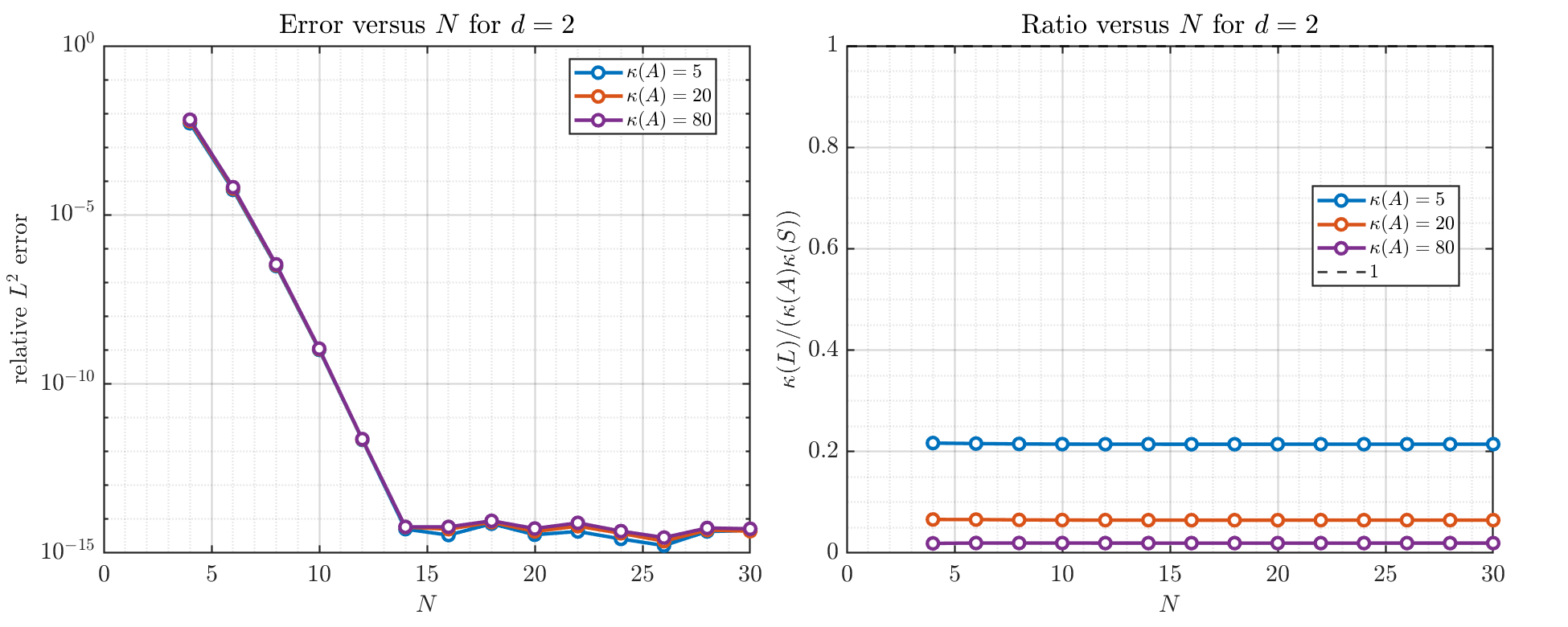}
    \caption{Numerical results for the case $d=2$. Left: relative $L^2$ error versus $N$. Right: the ratio $\kappa(L)/(\kappa(A)\kappa(S))$ versus $N$.}
    \label{fig:numerics-d2}
\end{figure}

\begin{figure}[t]
    \centering
    \includegraphics[width=0.9\linewidth]{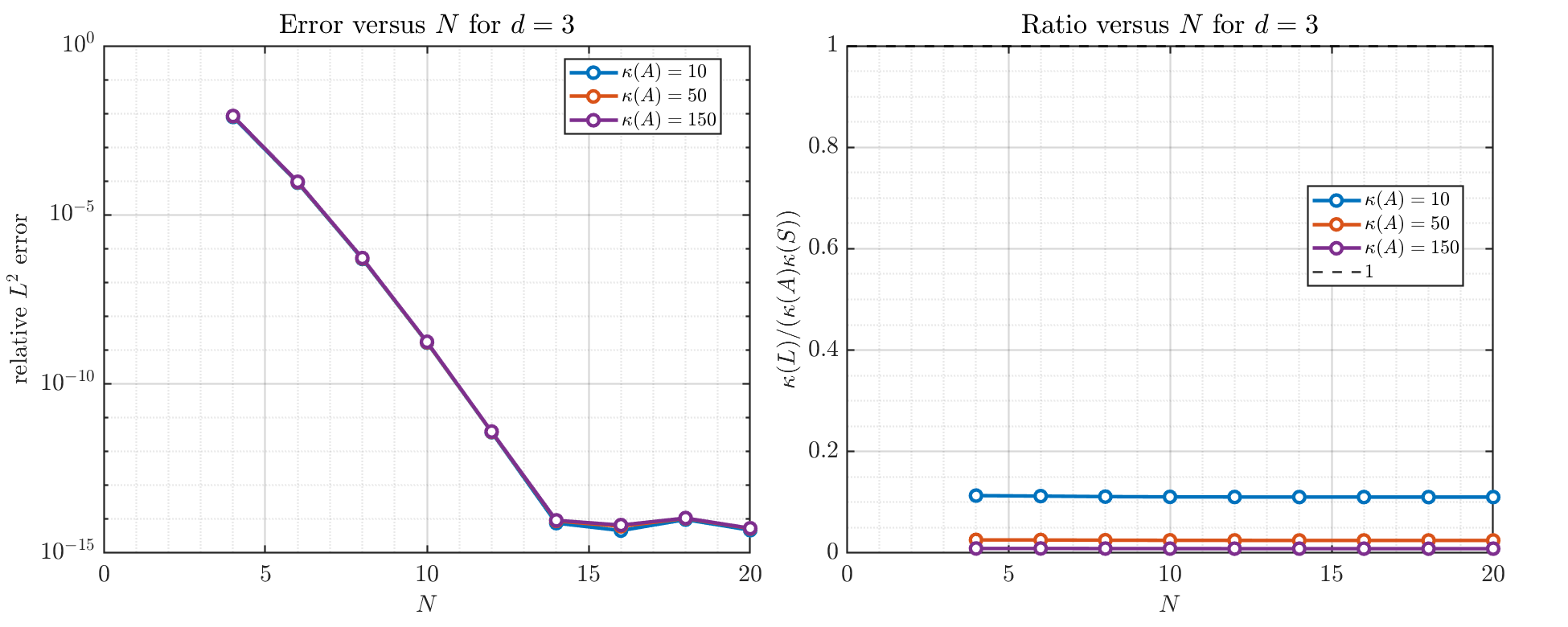}
    \caption{Numerical results for the case $d=3$. Left: relative $L^2$ error versus $N$. Right: the ratio $\kappa(L)/(\kappa(A)\kappa(S))$ versus $N$.}
    \label{fig:numerics-d3}
\end{figure}

\end{document}